\newcommand{\linethrough}{\mathpalette\@thickbar}
\newcommand{\@thickbar}[2]{{#1\mkern0mu\vbox{
    \sbox\z@{$#1#2\mkern-1.5mu$}%
    \dimen@=\dimexpr\ht\tw@-\ht\z@+2\p@\relax 
    \hrule\@height0.5\p@ 
    \vskip\dimen@
    \box\z@}}
}
\newcommand{\red}[1]{{\color{red}#1}}
\newcommand{\blue}[1]{{\color{blue}#1}}
\newcommand{\ket}[1]{\left\vert #1 \right\rangle}
\newcommand{\bra}[1]{\left\langle #1 \right\vert}
\newcommand{\ketbra}[2]{\vert#1\rangle\!\langle#2\vert}
\newcommand{\braket}[2]{\langle #1 
\vert#2\rangle}
\newcommand{\new}[1]{\blue{#1}}
\newcommand{\old}[1]{\red{\sout{#1}}}
\newcommand{\cR}[0]{\mathcal{R}}
\newcommand{\cB}[0]{\mathcal{B}}
\newcommand{\cD}[0]{\mathcal{D}}
\newcommand{\cW}[0]{\mathcal{W}}
\newcommand{\cO}[0]{\mathcal{O}}
\newcommand{\cP}[0]{\mathcal{P}}
\newcommand{\cH}[0]{\mathcal{H}}
\newcommand{\cA}[0]{\mathcal{A}}
\newcommand{\cL}[0]{\mathcal{L}}
\newcommand{\cI}[0]{\mathcal{I}}
\newcommand{\cN}[0]{\mathcal{N}}
\newcommand{\bR}[0]{\mathbb{R}}
\newcommand{\bN}[0]{\mathbb{N}}
\newcommand{\bj}[0]{{\bm{j}}}
\newcommand{\bb}[0]{{\bm{b}}}
\newcommand{\bl}[0]{{\bm{\lambda}}}
\newcommand{\BQP}{\textsf{BQP}}
\newcommand{\NP}{\textsf{NP}}
\newcommand{\BPP}{\textsf{BPP}}
\newcommand{\QMA}{\textsf{QMA}}
\newtheorem{definition}{Definition}
\newtheorem{lemma}[definition]{Lemma}
\newtheorem{algorithm}[definition]{Algorithm}
\newtheorem{remark}[definition]{Remark}
\newtheorem{theorem}[definition]{Theorem}
\newtheorem{assumption}[definition]{Assumption}
\newtheorem{corollary}[definition]{Corollary}
\newtheorem{claim}[definition]{Claim}
\newcommand{\TCtime}[0]{\mc O(n^ 2poly(\log n))}
\newcommand{\mc}{\mathcal}
\newcommand{\alg}[1]{\langle #1 \rangle}
\newcommand{\Id}{\mathbb I}
\newcommand{\bs}[1]{\boldsymbol{#1}}
\renewcommand{\tilde}{\widetilde}
\tikzset{
  fitting node/.style={
    inner sep=0pt,
    fill=none,
    draw=none,
    reset transform,
    fit={(\pgf@pathminx,\pgf@pathminy) (\pgf@pathmaxx,\pgf@pathmaxy)}
  },
  reset transform/.code={\pgftransformreset}
}
\begin{document}
\title{Gibbs state  preparation for commuting Hamiltonian: Mapping to classical Gibbs sampling}
\date{}

\author[1]{Yeongwoo Hwang$^\ddagger$~\thanks{yeongwoohwang@g.harvard.edu}}
\affil[1]{School of Engineering and Applied Sciences, Harvard University}

\author[2]{Jiaqing Jiang$^\ddagger$~\thanks{jiaqingjiang95@gmail.com}}
\affil[2]{Department of Computing and Mathematical Sciences, California Institute of Technology}
\affil[2]{Simons Institute for the Theory of Computing, University of California, Berkeley
}

\def\thefootnote{$\ddagger$}\footnotetext{Both authors contributed equally (listed in alphabetical order).}\def\thefootnote{\arabic{footnote}}

\maketitle

\begin{abstract}

Gibbs state preparation, also known as Gibbs sampling, is a key computational technique extensively used in physics, statistics, and  other scientific fields. 
 Recent efforts~\cite{capel2020modified,bardet2023rapid,kastoryano2016quantum} for designing fast mixing Gibbs samplers for quantum Hamiltonians have  largely focused on  commuting local Hamiltonians (CLHs),  a non-trivial subclass of Hamiltonians which include highly entangled systems such as the Toric code and quantum double model. Most previous Gibbs samplers relied on simulating the Davies generator, which is a Lindbladian associated with the thermalization process in nature. 

Instead of using the Davies generator, we design a different Gibbs sampler for various CLHs by giving a reduction to
classical Hamiltonians. More precisely, we show that one can efficiently prepare the Gibbs state for a CLH $H$ on a
quantum computer as long as one can efficiently perform classical Gibbs sampling for a corresponding classical
Hamiltonian $H^{(c)}$. Combining our results with existing fast mixing classical Gibbs samplers, we are able to replicate state-of-the-art results \cite{capel2020modified,bardet2023rapid,kastoryano2016quantum} as well as prepare the Gibbs state in regimes which were previously unknown, such as the low temperature region (as long as there exists fast mixing Gibbs samplers for the corresponding classical Hamiltonians).

Our reductions are as follows. 
\begin{itemize}
    \item If $H$ is a 2-local qudit CLH, then $H^{(c)}$ is a 2-local qudit classical Hamiltonian.
    \item 
   If $H$ is a 4-local qubit CLH on 2D lattice and there are no classical qubits, then  $H^{(c)}$ is a 2-local qudit classical Hamiltonian on a planar graph. As an example, our algorithm can prepare the Gibbs state  for the (defected) Toric code at any non-zero temperature in $\TCtime$ time.
    \item If $H$ is a  4-local qubit CLH on 2D lattice and there are  classical qubits, assuming that quantum terms are uniformly correctable, then  $H^{(c)}$ is a constant-local classical Hamiltonian.
\end{itemize}

A further consequence of our work is on the complexity of quantum approximate counting (\textsf{QAC}). Classically, Stockmeyer \cite{stockmeyer1983complexity} showed that classical approximate counting is in $\BPP^{\NP}$, but no analogous quantum result is known. We show that for various CLHs we can place the corresponding \textsf{QAC} in $\BQP^\textsf{CS}$, where \textsf{CS} is an oracle capable of performing arbitrary classical Gibbs sampling.

\end{abstract}

\newpage

\tableofcontents

\newpage

\section{Introduction}

Gibbs state  preparation is a key computational technique used in physics, statistics, and many other scientific fields.
Given a local Hamiltonian $H$ and an inverse temperature $\beta$, the  Gibbs state  $\rho_{\beta H}\sim\exp(-\beta H)$
describes the thermal equilibrium properties of quantum systems at finite temperature, making them essential for
studying the phase diagram, stability of topological quantum memory~\cite{hastings2011topological,landon2013local} as
well as the thermalization process~\cite{riera2012thermalization,muller2015thermalization}. In addition to physics,
Gibbs state  preparation  also has found various applications in optimization~\cite{brandao2019quantum,van2017quantum} and Bayesian Inference~\cite{amin2018quantum,hinton1983optimal}. 
 Various Gibbs state preparation algorithms (or \emph{Gibbs samplers}) have been proposed, including approaches inspired
 by the Davies generator \cite{chen2023quantum,gilyen2024quantum,rall2023thermal,ding2024efficient}, the
 Metropolis-Hasting type method~\cite{jiang2024quantum,temme2011quantum}, and ones based on Grover
 amplification~\cite{poulin2009sampling} and the Quantum Singular Value Transform~\cite{gilyen2019quantum}.

The key requirement for a good Gibbs sampler is \textit{fast mixing}, that is, the algorithm prepares the Gibbs state in \textit{polynomial} time. Gibbs samplers for classical Hamiltonians  have been   studied for decades and fast mixing algorithms have been successfully designed for various scenarios.  In particular, Glauber dynamics yield fast mixing Gibbs samplers
for 1D systems at any constant temperature~\cite{guionnet2003lectures,holley1985rapid,holley1989uniform} and for 2D systems at high temperature~\cite{martinelli1994approach}. On the other hand, for 2D systems like the Ising model, Glauber dynamics-based samplers are known to suffer critical slow downs and are slow mixing at low temperature~\cite{chayes1987exponential,cesi1996two,schonmann1987second}. Instead, the Swendsen-Wang algorithm~\cite{swendsen1987nonuniversal,feng2023swendsen} and approaches based on Barvinok's method~\cite{borgs2020efficient} were proved to achieve fast mixing for low temperature 2D systems.

Recent efforts on developing fast mixing  Gibbs samplers for quantum Hamiltonians have largely focused on commuting local Hamiltonians (CLHs). CLHs are an important subclass of quantum Hamiltonian which exhibits non-trivial quantum phenomenon. Different from classical Hamiltonians whose eigenstates are computational basis,
the eigenstates of a CLH instance can be highly entangled and cannot be prepared by any constant depth quantum  circuit,  as is true for the famous example of Kitaev’s Toric
code~\cite{kitaev2003fault}.  Besides, it was shown that Gibbs sampling of CLHs at constant temperature remains classically hard~\cite{bergamaschi2024quantum,rajakumar2024gibbs}. Nonetheless, several aforementioned fast mixing results for classical Hamiltonians have been successfully generalized to the CLH case.  In particular, multiple results utilize the Davies generator~\cite{davies1976quantum,davies1979generators}, which represents a quantum Markov chain (Lindbladian) for  thermalization process in the weak coupling limit. This has yielded 
fast mixing Gibbs samplers for 1D CLH at any constant temperature ~\cite{bardet2023rapid,kastoryano2016quantum} and 2D CLH at high temperature~\cite{capel2020modified,kastoryano2016quantum}. The fast mixing proofs are obtained  
 by  generalizing classical techniques for analyzing the mixing time of transition matrices~\cite{guionnet2003lectures,holley1985rapid,holley1989uniform,martinelli1994approach,martinelli1994approach} to analyzing the mixing time of the Davies generator. These generalizations are highly non-trivial and very technical since analyzing the spectrum of a quantum operator (in this case the Davies generator) is generally hard. In the low temperature regime, fast mixing Gibbs samplers for CLHs on two or higher dimensions are only known for the standard 2D Toric code~\cite{alicki2009thermalization,Ding2024PolynomialTimePO}.

In this work, we introduce a new approach which does not use the Davies generator. Instead, we design new Gibbs samplers
for various CLHs by giving a reduction from quantum Gibbs state preparation to classical Gibbs sampling. Combined with
the existing fast mixing results for classical Hamiltonians, our Gibbs sampler is able to replicate the state-of-the-art
performances mentioned above. Furthermore, our algorithm can prepare low temperature Gibbs states as long as there
exists a fast mixing Gibbs sampler for the corresponding classical Hamiltonians. Our reductions are summarized in the
following two theorems. More details and comparisons between previous results and the performance of our Gibbs sampler are contained in \Cref{sec:compare}.

Roughly speaking,
we say that there is a \textit{Gibbs sampling reduction}  from a quantum Hamiltonian $H$ to a classical Hamiltonian $H^{(c)}$ if, given the existence of an  algorithm that performs Gibbs sampling for $( H^{(c)},\beta)$ in time $T$, there exists a quantum algorithm preparing the quantum Gibbs state  for $(H,\beta)$ in time $T$ plus a small overhead polynomial in the number of qubits.\footnote{In our case, we will obtain scaling like $T+\mc O(n)$  or $T+\mc O(n^2)$ or $T\times O(n)$.}
First we notice that the Structure Lemma, which is the key technique used in studying the complexity of CLHs~\cite{bravyi2003commutative,irani2023commuting,aharonov2018complexity,schuch2011complexity}, directly gives the desired reduction for 2-local CLHs.
\begin{restatable}[Informal version of Theorem \ref{thm:2local}]{theorem}{twolocalinformal}
\label{thm:intro2local}
There is a Gibbs sampling reduction  from 
      $2$-local qudit commuting Hamiltonians to 2-local qudit classical Hamiltonians.
\end{restatable}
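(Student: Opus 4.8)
The plan is to invoke the Structure Lemma of Bravyi and Vyalyi for $2$-local commuting Hamiltonians and use it to ``read off'' the classical degrees of freedom of $H^{(c)}$. Recall the form of the lemma we need: for a $2$-local qudit CLH $H=\sum_{e\in E}h_e$ with interaction graph $G=(V,E)$, each qudit $p$ carries a decomposition $\mathcal{H}_p\cong\bigoplus_{j_p}\bigotimes_{e\ni p}\mathcal{H}_{p,j_p,e}$, where $j_p$ ranges over a set of ``sectors'' of size at most $\dim\mathcal{H}_p$, and where each term $h_e$ with $e=(p,q)$ commutes with the sector projectors of $p$ and of $q$ and, restricted to a pair of sectors $(j_p,j_q)$, acts nontrivially only on the edge factors $\mathcal{H}_{p,j_p,e}\otimes\mathcal{H}_{q,j_q,e}$. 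The decomposition together with the induced operators $h_e^{(j_p,j_q)}$ can be computed by classical preprocessing in time polynomial in $|V|$ and the local dimensions.

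First I would note that the global sector projectors $\Pi_{\vec{j}}=\bigotimes_p\Pi_{p,j_p}$, $\vec{j}=(j_p)_p$, commute with $H$ and hence with $e^{-\beta H}$, and they sum to the identity; therefore $\rho_{\beta H}=\sum_{\vec{j}}p(\vec{j})\,\rho^{(\vec{j})}_{\beta H}$, where $\rho^{(\vec{j})}_{\beta H}$ is the Gibbs state of $H$ restricted to the $\vec{j}$-sector subspace and $p(\vec{j})=\mathrm{Tr}[\Pi_{\vec{j}}e^{-\beta H}]/\mathrm{Tr}[e^{-\beta H}]$. The key structural observation is that, inside a fixed sector, the subspace $\bigotimes_p\bigotimes_{e\ni p}\mathcal{H}_{p,j_p,e}$ regroups edge-by-edge as $\bigotimes_{e=(p,q)}\bigl(\mathcal{H}_{p,j_p,e}\otimes\mathcal{H}_{q,j_q,e}\bigr)$, and with respect to this regrouping $H$ restricted to the sector equals $\sum_{e}h_e^{(j_p,j_q)}$ with the summands supported on mutually disjoint tensor factors. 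Consequently $\rho^{(\vec{j})}_{\beta H}=\bigotimes_{e=(p,q)}\rho_{\beta h_e^{(j_p,j_q)}}$ is a product of $\mathcal{O}(1)$-dimensional Gibbs states, and $\mathrm{Tr}[\Pi_{\vec{j}}e^{-\beta H}]=\prod_{e=(p,q)}Z_e(j_p,j_q)$ where $Z_e(j_p,j_q):=\mathrm{Tr}[e^{-\beta h_e^{(j_p,j_q)}}]$.

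This pins down the classical Hamiltonian. Put one variable $j_p$ per vertex, with domain the sector set of $p$ (size $\le\dim\mathcal{H}_p$), and define $H^{(c)}(\vec{j})=\sum_{e=(p,q)}h_e^{(c)}(j_p,j_q)$ with $h_e^{(c)}(j_p,j_q):=-\tfrac{1}{\beta}\log Z_e(j_p,j_q)$; this is a genuine $2$-local qudit classical Hamiltonian whose Gibbs distribution at inverse temperature $\beta$ is exactly $p(\vec{j})\propto\prod_eZ_e(j_p,j_q)=e^{-\beta H^{(c)}(\vec{j})}$. The quantum algorithm is then: (i) classically compute the decomposition and the $\mathcal{O}(|E|\cdot d^2)$ numbers $Z_e(j_p,j_q)$ to form $H^{(c)}$; (ii) invoke the assumed classical Gibbs sampler on $(H^{(c)},\beta)$ to draw $\vec{j}\sim p$; (iii) conditioned on $\vec{j}$, for each edge $e=(p,q)$ prepare a purification of the constant-size Gibbs state $\rho_{\beta h_e^{(j_p,j_q)}}$ on fresh ancillas together with $\mathcal{H}_{p,j_p,e}\otimes\mathcal{H}_{q,j_q,e}$, then apply, for each vertex $p$, the fixed isometry $\bigotimes_{e\ni p}\mathcal{H}_{p,j_p,e}\hookrightarrow\mathcal{H}_p$. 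Discarding the ancillas leaves $\sum_{\vec{j}}p(\vec{j})\rho^{(\vec{j})}_{\beta H}=\rho_{\beta H}$, and step (iii) uses only a number of quantum gates linear in the system size, giving the claimed $T+\mathcal{O}(n)$ overhead.

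The substantive step is the structural observation in the second paragraph: that conditioning on the sector makes the Gibbs state factor across edges. This is exactly where the Structure Lemma is essential — for a general $2$-local Hamiltonian the terms incident to a single vertex need not act on disjoint tensor factors, so no such factorization holds, and the whole reduction hinges on the lemma supplying it. The remaining points are routine. The classical couplings $h_e^{(c)}$ are real numbers, but $|\log Z_e|=\mathcal{O}(\beta\|h_e\|+\log d)$ is polynomially bounded, so representing them to $\mathrm{poly}$ bits perturbs $p(\vec{j})$ by at most an exponentially small amount; and each $\rho_{\beta h_e^{(j_p,j_q)}}$ is the Gibbs state of an $\mathcal{O}(1)\times\mathcal{O}(1)$ matrix, hence preparable to exponential precision by explicit diagonalization. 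One should also verify that the induced operators and isometries furnished by the Structure Lemma are computable efficiently, which holds for bounded (in particular constant) local dimension.
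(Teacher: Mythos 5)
Your proposal is correct, but it takes a genuinely different route from the paper. The paper also starts from the Structure Lemma, but it keeps the full set of degrees of freedom: its classical Hamiltonian $H_G^{(2c)}$ has one configuration for every simultaneous eigenstate of $H_G^{(2)}$ (a sector label $j_v$ per vertex \emph{and} a basis index $\bb_{j_vj_w}^{vw}$ per edge), with classical energies equal to the actual quantum eigenvalues $\lambda(\bb_{j_vj_w}^{vw})$; the sampler then draws an eigenstate index and prepares the corresponding pure product eigenstate by a depth-$\mc O(1)$ circuit in time $\mc O(m)$. You instead marginalize out the within-sector index, keep only the sector variables $j_v$, and fold the local partition functions into the couplings via $h_e^{(c)}(j_p,j_q)=-\tfrac1\beta\log Z_e(j_p,j_q)$, recovering the within-sector state as a product of constant-dimensional local Gibbs states. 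Both are valid reductions with the same $T+\mc O(n)$ overhead. What the paper's version buys is a $\beta$-independent classical Hamiltonian whose energies are the literal quantum spectrum; this matters downstream, because the paper's corollaries (e.g.\ mapping 1D translation-invariant CLHs to finite-range translation-invariant Ising chains, and 2D qubit CLHs to Ising models with fields) rely on recognizing $H^{(c)}$ as a standard model at \emph{every} temperature in order to invoke off-the-shelf fast-mixing results, whereas your free-energy couplings change with $\beta$ and need not land on a named model. What your version buys is a smaller classical state space (one domain-$d$ variable per vertex rather than a full relabeling of the quantum Hilbert space) and a cleaner separation between the classical sampling problem and the purely local quantum preparation; the price is the $\beta$-dependence and the need to prepare mixed local Gibbs states via purification rather than pure product eigenstates.
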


For more physically motivated  4-local CLHs such as the Toric code, the Structure Lemma can no longer  transform   4-local CLHs to classical Hamiltonians. Instead,  by leveraging a symmetry in the eigenspaces,
we demonstrate that an oblivious randomized correction approach  yields the desired reduction for qubit CLHs in 2D, via generalizing a   ground state preparation algorithm~\cite{aharonov2018complexity} to  Gibbs state preparation.  
The locality of the resulting classical Hamiltonian depends on whether there are classical qubits in the CLHs.  Roughly speaking a qubit is classical if by choosing proper basis of this qubit, all terms look like $\ketbra {0}{0}\otimes ... + \ketbra{1} 1\otimes ...$ on this qubit.

\begin{restatable}[Informal version of Theorem  \ref{thm:4local2} and \ref{thm:gibbs_with_classical} ]{theorem}{fourlocalinformal}
\label{thm:intro4local} 
There is a Gibbs sampling reduction from 
 4-local qubit 2D commuting local Hamiltonian 
$H$
 to qudit classical Hamiltonians. In particular, 
\begin{itemize}
    \item If  there are no classical qubits with respect to terms in $H$, then the classical Hamiltonian is a 2-local qudit classical Hamiltonian on a planar graph. 
    \item When there are classical qubits but all quantum terms (terms far away from classical qubits) are uniformly correctable, then the classical Hamiltonian is a $\mc O(1)$-local qudit classical Hamiltonian.
\end{itemize}
\end{restatable}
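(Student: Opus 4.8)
The plan is to decompose the Gibbs state into \emph{syndrome sectors} and reduce the quantum task to (i) a classical Gibbs sampling problem over syndromes and (ii) an oblivious quantum ``correction'' that is read off from a syndrome. This does not use the Structure Lemma (which, unlike in the $2$-local case of Theorem~\ref{thm:intro2local}, cannot render $4$-local terms classical); instead it rests on a symmetry of the eigenspaces and a Gibbs-state generalization of the ground-state analysis of~\cite{aharonov2018complexity}. Write $H=\sum_a h_a$ with the $h_a$ pairwise commuting, and let $h_a=\sum_j \lambda_a^{(j)}P_a^{(j)}$ be the (finite) spectral decompositions; since the $h_a$ commute, all the $P_a^{(j)}$ commute, so
\begin{equation*}
e^{-\beta H}=\prod_a e^{-\beta h_a}=\sum_{\bs{j}}\Big(\textstyle\prod_a e^{-\beta\lambda_a^{(j_a)}}\Big)\,\Pi_{\bs{j}},\qquad \Pi_{\bs{j}}:=\prod_a P_a^{(j_a)},
\end{equation*}
where $\bs{j}$ ranges over syndromes and $\Pi_{\bs{j}}$ projects onto the joint eigenspace $V_{\bs{j}}$ (with $\Pi_{\bs{j}}=0$ when $\bs{j}$ is incompatible). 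Hence $\rho_{\beta H}=\mathbb E_{\bs{j}\sim p}\big[\sigma_{\bs{j}}\big]$, where $\sigma_{\bs{j}}$ is the maximally mixed state on $V_{\bs{j}}$ and $p(\bs{j})\propto \dim(V_{\bs{j}})\prod_a e^{-\beta\lambda_a^{(j_a)}}$. The algorithm is then: draw $\bs{j}\sim p$ using the classical Gibbs sampling oracle, then prepare $\sigma_{\bs{j}}$; the output is exactly $\rho_{\beta H}$.

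For (i) I would use the structure theory of $4$-local $2$D qubit CLHs to show $p$ is the Gibbs distribution of a \emph{local} classical Hamiltonian $H^{(c)}$. Two ingredients feed into $H^{(c)}$: the \emph{compatibility constraints} (which $\bs{j}$ have $V_{\bs{j}}\neq 0$) and the \emph{degeneracy} $\dim V_{\bs{j}}$. The geometric fact to establish is that compatibility is a conjunction of constraints on \emph{neighboring} syndrome variables and that $\dim V_{\bs{j}}$ is constant on compatible syndromes (or $\log\dim V_{\bs{j}}$ is a sum of local terms); then the compatibility constraints become bounded-weight terms with large coefficients, the degeneracy contributes further local terms, and the Boltzmann factors $e^{-\beta\lambda_a^{(j_a)}}$ contribute single-variable fields. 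When there are no classical qubits, after conjugating each $h_a$ by a local unitary the syndrome variables can be placed on vertices/plaquettes of the lattice with only adjacent ones coupled, so $H^{(c)}$ is $2$-local on a \emph{planar} graph. When classical qubits are present, their (diagonal) values also become variables of $H^{(c)}$, and the \emph{uniform correctability} hypothesis on the quantum terms is exactly what keeps the couplings induced through the classical qubits of bounded range, giving an $\mc O(1)$-local $H^{(c)}$.

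For (ii) I would prepare $\sigma_{\bs{j}}$ by a post-selection-free recipe: start from $\Id/2^n$; measure all $h_a$, obtaining an irrelevant syndrome $\bs{j'}$ and leaving the state in $\sigma_{\bs{j'}}$; then apply $U_{\bs{j}}U_{\bs{j'}}^{\dagger}$, where $U_{\bs{k}}$ is an efficiently computable unitary with $U_{\bs{k}}\Pi_{\bs{0}}U_{\bs{k}}^{\dagger}=\Pi_{\bs{k}}$, i.e.\ $U_{\bs{k}}$ carries $V_{\bs{0}}$ onto $V_{\bs{k}}$. Since conjugating the maximally mixed state on $V_{\bs{0}}$ by \emph{any} unitary that carries $V_{\bs{0}}$ onto $V_{\bs{k}}$ gives the maximally mixed state on $V_{\bs{k}}$, the outcome is $U_{\bs{j}}\sigma_{\bs{0}}U_{\bs{j}}^{\dagger}=\sigma_{\bs{j}}$ regardless of which admissible $U_{\bs{j}},U_{\bs{j'}}$ are chosen --- this is the ``oblivious randomized'' nature of the correction. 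The existence, efficient computation, and shallow implementation of $U_{\bs{k}}$ follow from the eigenspace symmetry: passage between sectors is effected by ``string-like'' products of local unitaries supported on a region fixed by the mismatch between $\bs{k}$ and $\bs{0}$ (Pauli strings joining violated stabilizers in the Toric-code picture, the uniform correction operators in general), obtained by a simple path/matching computation. The quantum overhead is then $\mc O(n)$-local syndrome extraction plus the correction circuits, matching the $T+\mc O(n)$, $T+\mc O(n^2)$, and $T\cdot\mc O(n)$ bounds in the three regimes.

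I expect step (i) to be the main obstacle: one must genuinely prove that the compatibility constraints and degeneracies of arbitrary $4$-local $2$D qubit CLHs assemble into a bounded-locality classical Hamiltonian --- and into a \emph{planar} $2$-local one when there are no classical qubits --- which requires a delicate local analysis of how the $4$-local projectors overlap around each lattice cell, and, in the classical-qubit case, a proof that uniform correctability prevents the couplings propagated through the classical qubits from becoming long-range. Secondary difficulties are making the correction operators $U_{\bs{k}}$ simultaneously oblivious, computable from $\bs{k}$, and realizable by shallow local circuits (with the extra bookkeeping of reference sectors per degeneracy class should $\dim V_{\bs{j}}$ fail to be constant), and controlling the inverse-polynomial errors introduced by finite-precision classical samples.
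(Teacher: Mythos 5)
Your sector decomposition $\rho_{\beta H}=\mathbb E_{\bs j\sim p}[\sigma_{\bs j}]$ with $p(\bs j)\propto \dim(V_{\bs j})\prod_a e^{-\beta\lambda_a^{(j_a)}}$ is correct, and your observation that conjugating the maximally mixed state on one sector by \emph{any} unitary carrying it onto another sector is basis-independent is exactly the symmetry the paper exploits (Lemmas \ref{lem:sym} and \ref{lem:dim}). But your step (i) has a genuine gap that the paper's architecture is specifically designed to avoid. You need the syndrome distribution $p$ --- including the compatibility constraints and the degeneracies $\dim V_{\bs j}$ --- to be the Gibbs distribution of a \emph{local} classical Hamiltonian over eigenvalue labels. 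Neither ingredient is local in general: on the torus the plaquette syndromes obey a \emph{global} parity constraint $\prod_p\lambda_p=1$, which cannot be written as bounded-weight penalty terms (the paper handles this case separately in \Cref{appendix:non_punctured} by reparametrizing the syndrome as domain walls of a 1D Ising chain, $\lambda_i=(-1)^{x_i\oplus x_{i+1}}$, which builds the parity in); and for a general 2D qubit CLH the joint eigenspace dimensions are not constant and $\log\dim V_{\bs j}$ has no a priori local decomposition in the eigenvalue variables. The paper sidesteps both problems by never sampling eigenvalue labels for the whole Hamiltonian: it first removes a set $\cR$ of \emph{correctable} interior terms so that the remainder $\cP$ becomes $2$-local after grouping qubits into qudits (\Cref{cor:PA}), applies the Structure Lemma to $H_\cP$ to get a classical Hamiltonian whose configuration space indexes \emph{eigenstates} (the labels $\bb_\bj$ include basis states within each decoupled factor, so degeneracy is resolved automatically and no $\dim V_{\bs j}$ bookkeeping arises), and only the removed terms in $\cR$ are handled by measure-and-obliviously-correct, one term at a time with acceptance probability $\exp(-\beta\lambda_p)/(\exp(-\beta\lambda_p)+\exp(\beta\lambda_p))$. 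You explicitly decline to use the Structure Lemma, but it is precisely the tool that makes the classical Hamiltonian local and degeneracy-free here; "structure theory" without it leaves your main obstacle unresolved.

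A second, related gap is in step (ii): you require an efficiently implementable unitary $U_{\bs k}$ carrying a reference sector onto an \emph{arbitrary} compatible sector $\bs k$ of the full syndrome. Correction operators of the form $L_p=\bigotimes_v U_vX_vU_v^\dagger$ are only guaranteed for terms with \emph{access to the boundary} (\Cref{def:access}, \Cref{lem:car}); boundary terms and the non-removed terms in $\cP$ need not admit any such operator, so your $U_{\bs k}$ need not exist as a product of local strings. The paper never needs to steer the $\cP$-syndromes by unitaries at all --- their eigenstates are prepared directly as product states over grouped qudits --- and only ever applies string corrections to the terms in $\cR$, for which \Cref{cor:PA} guarantees they exist. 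To repair your proposal you would essentially have to reintroduce the paper's partition into $\cP$ and $\cR$, at which point the two arguments coincide.
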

Note that in the above theorem the quantum Hamiltonian is on qubits while the classical Hamiltonian is on qudits.  
An example of a qubit 2D CLH without classical qubits is the defected Toric code, a generalization of the Toric code with arbitrary complex coefficients. We will give a technical overview based on the defected Toric code in \Cref{sec:defected_toric_code}. 

Our reduction also gives a quantum analogy of Stockmeyer's result~\cite{stockmeyer1983complexity} for the complexity of quantum approximate counting.  In particular, a fundamental result of Stockmeyer~\cite{stockmeyer1983complexity} states that classical approximate counting  (approximating the partition function of a classical Hamiltonian)  is contained in the complexity class $\BPP^{\NP}$. 
It is natural to conjecture that the quantum approximate counting is upper bounded by a complexity class like
$\BQP^{\QMA}$, but few results are known. By the connection between quantum approximate counting and the Gibbs state  preparation~\cite{bravyi2021complexity}\footnote{Lemma 12 in \cite{bravyi2021complexity}, where the k-QMV can be estimated by measuring the Gibbs state .},
our reduction shows that for various CLHs, the corresponding quantum approximate counting problem is contained in $\BQP^{\textsf{CS}}$, where $\textsf{CS}$ is an oracle which can perform arbitrary classical Gibbs sampling.

\subsection{Comparison to previous work.}\label{sec:compare}
Recall that most of the previous work on  Gibbs samplers for CLHs are based on simulating the Davies generator, which is a Lindbladian closely related to the thermalization process.
In this section, we give a detailed comparison between previous results and our result, demonstrating that instead of using the Davies generator, our reduction gives a new  Gibbs sampler for CLHs 
directly utilizing fast mixing Gibbs samplers for classical Hamiltonians. In particular, our reduction is able to replicate state-of-the-art results and also derive new results. Our results are summarized in \Cref{table:comparison}. In this section we discuss related results, and a more thorough discussion on Gibbs state preparation can be found in  
 \cite[Section 3.3]{capel2020modified}.

\begin{remark}
 Due to the relationship between Davies generator and thermalization process, previous works analyzing the Davies
 generator~\cite{bardet2023rapid,kastoryano2016quantum,alicki2009thermalization} also yield insights into how thermal noise influences the quantum systems. Our reduction does not cover this implication. The following comparison is only for the task of preparing Gibbs state.
\end{remark}

\begin{figure}[h!]
\centering
\begingroup\setlength{\fboxsep}{0pt}
\begin{tabularx}{\textwidth}{llp{0.3\textwidth}p{0.3\textwidth}} 
\rowcolor{lightgray!50}
 \hline
 Mixing time & 1D any temp & 2D (2-local)  & 2D (4-local) \\ 
\rowcolor{lightgray!20}
 \hline
 Previous work & $poly(\log n)$\textsuperscript{\cite{capel2020modified}}  & high temp: $poly(\log n)$\textsuperscript{\cite{kochanowski2024rapid,capel2020modified}}   & $poly(n)$ for high temp\textsuperscript{\cite{kastoryano2016quantum}} \\ 
\rowcolor{lightgray!20}
                &                &low temp: unknown          &  $\mc O(n^3)$ mixing time for TC~\textsuperscript{\cite{Ding2024PolynomialTimePO}} \\
\rowcolor{lightgray!20}
 \hline
 Our results & $poly(\log n)$ & {high temp: ${poly(\log n)}$} &  \textbf{*$\bm{poly(n)}$}    \\ 
\rowcolor{lightgray!20}
              &                &  \textbf{low temp:  *$\bm{poly(n)}$}   & \textbf{$\bm{\TCtime}$ for DTC} \\
 \hline
\end{tabularx}
\endgroup
\vspace{-0.5em}
\captionsetup{width=0.9\textwidth, font=small}
\caption{Mixing time of Gibbs samplers for 1D and 2D CLHs at different  temperatures (temp). TC and DTC refer to the standard Toric code and the more general defected Toric code respectively. Improved results are in \textbf{bold}. \emph{The results marked ``*'' indicate we achieve this mixing time, when efficient samplers for the corresponding \emph{classical} Hamiltonians are known.} Our $\TCtime$ result for DTC is the total runtime rather than just mixing time.}
\label{table:comparison}
\end{figure}

\paragraph{Review of Markov chains, Lindbladians and mixing time}  We briefly review some key concepts. Consider an $n$-qubit local Hamiltonian $H$ and an inverse temperature $\beta$. We assume $\beta \in \mathcal O(1)$ unless further specified. We will first assume $H$ is classical and introduce key concepts for classical Gibbs sampling. Then we will generalize to quantum Hamiltonians. 
 
Suppose $H$ is a classical Hamiltonian which diagonalizes in the computational basis, and the task is performing
classical Gibbs sampling for $(H,\beta)$.  The target is the  Gibbs distribution $\pi$  which samples  computational
basis states $\ket{x}$ with probability proportional to $\exp\left(-\beta \langle x|H|x\rangle\right)$. 
 The goal of classical Gibbs sampling is to design  a classical  process which drives any distribution $\nu$ to $\pi$.  The  commonly used method is the classical Metropolis algorithm~\cite{metropolis1953equation}, which is 
   a  discrete-time Markov chain described by  a 
transition matrix $P$,  
  such that $\pi$ is the unique  fixed point of $P$, i.e. $P\pi=\pi$. 
 The mixing time $t(\epsilon)$ is the time needed to get $\epsilon$-close to the invariant distribution $\pi$ with respect to $1$-norm (the total variation distance), that is
 \begin{align}
 t(\epsilon):=\min\{ t: \|P^t\nu-\pi\|_1\leq \epsilon,\forall \text{ distribution } \nu \}.
 \end{align}
  In addition to this discrete-time Markov chain,
one can also design a continuous-time Markov chain, described by a  
generator matrix $G$ such that $\pi$ is the unique invariant distribution of $G$, i.e. $G\pi=0$ or equivalently
$e^{Gt}\pi=\pi$, $\forall t$. Similarly to above,  the mixing time  is defined to be
\begin{align}
	t(\epsilon):=\min\{ t: \|e^{Gt}\nu-\pi\|_1\leq \epsilon,\forall \text{ distribution } \nu \}.
\end{align}
\begin{itemize} 
    \item 
    The Markov chain is \textit{poly-time mixing}, or \emph{fast mixing}, if the the spectral gap of $P$ or $G$ is $\Omega(1/poly(n))$, which implies $t(\epsilon)=poly(n)\times \log \frac{1}{\epsilon}$.
    \item
    The Markov chain is \textit{rapid mixing} if it reaches the invariant distribution in a time which scales logarithmically with the system size, that is $t(\epsilon)= poly(\log n) \times \log\frac{1}{\epsilon}$. Rapid mixing is typically proved by bounding the log-Sobolev constant~\cite{guionnet2003lectures} for the continuous-time chain.
\end{itemize}

When $H$ is a quantum Hamiltonian, we wish to prepare a quantum Gibbs state, defined as
\[
    \rho_{\beta H}:= \rho( H,\beta ):= \frac 1{\text{tr}[\exp(-\beta H)]}\exp(-\beta H).
\]
The goal is to  design a quantum process which drives any quantum state $\sigma$ to $\rho_{\beta H}$. One commonly used method is to  design a Lindbladian $\cL$ such that $\rho_{\beta H}$ is the unique fixed point of $\cL$, i.e. $\cL(\rho_{\beta H})=0$ or, equivalently, $e^{\cL t}(\rho_{\beta H})=\rho_{\beta H}, \forall t$. The Lindbladian is the quantum analogy of a continuous-time Markov chain. The mixing time $t(\epsilon)$ is defined to be
 \begin{align}
 	t(\epsilon):=\min\{ t: \|e^{\cL t}(\sigma)-\rho_{\beta H}\|_1\leq \epsilon,\forall \sigma \}.
 \end{align}
 The notion of poly-time mixing and rapid mixing is defined similarly to the classical setting. One can prepare the quantum Gibbs state  on a quantum computer by Lindbladian simulation techniques~\cite{cleve2016efficient,chen2023quantum}. For simplicity we will assume that $\epsilon=1/poly(n)$ for the remainder of the section. 

\paragraph{Replication of poly-log time mixing algorithm for 1D commuting Hamiltonians.} For  1D classical Hamiltonians, it is well-known that  there is no \textit{computational phase transition}~\cite{guionnet2003lectures,holley1985rapid,holley1989uniform}. As a result, for any constant temperature, Glauber dynamics is rapid mixing for all translation-invariant,  1D classical Hamiltonians with finite-range interactions.
 
A large body of previous work has focused on generalizing the rapid mixing results from classical Hamiltonians to quantum CLHs. In particular, for 1D CLHs, \cite{kastoryano2016quantum} proved that the Davies generator has a constant spectral gap and thus is fast (poly-time) mixing. Then
 Bardet \textit{et.al.}~\cite{bardet2023rapid} strengthened the result to obtain rapid mixing. 
  Specifically, they
proved that for any constant temperature, the Davis generator $\cL$ for any finite-range, translation-invariant 1D CLHs is rapid mixing. 
This is proved by generalizing the classical technique of bounding the log-Sobolev constant
~\cite{guionnet2003lectures,holley1985rapid,holley1989uniform,zegarlinski1990log} to CLHs. Note that this generalization is highly non-trivial since $\cL$ is a quantum operator and analyzing its spectral gap and log-Sobolev constant are complicated. Combining this bound with quantum simulations of the Lindbladian,  Bardet \textit{et.al.} gives a quantum Gibbs sampler with runtime  
 $$poly(\log n) \times f_1,$$ where $f_1$ is the overhead brought by   simulating  the Lindbladian evolution.

In contrast to \cite{bardet2023rapid}, which obtained a rapidly mixing Gibbs sampler by developing sophisticated techniques to bound the log-Sobolev constant of the Davies generator, 
our reduction gives a Gibbs sampler of similar performance by directly using classical results:
\begin{lemma}[Informal version of Lemma \ref{lem:H1D}]
    There is a Gibbs sampling reduction from any finite-range, translation-invariant (TI) qudit 1D CLHs, to
    the finite-range, TI 1D classical Hamiltonians. Combined with the rapid mixing Gibbs sampler for  finite-range, TI 1D classical Hamiltonian at any constant temperature~\cite{guionnet2003lectures,holley1985rapid,holley1989uniform}, we prepare the Gibbs state in time,
    \[
        poly(\log n)\times \mc \red{f_2}  + \mc O(n).  
    \]
\end{lemma}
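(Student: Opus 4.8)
The plan is to obtain the reduction in two moves: first reduce the finite-range chain to a nearest-neighbor $2$-local chain, then invoke the $2$-local Structure Lemma reduction of \Cref{thm:intro2local} (formally \Cref{thm:2local}) and feed the resulting classical Hamiltonian to the known classical sampler. For the first move I would \emph{thicken} the chain: given a translation-invariant $1$D CLH $H=\sum_i h_i$ of interaction range $r$, partition the qudits into consecutive blocks of $r$ sites and treat each block as one qudit of local dimension $d^r=\mathcal O(1)$. Every original term then straddles at most two adjacent blocks, so $H$ becomes a nearest-neighbor $2$-local qudit CLH on a path; it is still commuting (each block term is a sum of commuting original terms), still translation-invariant with respect to the coarse lattice (a shift by $r$ sites becomes a shift by one block), and there are $\mathcal O(n)$ blocks, with the regrouping computable in $\mathcal O(n)$ time.

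For the second move I would apply \Cref{thm:2local}, which via the Structure Lemma \cite{bravyi2003commutative} produces (i) a product of single-qudit unitaries $\bigotimes_j U_j$ block-diagonalizing every term along a decomposition $\mathcal H_j=\bigoplus_k\mathcal H_{jk}^{L}\otimes\mathcal H_{jk}^{R}$; (ii) a classical Hamiltonian $H^{(c)}$ on the block-label configurations $\vec k$, whose bulk energy on an edge $(j,j{+}1)$ is $-\tfrac1\beta\log\operatorname{tr}\exp\!\bigl(-\beta\,h_{j,j+1}|_{(k_j,k_{j+1})}\bigr)$ together with $\mathcal O(1)$-local boundary terms coming from the dimensions of the unpaired tensor factors; and (iii) the identity $\rho_{\beta H}=\bigl(\bigotimes_j U_j^\dagger\bigr)\Bigl(\sum_{\vec k}q(\vec k)\bigotimes_{\langle j,j+1\rangle}\sigma^{(j,j+1)}_{\vec k}\Bigr)\bigl(\bigotimes_j U_j\bigr)$, where $\vec k$ is drawn from the classical Gibbs distribution of $(H^{(c)},\beta)$ and each $\sigma^{(j,j+1)}_{\vec k}$ is the constant-size Gibbs state of the single term $h_{j,j+1}|_{(k_j,k_{j+1})}$ on the disjoint factors $\mathcal H^{R}_{jk_j}\otimes\mathcal H^{L}_{j{+}1,k_{j+1}}$. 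Two properties of $H^{(c)}$ have to be checked: its interaction graph is again a path, which is immediate since $H^{(c)}$ is supported on exactly the edges of the coarse-grained $H$; and $H^{(c)}$ is itself translation-invariant of finite range with $\mathcal O(1)$ couplings. \textbf{This last point is the main obstacle.} The couplings are bounded because $h_{j,j+1}$ is bounded and $\beta$ is constant; translation invariance requires showing the per-qudit decomposition $\mathcal H_j=\bigoplus_k\mathcal H_{jk}^L\otimes\mathcal H_{jk}^R$ can be taken uniform along the chain. This holds because the decomposition is essentially canonical --- the blocks are the minimal central projections of the algebra generated by the two terms incident to qudit $j$, and the $L\otimes R$ splitting inside a block is determined, up to a basis choice within each factor that does not affect $H^{(c)}$, by the two commuting matrix algebras those terms generate there --- so for a translation-invariant $H$ all of this data consists of translates of a single local pattern, leaving only $\mathcal O(1)$-local boundary perturbations at the ends of the chain.

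For the third move I would plug in the classical sampler. Since $H^{(c)}$ is a finite-range, translation-invariant $1$D classical Hamiltonian with constant local dimension and $\mathcal O(1)$ couplings, the rapid-mixing results \cite{guionnet2003lectures,holley1985rapid,holley1989uniform} give, at any constant temperature, a sample $\vec k$ from its Gibbs distribution in time $T=poly(\log n)\times f_2$, with $f_2$ the per-update cost of the Glauber chain. The quantum algorithm then runs this classical sampler, and on input $\vec k$ prepares the $\mathcal O(n)$ constant-size states $\sigma^{(j,j+1)}_{\vec k}$ on their pairwise-disjoint registers and applies $\bigotimes_j U_j^\dagger$, in $\mathcal O(n)$ time and gates; this yields $\rho_{\beta H}$ in total time $poly(\log n)\times f_2+\mathcal O(n)$. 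The only remaining routine point is error propagation: the classical sampling error enters the output linearly in trace distance through the trace-preserving decoding map, so an inverse-polynomial classical accuracy suffices for an inverse-polynomial-accurate quantum Gibbs state.
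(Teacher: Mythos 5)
Your proposal is correct and follows the same skeleton as the paper's proof of \Cref{lem:H1D}: coarse-grain the range-$r$ chain into blocks so the Hamiltonian becomes nearest-neighbor $2$-local, apply the Structure Lemma reduction of \Cref{thm:2local}, sample the resulting finite-range TI 1D classical Hamiltonian with Glauber dynamics, and decode with an $\mc O(n)$-cost local circuit. The one genuine divergence is in how you define $H^{(c)}$: the paper's classical Hamiltonian (\Cref{eq:classical_ham}) has configurations indexed by the subspace labels \emph{and} the eigenstate indices $\bb_\bj$ of each decoupled term, with classical energies equal to the actual eigenvalues, so the decode is a pure product eigenstate; you instead marginalize out the eigenstate indices, keep only the block labels $\vec k$, assign each edge the local free energy $-\tfrac1\beta\log\operatorname{tr}\exp(-\beta\,h|_{(k_j,k_{j+1})})$, and decode by preparing constant-size local Gibbs states on the decoupled factors. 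Both are valid and give the same runtime; your version has a smaller classical state space but a $\beta$-dependent classical Hamiltonian and a mixed-state decode, and note it is \emph{not} literally what \Cref{thm:2local} outputs, so you would need to state and verify the block-diagonal identity for $\rho_{\beta H}$ yourself (which you essentially do). A point in your favor: you actually argue that the Structure Lemma decomposition can be chosen uniformly along the chain (via canonicity of the central decomposition), whereas the paper asserts translation invariance of $H^{(c)}$ without proof; the paper additionally rewrites $H^{(c)}$ as a diameter-$\le 2k$ Ising-type Hamiltonian on qubits before citing the classical samplers, a cosmetic step your argument does not need.
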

Here $f_2$  is the overhead incurred by simulating the classical  Markov chain. $\mc O(n)$ is the time needed to prepare
a constant depth quantum circuit arising from the Structure Lemma which implements the quantum-to-classical reduction.

\paragraph{High and low temperature Gibbs state for 2-local CLHs on 2D.}   Recall that $\beta$ is the inverse temperature, thus low temperature corresponds to large $\beta$. We begin with a literature review for classical Hamiltonians. Unlike 1D classical Hamiltonians where the Glauber dynamics
is rapid mixing for any constant temperature, 2-local 2D classical Hamiltonians exhibit a constant-temperature computational phase transition. For example, the ferromagnetic 2D Ising model has a constant \textit{critical inverse temperature} $\beta_c$, such that
\begin{itemize}
    \item For $\beta< \beta_c$   the Glauber  Dynamics is poly-time mixing~\cite{martinelli1994approach,martinelli1994approach}.
    \item For $\beta\geq \beta_c$, the Glauber dynamics meets a critical slow down where the spectral gap of the Glauber Dynamics is smaller than $\exp(-\alpha\sqrt{n})$ for $\alpha>0$~\cite{chayes1987exponential,cesi1996two,schonmann1987second}. 
\end{itemize}
Similar results also hold for the Potts model~\cite{ullrich2012rapid,gheissari2016mixing}. 
To understand this phase transition intuitively, note that
the Glauber dynamics is a Markov chain with local update rules. Intuitively an algorithm using local updates is good at solving a ``local'' problem. In the high temperature region, most spins will interact effectively weakly thus the Gibbs state has little entanglement~\cite{bakshi2024high} and a local optimization suffices. However, in the low temperature region, there are strong correlations in large regions and thus the Gibbs state is highly non-local.\footnote{More precisely there is an  equivalence between the mixing time and the spatial decay of correlation in the Gibbs measure~\cite{dyer2004mixing,cesi2001quasi,stroock1992equivalence,stroock1992logarithmic}. In some classical literature, decay of correlation is referred to as mixing condition~\cite{guionnet2003lectures} or spatial mixing~\cite{dyer2004mixing}.} 
Thus, to prepare low temperature Gibbs state ,
one needs to carefully design Markov chains with \emph{non-local} update rules, such as the cluster updates in the Swendsen-Wang algorithm~\cite{swendsen1987nonuniversal};  
or uses other methods such as the Barvinok's method~\cite{borgs2020efficient}.

In the quantum case, to the best knowledge of the authors, all previous work on Gibbs state  preparation for 2D CLHs has focused on the high temperature region. 
 In particular,
 \begin{itemize}
     \item \cite{kastoryano2016quantum} showed that there is a constant $\beta_1$ such that for $\beta\leq  \beta_1$ the Davies generator is poly-time mixing.
     \item \cite{capel2020modified} showed that  there is a constant $\beta_2$ such that 
for $\beta\leq  \beta_2$\footnote{ 
We did not check that whether  $\beta_1,\beta_2$ and the later mentioned high temperatures are equal.},   the Schmidt generator defined in \cite{capel2020modified}
is rapid mixing.
 \end{itemize}
  
Both the Davies generator and the Schmidt generator for 2D CLHs with respect to local jump operators are local
Lindbladians. An adaption of the classical proofs~\cite{chayes1987exponential,cesi1996two,schonmann1987second}
will show that they are slow mixing for 2D systems at low temperature~\cite{gamarnik2024slow}. That is, there exists a constant $\beta_3$ such that for $\beta\geq \beta_3$, the spectral gap of any $O(\log n)$-local Lindbladian (not necessarily the Davies generator) which fixes the Gibbs state  of 2D Ising model at  inverse temperature $\beta$ has an exponentially-small spectral gap.

Our Gibbs sampler improves on the existing results in two main ways. First, in the high temperature region our reduction
again gives a way to directly utilize classical results~\cite{cesi2001quasi} and obtain a Gibbs sampler of similar
performance as the best prior work \cite{capel2020modified} (i.e., rapid mixing), without involving heavy proofs for 
  analyzing the log-Sobolev constant of the Schmidt generator like \cite{capel2020modified}.

\begin{lemma}
    There is a Gibbs sampling reduction from any \emph{2-local qudit 2D CLHs} to 2-local qudit 2D classical Hamiltonians. Thus for high enough temperature where there exists rapid mixing classical Gibbs samplers for the corresponding classical Hamiltonians (e.g. from \cite{cesi2001quasi} or Chapter 9 of \cite{guionnet2003lectures}), we can prepare the Gibbs state for the 2D CLHs in time 
    \[
        poly(\log n)\times  f_2+ \mc O(n)\,,
    \]
where $f_2$ is the overhead incurred by simulating the classical Markov chain.
\end{lemma}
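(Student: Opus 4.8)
This lemma is obtained by re-examining the $2$-local Gibbs sampling reduction of \Cref{thm:intro2local} and checking that, when $H$ is $2$-local on the $2$D lattice, the classical Hamiltonian $H^{(c)}$ it produces is again $2$-local on the $2$D lattice with constant local dimension, after which we simply plug in classical high-temperature rapid-mixing theorems. Concretely, I would start from the Structure Lemma for $2$-local commuting Hamiltonians: at every vertex $v$ of the interaction graph there is a single-qudit unitary $U_v$ realizing a decomposition $\mathcal H_v \cong \bigoplus_{\mu}\bigotimes_{e\ni v}\mathcal H_v^{\mu,e}$ in which each term $h_e$ with $e=(u,v)$ becomes block-diagonal, $h_e=\bigoplus_{\mu,\nu}\Id\otimes h_e^{\mu\nu}\otimes\Id$, acting only on the pair of factors $\mathcal H_u^{\mu,e}\otimes\mathcal H_v^{\nu,e}$ assigned to $e$. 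Diagonalizing each finite block $h_e^{\mu\nu}$ by a unitary on $\mathcal H_u^{\mu,e}\otimes\mathcal H_v^{\nu,e}$ and packaging these into an edge unitary $W_e$, the change of basis $U_{\mathrm{tot}}=\big(\prod_e W_e\big)\big(\bigotimes_v U_v\big)$ makes $U_{\mathrm{tot}}HU_{\mathrm{tot}}^\dagger$ diagonal in the computational basis. For fixed $v$ the $W_e$ with $e\ni v$ act on the disjoint tensor factors $\mathcal H_v^{\mu,e}$ and each $W_e$ preserves sectors, so the $\{W_e\}$ pairwise commute and $U_{\mathrm{tot}}$ is a constant-depth circuit on the $2$D lattice (group edges by a bounded edge-coloring), implementable in $\mc O(n)$ time; the decomposition itself is found in $\mc O(n)$ classical preprocessing since each qudit has constant dimension.

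Writing a basis state of $U_{\mathrm{tot}}HU_{\mathrm{tot}}^\dagger$ via its labels $(\mu_v,(j_{v,e})_{e\ni v})$ at each vertex, its energy is $\sum_{e=(u,v)}\lambda_e^{\mu_u\mu_v}(j_{u,e},j_{v,e})$, a sum of terms each depending only on the classical variables at the two endpoints of $e$. Hence this defines a $2$-local classical Hamiltonian $H^{(c)}$ on qudits whose local alphabet at $v$ has size $\dim\mathcal H_v$ (constant), whose interaction graph — hence degree $\le 4$ — is that of $H$, and whose interaction strengths are bounded by $\max_e\|h_e\|$. Since $e^{-\beta H}=U_{\mathrm{tot}}^\dagger e^{-\beta U_{\mathrm{tot}}HU_{\mathrm{tot}}^\dagger}U_{\mathrm{tot}}$ and $\text{tr}\,e^{-\beta H^{(c)}}=\text{tr}\,e^{-\beta H}$, the procedure ``draw $x$ from the Gibbs distribution of $(H^{(c)},\beta)$, prepare $\ket x$, apply $U_{\mathrm{tot}}^\dagger$'' outputs exactly $\rho_{\beta H}$; and because that post-processing is a quantum channel, an $\epsilon$-accurate classical sample yields a state $\epsilon$-close to $\rho_{\beta H}$ in trace distance. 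This is precisely the reduction of \Cref{thm:intro2local}, now with the $2$D, $2$-local, constant-dimension structure tracked.

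It then remains to quote the classical input: for $\beta$ below a constant threshold depending only on $\max_e\|h_e\|$ and the (bounded) degree and local dimension, the finite-range classical Hamiltonian $H^{(c)}$ on the $2$D lattice satisfies strong spatial mixing, hence its Glauber dynamics has log-Sobolev constant $\Omega(1)$ and is rapid mixing \cite{cesi2001quasi,guionnet2003lectures}; so an $\epsilon$-accurate sample with $\epsilon=1/poly(n)$ costs $poly(\log n)\times f_2$, and adding the $\mc O(n)$ cost of preparing $\ket x$ and applying $U_{\mathrm{tot}}^\dagger$ gives the stated $poly(\log n)\times f_2+\mc O(n)$. The main obstacle is matching hypotheses: ``high temperature'' for $H$ must be converted into ``high temperature'' for $H^{(c)}$ by tracking how $\max_e\|h_e\|$ and $\dim\mathcal H_v$ feed into the threshold of the classical theorem, and since $H^{(c)}$ need not be translation invariant one must invoke a version of the classical rapid-mixing result (e.g.\ \cite{cesi2001quasi}, or Chapter 9 of \cite{guionnet2003lectures}) that assumes only uniformly bounded finite-range interactions rather than translation invariance.
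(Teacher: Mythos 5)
Your proposal is correct and follows essentially the same route as the paper: the lemma is an instantiation of Theorem~\ref{thm:2local}, obtained by applying the Structure Lemma at each vertex to decouple the commuting edge terms, reading off a $2$-local classical Hamiltonian on the same (2D) interaction graph with constant local dimension, sampling from it classically, and preparing the indexed eigenstate in $\mc O(m)=\mc O(n)$ time. Your packaging of the basis change as a single commuting family of sector-preserving edge unitaries $U_{\mathrm{tot}}$ is just a circuit-level rephrasing of the paper's product-over-edges eigenstate $\ket{\psi(\bb_\bj)}$, and your final caveat about matching the high-temperature thresholds is a point the paper sidesteps by stating the lemma conditionally on the existence of a rapid-mixing classical sampler.
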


Our second contribution is in the low temperature regime. Unlike \cite{kastoryano2016quantum,capel2020modified} which
only work for the high temperature region, our reduction allows us to prepare low-temperature Gibbs states by utilizing
classical techniques such as the Swendsen-Wang algorithm. To the best knowledge of the authors, 
prior work has not addressed low-temperature Gibbs samplers for 2-local CLHs.  
As an example, with our reduction we can obtain the following result,
 \begin{lemma}[Informal version of Lemma \ref{lem:H2D}]
     There is a Gibbs sampling reduction from translation-invariant qubit (2-local) 2D CLHs to 2D Ising model with magnetic fields. Then
     one can prepare the Gibbs state  for the corresponding CLH at low temperature in $poly(n)$ time whenever there are poly-time mixing Gibbs sampler for the corresponding Ising model  at low temperature like \cite{feng2023swendsen}. 
 \end{lemma}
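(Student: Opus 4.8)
The plan is to invoke the general $2$-local reduction (\Cref{thm:intro2local}) and then show that, specialized to translation-invariant (TI) $2$-local qubit Hamiltonians on the $2$D lattice, the classical Hamiltonian it produces is exactly a TI $2$D Ising model with a uniform magnetic field. Recall the mechanism behind \Cref{thm:intro2local}: the Bravyi--Vyalyi Structure Lemma supplies a constant-depth local circuit $V$ and, for each qubit $p$, a decomposition $\mathcal H_p=\bigoplus_{i_p}\bigotimes_{q\sim p}\mathcal H_p^{(i_p,q)}$ into ``sectors'' such that in the basis $V$ each term $h_{pq}$ is block-diagonal in the sector labels $(i_p,i_q)$ and, within a block, acts only on the pair of registers $\mathcal H_p^{(i_p,q)}\otimes\mathcal H_q^{(i_q,p)}$. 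Since every register $\mathcal H_p^{(i_p,q)}$ is touched by exactly one term, all terms in a fixed block $\{i_p\}_p$ act on pairwise-disjoint subsystems, so $e^{-\beta VHV^\dagger}$ is block-diagonal with each block equal to $\bigotimes_{(p,q)}\exp(-\beta h_{pq}^{(i_p,i_q)})$. Hence $\rho_{\beta H}$ is a mixture over block labels $\{i_p\}_p$: the block is sampled with probability $\propto\prod_{(p,q)}\mathrm{tr}[\exp(-\beta h_{pq}^{(i_p,i_q)})]$, and conditioned on it the state is the product of the $\mathcal O(1)$-qubit Gibbs states $\rho(h_{pq}^{(i_p,i_q)},\beta)$. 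The classical Hamiltonian is $H^{(c)}(\{i_p\})=-\tfrac1\beta\sum_{(p,q)}\log\mathrm{tr}[\exp(-\beta h_{pq}^{(i_p,i_q)})]$, which is $2$-local in the qudit variables $i_p$.

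Next I would specialize to qubits. Because $\dim\mathcal H_p=2$ and $\sum_{i_p}\prod_{q\sim p}\dim\mathcal H_p^{(i_p,q)}=2$, each qubit has either a single (degenerate) sector --- in which case $i_p$ carries no information, $p$ interacts with exactly one neighbor, and the interaction graph decomposes into pairs --- or exactly two sectors, each with all registers one-dimensional, in which case $i_p\in\{0,1\}$ is a genuine binary variable and is the only degree of freedom of $p$. Translation invariance of $H$ makes this dichotomy uniform over all qubits, so either $\rho_{\beta H}$ is itself a constant-depth-preparable product state (and the lemma is immediate) or every $i_p$ is binary; in the latter case identify $i_p\leftrightarrow s_p\in\{\pm1\}$. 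Then each $2$-local term $g_{pq}(s_p,s_q):=-\tfrac1\beta\log\mathrm{tr}[\exp(-\beta h_{pq}^{(s_p,s_q)})]$ is a function $\{\pm1\}^2\to\mathbb R$, hence of the form $a_{pq}+b_{pq}s_p+c_{pq}s_q+J_{pq}s_ps_q$; summing over edges and collecting the linear parts at each vertex gives $H^{(c)}=\mathrm{const}+\sum_p h_p s_p+\sum_{(p,q)}J_{pq}s_ps_q$, a $2$D Ising model with a magnetic field. Translation invariance of $H$ transfers to $H^{(c)}$ (the Structure-Lemma decomposition and the operators $h_{pq}$ are the same on every edge up to the lattice symmetry), so the $J_{pq}$ and $h_p$ are uniform, i.e. $H^{(c)}$ is the standard TI $2$D Ising model with a uniform field.

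The Gibbs sampler then runs as follows: (i) use the assumed classical sampler on $(H^{(c)},\beta)$ to draw $\{s_p\}_p$ in time $T=poly(n)$; (ii) for each edge $(p,q)$ prepare the $\mathcal O(1)$-qubit state $\rho(h_{pq}^{(s_p,s_q)},\beta)$ exactly --- e.g.\ by diagonalizing the constant-size matrix $h_{pq}^{(s_p,s_q)}$, sampling its eigenbasis with Gibbs weights, and rotating --- using $\mathcal O(1)$ ancillas and gates; these act on disjoint qubit sets, so this is a depth-$\mathcal O(1)$ circuit on $n$ qubits; (iii) apply $V^\dagger$. One checks that the distribution over blocks produced in (i) matches the true block weights of $\rho_{\beta H}$: this is exactly the identity $\mathrm{tr}[e^{-\beta H_{\{s_p\}}}]=\prod_{(p,q)}\mathrm{tr}[\exp(-\beta h_{pq}^{(s_p,s_q)})]$ (up to fixed dimension factors from the one-dimensional registers, which cancel in the normalization), immediate from disjointness. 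Total cost $T+\mathcal O(n)$. Plugging in any poly-time-mixing low-temperature classical sampler for the $2$D Ising model with field (a Swendsen--Wang-type algorithm along the lines of \cite{feng2023swendsen}) gives a $poly(n)$-time quantum Gibbs sampler for $H$ at low temperature.

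The step that needs care --- and which I expect to be the crux --- is the specialization: verifying rigorously that the Structure-Lemma decomposition of a single qubit yields only a binary (or trivial) classical variable, that translation invariance of $H$ forces the decomposition and hence $H^{(c)}$ to be homogeneous, and that the induced classical interaction lives on the same bounded-degree $2$D lattice with no interactions of degree higher than quadratic. One must also be in a regime where a poly-time classical sampler genuinely exists for the resulting Ising model with field (fields can be handled by a ghost-spin reduction, but the model need not be ferromagnetic), which is why the statement is phrased conditionally on the existence of such a sampler; everything else is bookkeeping on top of \Cref{thm:intro2local}.
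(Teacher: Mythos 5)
Your proposal is correct and follows essentially the same route as the paper: apply the Structure Lemma to a single qubit, observe that $\dim\mathcal H_p=2$ forces the decomposition into either a trivial sector or two one-dimensional sectors (so the Hamiltonian is diagonal in a product basis and the block label is a genuine spin $s_p\in\{\pm1\}$), and then expand the resulting function $\{\pm1\}^2\to\mathbb R$ multilinearly to read off an Ising model with a field, with translation invariance giving uniform coefficients. Two harmless remarks: your ``single sector, interaction graph decomposes into pairs'' branch is actually vacuous on the translation-invariant degree-$4$ lattice (each qubit plays each of the two roles of $h$ twice, so a non-abelian induced algebra would violate commutativity and only the $1$-local degenerate version of that branch survives, exactly the paper's cases (a)--(b)); and your temperature-dependent free-energy Hamiltonian $-\tfrac1\beta\log\mathrm{tr}\exp(-\beta h^{(s_p,s_q)}_{pq})$ collapses to the paper's eigenvalue Hamiltonian because all blocks are one-dimensional.
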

 
A key feature of our reduction is that it is agnostic to the underlying classical Gibbs sampler.  At the critical temperature, when applied to qudit 2-local 2D CLHs with \textit{large} constant qudit dimension $d$,
the Swendsen-Wang algorithm will also mix slowly and have   a spectral gap that is exponentially small in the side length of the lattice~\cite{borgs1999torpid}.
However, our reduction allows us to substitute in other samplers, such as the Gibbs sampler for the low temperature Potts model based on Barvinok's method, which remains poly-time for large $d$~\cite{borgs2020efficient}. 

\paragraph{4-local, 2D commuting Hamiltonian.}
The best prior work is due to \cite{kastoryano2016quantum}, who proved that for high enough temperature, the Davies generator is poly-time mixing for 4-local 2D CLHs. Unlike their result, the mixing time of our algorithm is dependent on the classical Hamiltonian produced by the reduction and thus our results are not directly comparable.

For the \textit{standard} Toric code, a concurrent work \cite{Ding2024PolynomialTimePO} showed that for any inverse
temperature $\beta<+\infty$ (not necessarily constant), Lindbladian dynamics with nonlocal jump operators prepares the
Gibbs state efficiently (for very low temperature, the mixing time is approximately $\mc O(n^3)$).  Our Gibbs sampler is
based on different techniques and gives a $\TCtime$-time  Gibbs state preparation algorithm for the general defected Toric code at any non-zero temperature, where the defected Toric code is the Toric code with arbitrary coefficients.   
Our Gibbs sampler is based on generalizing the standard ground state preparation algorithm for the Toric code (which
measures all stabilizers) to the task of Gibbs state preparation via an oblivious randomized correction technique. We will give a technical overview based on the example of defected Toric code   in Section \ref{sec:TC} and \ref{sec:defected_toric_code}.
We remark  that since our Gibbs sampler is not based on Lindbladian, our results do not offer additional insights into
Lindbladian dynamics, unlike in \cite{Ding2024PolynomialTimePO}. Another related work~\cite{gu2024doped} uses classical Monte Carlo methods to simulate the Gibbs states of t-doped stabilizer Hamiltonians, although without discussing convergence guarantees.

In addition to the defected Toric code, Theorem \ref{thm:intro4local} also works for more general families of qubit CLHs
and can prepare the corresponding Gibbs state  as long as there exists an efficient algorithm for the corresponding classical Gibbs sampling task.

\subsection{Technical overview}\label{sec:TC}
Recall that in \Cref{thm:intro2local}, we construct a Gibbs sampling reduction from 2-local qudit CLHs to 2-local qudit classical Hamiltonians,
\twolocalinformal*
The proof is primarily based on the Structure
Lemma~\cite{bravyi2003commutative,irani2023commuting,aharonov2018complexity,schuch2011complexity}. The Structure Lemma
has been the principle tool in studying the complexity of CLHs and, intuitively, says that one can transform a 2-local
qudit CLH $H^{(2)}$ to a 2-local qudit classical Hamiltonian $H^{(2c)}$ via a constant depth quantum circuit $\mc C_H$.
In other words, there is a \emph{one-to-one} correspondence between the computational basis of the classical Hamiltonian
$H^{(2c)}$ and the eigenstates of the quantum Hamiltonian $H^{(2)}$. By this observation, there is a simple procedure to
sample from the Gibbs state of $H^{(2)}$, i.e. sample an eigenstate of $H^{(2)}$ according to the Gibbs distribution. First, we sample a computational basis state $\ket{\psi}$  from the Gibbs distribution of the classical Hamiltonian $H^{(2c)}$
(via a classical Gibbs sampler). Then, applying a constant depth quantum circuit to $\ket \psi$ yields  an eigenstate of $H^{(2)}$, distributed according to the Gibbs state of $H^{(2)}$. 

For Hamiltonians of higher locality, the exact correspondence present in the 2-local case does not hold. Nonetheless, we
show in \Cref{thm:intro4local} that we can extend our techniques beyond 2-local Hamiltonians. We demonstrate a reduction
from Gibbs sampling of 4-local qubit CLHs in 2D to classical Gibbs sampling.
\fourlocalinformal*

The case ``without classical qubits'' is the simpler setting. Still, even in this case, we can no longer
straightforwardly apply the Structure Lemma as is possible for 2-local Hamiltonians. This is not due to a deficiency in
our techniques; rather, 4-local Hamiltonians can exhibit topological order \cite{kitaev2003fault} and there cannot be a
constant depth quantum circuit $\mc C_H$ as in the previous theorem. However, we observe that the eigenspace of qubit
CLH is symmetric in some sense and via an \emph{oblivious randomized correction} technique, we can adapt an algorithm
for preparing ground state (as given in \cite{aharonov2018complexity}) to preparing a Gibbs state. In particular,
\cite{aharonov2018complexity} proves that any 2D qubit CLH without classical qubits is equivalent to a defected Toric
code permitting boundaries. That is, the ``interior'' terms look like Pauli X or Pauli Z terms and terms on the
``boundary'' have more freedom. The presence of boundaries makes it non-trivial to utilize this equivalence to design a Gibbs sampler for general 2D qubit CLH. We will use the defected Toric code as an example to explain our Gibbs sampler in Section \ref{sec:defected_toric_code}.

When the initial Hamiltonian has classical qubits, the situation becomes more complex, as the connection from
\cite{aharonov2018complexity} between 2D qubit CLH and the Toric code 
only applies when there are no classical qubits. This does not pose a problem in \cite{aharonov2018complexity} as they
simply want to verify ground energy, and an $\NP$ prover can provide a \emph{recursive} restriction of classical qubits
consistent with some ground state. This effectively removes all classical qubits and the resulting Hamiltonian can be
translated into a defected Toric code permitting boundaries.

In our case, we would like to recover the \emph{distribution} over eigenstates, and thus we cannot perform the same
recursive restriction. Additionally, we need our reduction to be efficient and should not depend on the power of an
prover. We develop a propagation lemma to characterize the limits of the recursive restriction. Combined with an
assumption that all fully quantum terms\footnote{Intuitively, this is the set of terms which remains quantum under any
recursive restriction of the classical qubits, see \Cref{def:fully_quantum_term}.} are uniformly correctable (see
\Cref{as:uniformly_correctable}), we will argue that the statement of \cite{aharonov2018complexity} can be modified  to
obtain a Gibbs sampling reduction from any 2D qubit CLH to a constant-locality classical Hamiltonian. 

\tikzset{
    pics/qudits/.style n args={3}{code={
        \node[plaquette,draw=black!40] (q2#3) at ($ (#1,#2) + (.2,.8) $) {$q_2$};
        \node[plaquette,draw=black!40] (q4#3) at ($ (#1,#2) + (.2,.2) $) {$q_4$};
        \node[plaquette,draw=black!40] (q3#3) at ($ (#1,#2) + (.8,.8) $) {$q_3$};
        \node[plaquette,draw=red, line width=0.4mm] (q1#3) at ($ (#1,#2) + (.8,.2) $) {$q_1$};
}}}
\begin{figure}[h!]
    \centering
    \begin{subfigure}[t]{0.3\textwidth}
        \centering
        \begin{tikzpicture}[scale=0.75,every label/.append style={font=\scriptsize}]
            \foreach \x in {0,...,4} {
                \foreach \y in {0,...,4} {
                    \ifthenelse{\equal{\intcalcMod{\x+\y}{2}}{1}}{
                        \draw[fill=black!30] (\x,\y) rectangle ++(1,1);
                    }{
                        \draw (\x,\y) rectangle ++(1,1); 
                    }
                }
            }
            \path (1,3) rectangle ++(1,1) node[fitting node] (ztype) {};
            \path (2,3) rectangle ++(1,1) node[fitting node] (xtype) {};
            \node[label={[label distance=-0.25cm]below left:$Z$}] at (ztype.north east) {};
            \node[label={[label distance=-0.25cm]below right:$Z$}] at (ztype.north west) {};
            \node[label={[label distance=-0.25cm]above right:$Z$}] at (ztype.south west) {};
            \node[label={[label distance=-0.25cm]above left:$Z$}] at (ztype.south east) {};
            
            \node[label={[label distance=-0.25cm]below left:$X$}] at (xtype.north east) {};
            \node[label={[label distance=-0.25cm]below right:$X$}] at (xtype.north west) {};
            \node[label={[label distance=-0.25cm]above right:$X$}] at (xtype.south west) {};
            \node[label={[label distance=-0.25cm]above left:$X$}] at (xtype.south east) {};
        \end{tikzpicture}
        \caption{Partition the plaquettes as Black $\cB$ and $\cW$. Put $Z$ terms on white plaquettes and put $X$ terms on black plaquettes then we get the defected Toric code $H_{DT}$.}
        \label{fig:intro_fig_defected}
    \end{subfigure}
    \quad
    \begin{subfigure}[t]{0.3\textwidth}
        \centering
        \begin{tikzpicture}[
            scale=0.75,
            every node/.append style={font=\tiny},
            every label/.append style={font=\scriptsize},
            qubitlabel/.style={inner sep=0, draw, circle, minimum width=0.2cm, fill=white},    
            plaquette/.style={draw, color=blue, rounded corners, rectangle, minimum width=0.6cm, minimum height=0.6cm,inner sep=0mm}
        ]
            \foreach \x in {0,...,4} {
                \foreach \y in {0,...,4} {
                    \ifthenelse{\equal{\intcalcMod{\x+\y}{2}}{1}}{
                        \draw[fill=black!30] (\x,\y) rectangle ++(1,1);
                    }{
                        \ifthenelse{\equal{\intcalcMod{\y}{2}}{0}}{\draw[pattern=north west lines, pattern color=black] (\x,\y) rectangle ++(1,1);}{}
                    }
                }
            }
            \node[plaquette] (p1) at (1.5,3.5) {$p_1$};
            \node[plaquette] (p2) at (1.5,1.5) {$p_2$};
            \node[plaquette] (p3) at (3.5,1.5) {$p_4$};
            \node[plaquette] (p4) at (3.5,3.5) {$p_3$};

            \path (1,3) rectangle ++(1,1) node[fitting node] (ztype) {};
            \node[qubitlabel] at (ztype.north west) {$u$};
            \node[qubitlabel] at (ztype.north east) {$v$};
            \node[qubitlabel] at (ztype.south west) {$\tau$};
            \node[qubitlabel] at (ztype.south east) {$w$};

            \draw[color=blue, line width=0.5mm] (p1.east) -- (p4.west);
            \draw[color=blue, line width=0.5mm] (p1.south) -- (p2.north);
            \draw[color=blue, line width=0.5mm] (p4.south) -- (p3.north);
            \draw[color=blue, line width=0.5mm] (p2.east) -- (p3.west);
        \end{tikzpicture}
        \caption{Remove the white terms on the odd lines, $H_{DT}$ will become $H_{DT}^{(2)}$ which can be viewed as a qudit 2-local CLH.}
        \label{fig:intro_fig_2_local}
    \end{subfigure}
    \quad
    \begin{subfigure}[t]{0.3\textwidth}
        \centering
        \begin{tikzpicture}[
            scale=0.75,
            every node/.style={transform shape},
            every node/.append style={font=\scriptsize},
            every label/.append style={font=\scriptsize},
            qubitlabel/.style={inner sep=0, draw, circle, minimum width=0.2cm, fill=white},    
            vertexp/.style = {draw, fill=Sepia, circle, minimum width=0.1cm, inner sep=0},
            plaquette/.style={draw, color=blue, rounded corners, rectangle, minimum width=0.8cm, minimum height=0.8cm,inner sep=0mm}
        ]
            \foreach \x in {0,...,4} {
                \foreach \y in {0,...,4} {
                    \ifthenelse{\equal{\intcalcMod{\x+\y}{2}}{1}}{
                        \draw[fill=black!30] (\x,\y) rectangle ++(1,1);
                    }{
                        \ifthenelse{\equal{\intcalcMod{\y}{2}}{0}}{\draw[pattern=north west lines, pattern color=black] (\x,\y) rectangle ++(1,1);}{}
                    }
                }
            }
            \node[plaquette] (p1) at (1.5,3.5) {$p_1$};
            \node[plaquette] (p2) at (1.5,1.5) {$p_2$};
            \node[plaquette] (p3) at (3.5,1.5) {$p_4$};
            \node[plaquette] (p4) at (3.5,3.5) {$p_3$};

            \node[draw=none, fill=white, rounded corners=3, rectangle, minimum width=0.4cm, minimum height=0.4cm,inner sep=0mm] at (2.5,2.5) {$h$};

            \draw[color=blue, line width=0.5mm] (p1.east) -- (p4.west);
            \draw[color=blue, line width=0.5mm] (p1.south) -- (p2.north);
            \draw[color=blue, line width=0.5mm] (p4.south) -- (p3.north);
            \draw[color=blue, line width=0.5mm] (p2.east) -- (p3.west);

            \draw[color=red, line width=0.5mm] (2,3) node[vertexp] {} -- (2,4) node[vertexp] {}-- (2,5) node[vertexp] {};
        \end{tikzpicture}
        \caption{A correction operator $L_h = X^{\otimes 3}$ (in red) for the removed white term $h$.}
        \label{fig:intro_fig_correct}
    \end{subfigure}
    \caption{}
    \label{fig:intro_defected}
\end{figure}

\paragraph{Extensions to more general 2D Hamiltonians} For simplicity, the theorems above are proved in the setting when
the underlying Hamiltonian is placed on a $2D$ lattice. However, in \cite{aharonov2018complexity} the authors consider
a more general setting of Hamiltonians on \emph{polygonal complexes}. A straightforward generalization of our proofs works in this setting as well.

\subsection{Case study: the punctured defected Toric code}
\label{sec:defected_toric_code}

To illustrate our Gibbs sampler for 2D qubit CLH, that is \Cref{thm:intro4local}, we consider the restricted setting of the defected Toric code.  
In the remainder of this section, 
we assume  that  the inverse temperature is a constant $\beta <+\infty$. We will formally define the
defected Toric code and first for the \textit{punctured} Toric code (defined later) we will  give a $\mc O(n^2)$-time algorithm to prepare its Gibbs states via an oblivious randomized
correction idea. The algorithm for Toric code on torus  use similar ideas and is of runtime $\TCtime$, which  is put 
in Appendix \ref{appendix:non_punctured}.
Those algorithms are specific to the defected Toric code. Then we describe a slightly different algorithm
which is not  as fast as the first algorithm, but by using the tools from \cite{aharonov2018complexity} it can be extended to prepare Gibbs state for general qubit 2D CLHs.
 
The  defected Toric code $H_{DT}$ is embedded on a 2D, $L \times L$ square lattice, with qubits placed on the vertices.
Terms are grouped into ``black'' terms $\cB$ and ``white'' terms $\cW$, as in \Cref{fig:intro_fig_defected}. Formally, we define
\begin{align}
		&H_{DT} =\sum_{p\in \cB} c_pX^p + \sum_{p\in \cW} c_pZ^p, \label{eq:HDT}
\end{align}
where $X$ and $Z$ are the standard Pauli $X$ and $Z$ operators. For a given term $p$ acting on qubits $q_1, \dots, q_4$, $X^p$ denotes $X_{q_1} \otimes \dots \otimes X_{q_4}$ (and same for $Z^p$). The coefficients $c_p$ can be any real number (whereas $c_p = -1$ in the standard Toric code).

\subsubsection{\texorpdfstring{A $\mc O(n^2)$-time algorithm for the punctured defected Toric code.}{A O(n2)-time algorithm for the defected Toric code.}}\label{sec:TCnn}

First, we explain the $\mc O(n^2)$-time algorithm to prepare the Gibbs states of the punctured defected Toric code. Specifically,  the defected Toric code is \emph{punctured} with a white term $p_w$ and a black term $p_b$ missing from the Hamiltonian. 
Thus,
for any term $p'$,
there is a correction operator $L_{p'}$,
which anti-commutes with $p'$ and commutes with all other terms.
If $p'$ is a white term, $L_{p'}$ is realized as
a string of Pauli $X$ operators, starting in the support of $p'$ and ending adjacent to the missing white term $p_w$.
Similarly if $p'$ is a black term, $L_{p'}$ is a string of $Z$'s from $p'$ to $p_b$. To prepare the Gibbs states for the punctured
Hamiltonian $H'_{DT}$, we initialize our state as the maximally mixed state, then sequentially measure and correct each
plaquette term $p$ in $H'_{DT}$. That is, if we measure the plaquette operator $p$ and get measurement outcome
$\lambda\in \{+c_p,-c_p\}$, then we perform the following oblivious randomized correction: 
\begin{itemize}
     \item  With probability $prob:=\frac{\exp(-\beta \lambda)}{\exp(\beta \lambda)+\exp(-\beta \lambda)}$ we do nothing.
     \item  With probability $1-prob$ we apply the correction operation $L_{p}$.
 \end{itemize}
The algorithm correctly prepares the Gibbs states of $H'_{DT}$ because $L_p$ \textit{bijectively} maps the eigenspace
associated  with measurement outcome $+c_p$ to that of $-c_p$ and vice versa. A more detailed description of this
procedure and proof of its correctness can be found in \Cref{sec:reduction_to_classial}. Furthermore, the above
process only performs the correction \emph{once} for each plaquette term, and thus should not be interpreted as an
iterative, randomized Accept/Reject process as done in the general Metropolis algorithm.

In the case where the Toric code has no punctured terms—such as the standard Toric code defined on a torus in the error-correcting code literature—a slightly different algorithm is required.  In this case,
correction operators are strings of Paulis between \emph{two} measured terms $p_1, p_2$. The above outline does not work
directly for the following reason. Suppose $p_1, p_2$ are measured and we obtain eigenvalues $(\lambda_1, \lambda_2) \in
\{\pm c_p\}^{\otimes 2}$. Although, as above, we can bijectively map between $(c_p, c_p) \rightarrow (-c_p, -c_p)$ and
for $(c_p, -c_p) \rightarrow (-c_p, c_p)$, the same is not possible for $(c_p, c_p) \rightarrow (c_p, -c_p)$ and vice
versa. 
Instead, by adapting similar ideas, we show that there exists a Gibbs sampling reduction from the standard Toric code to 1D Ising model. Further details are provided in \Cref{appendix:non_punctured}.

\subsubsection{A generalizable algorithm}
For general 2D qubit CLH (without classical qubits), not every plaquette term has a correction operator; in fact, the
presence of a correction operator qualitatively characterizes the correctable interior terms and the
non-correctable boundary terms. The proof that the interior terms are correctable uses techniques from
\cite{aharonov2018complexity}, and the exterior terms are handled by a reduction to classical Gibbs sampling. The
details are as below. For simplicity, we assume in this section that $H_{DT}$ is embedded on a plane rather than torus,
 and the initial boundary of the lattice naturally plays the roles of the puncture terms 
 in $H_{DT}$.

\paragraph{Reduction to Classical Hamiltonian}\label{par:intro_classical_reduction} As in \cite{aharonov2018complexity}
the first step is to remove enough terms such that the resulting Hamiltonian can be viewed as $2$-local. In the case of
$H_{DT}$, we can simply remove alternating rows of white terms, as in \Cref{fig:intro_fig_2_local}. Finally, grouping
all qubits on a white term as a single $2^4$-dimensional qudit, we see that the white terms become $1$-local and the
black terms are all $2$-local. In \Cref{fig:intro_fig_2_local}, we group qubits $u, v, w, \tau$ to form the qudit $p_1$.
Similarly we form the qudits $p_2,p_3,p_4$.  Then, the black term to the right of qudit $p_1$  becomes 2-local, acting on $p_1$ and $p_3$. Call this $2$-local Hamiltonian $H_{DT}^{(2)}$. The Structure Lemma of \cite{bravyi2003commutative} gives a way to transform $H_{DT}^{(2)}$ to a 2-local classical Hamiltonian. By working out the details (see \Cref{par:appendix_classical_ham}), it turns out that in this 2-local classical Hamiltonian we obtain three distinct ``types'' of terms:
\begin{itemize}
    \item $h_\text{vert}$, 2-local terms corresponding to black terms acting on vertically arranged white terms (e.g. between $p_1$ and $p_2$),
    \item $h_\text{horiz}$, 2-local terms corresponding to black terms acting on horizontally arranged white terms (e.g. between $p_1$ and $p_3$), and
    \item $h_w$, 1-local terms corresponding to the white terms.
\end{itemize}
The final classical Hamiltonian is then
\[
H^{(2c)}_{DT} = \sum_{\text{vertical} p_i, p_j} (h_\text{vert})_{p_i,p_j} + \sum_{\text{horizontal} p_i,p_j} (h_\text{horiz})_{p_i,p_j} + \sum_p (h_w)_{p}.
\]

\paragraph{Preparation of Quantum Gibbs State.}\label{par:intro_gibbs_prep} 
So far, we've removed terms from $H_{DT}$ to obtain $H^{(2)}_{DT}$, then argued that we can view this as a classical
Hamiltonian $H^{(2c)}_{DT}$. Assume we are able to perform classical Gibbs
sampling at a given temperature on $H^{(2c)}_{DT}$. To obtain a sampler for our original Hamiltonian, we need to reverse
each step of the reduction. First, since the transformation from $H_{DT}^{(2)}$ to $ H_{DT}^{(2c)}$ is via a low-depth
quantum circuit, we can easily obtain a Gibbs state of $H_{DT}^{(2)}$ from the Gibbs state on $H_{DT}^{(2c)}$. The
primary challenge is to correct for the terms we have removed to make $H_{DT}$ $2$-local. 

Suppose we want to correct for a remove white term $p \in \mc W$. We first measure the current state $\psi$ with respect
to the term $p$. If we only need to obtain some state with the correct eigenvalues, whenever we obtain an incorrect outcome,
we could simply perform the correction operator $L_p$ depicted in \Cref{fig:intro_fig_correct}. To obtain $L_p$, we find
a path from a corner of $p$ to the boundary of the lattice, and apply a Pauli $X$ on each qubit along the path. However,
this does not immediately work when we are trying to sample from the Gibbs distribution.

Denote $\Pi^p_{+c_p} \ket  \psi$ be state if we get measurement outcome $+c_p$ when measuring $p$. Similarly for  $\Pi^p_{-c_p} \ket  \psi$. There are two challenges in preparing the Gibbs state. First, we need to maintain the proper distribution over $\Pi^p_{+c_p} \ket \psi$ and $\Pi^p_{-c_p} \ket \psi$. Second, applying the correction $L_p$ \emph{after measuring} $\Pi^p_{+c_p}$ may not yield $\Pi^p_{-c_p} \ket{\psi}$, i.e.
\begin{equation}
    \label{eq:eigenstate_not_equal}
	 L_{p} \Pi^p_{+c_p} \ket{\phi(\bm{y})} \not\propto \Pi^p_{-c_p}  \ket{\phi(\bm{y})} .
\end{equation}
Nonetheless, we show that this can be done via an \emph{oblivious randomized correction} technique. That is, based on the measurement outcome, we apply the correction operation $L_p$ with some probability $\mu$. At a high level, the correctness of the idea comes from the symmetry of the eigenspace; despite \Cref{eq:eigenstate_not_equal}, we \emph{do have} that 
\[
    L_p \Pi^p_{+c_p} \Pi_{\lambda(\psi)} \Pi^p_{+c_p} L_p = \Pi^p_{-c_p} \Pi_{\lambda(\psi)} \Pi^p_{-c_p}
\]
where $\Pi_{\lambda(\psi)}$ is an eigenspace of the \emph{non-removed} operators corresponding to $\ket \psi$. We will
leverage this fact by applying a correction uniformly across this eigenspace, and the correction probability $\mu$ will depend only $\Pi_{\lambda(\psi)}$ rather than $\ket \psi$ itself.

\subsection{Conclusion and future work}

In this manuscript, we give a reduction from Gibbs state preparation for various families of CLHs to the task of Gibbs
sampling for classical Hamiltonians. In particular, based on the Structure Lemma we show that there is a Gibbs sampling
reduction from 2-local qudit CLHs to 2-local qudit classical Hamiltonians. Based on the symmetry in qubit CLH and the
idea of oblivious randomized correction we give a Gibbs sampling reduction from various 2D qubit CLH to qudit classical
Hamiltonians. This approach yields a Gibbs sampler based on techniques very different than those traditionally used,
such as analyzing the Davies generator. We also demonstrate that combined with existing fast mixing results for
classical Hamiltonians, our Gibbs sampler matches the performance of state-of-the-art results in
\cite{capel2020modified,bardet2023rapid,kastoryano2016quantum}. It will be interesting to apply our framework to more examples.

A natural direction to explore is whether our reduction can be generalized to other CLHs, especially those for which the
complexity of proving ground energy is in $\NP$, such as the factorized qudit CLH on 2D
lattice~\cite{irani2023commuting}, the factorized CLH on any geometry~\cite{bravyi2003commutative} and the qutrit CLH on
2D~\cite{irani2023commuting}.

Our work also gives an interesting characterization for the complexity of quantum approximate counting for specific
CLHs. It is well-known that classical approximate counting is in $\BPP^{\NP}$~\cite{stockmeyer1983complexity}. Due to
the connection between quantum approximate counting and the Gibbs state preparation~\cite{bravyi2021complexity}, our
work shows that quantum approximate counting for various CLHs is contained in the complexity class $\BQP^{\textsf{CS}}$,
where $\textsf{CS}$ is an oracle which can perform arbitrary classical Gibbs sampling. It would be interesting to
explore whether there exist other families of quantum Hamiltonians where one can also upper bound the complexity of
quantum approximate counting by $\BQP^{\cO}$ for some oracle $\cO$ which is weaker than quantum approximate counting.

\section{Preliminary}\label{sec:pre}
\subsection{Notation} 
For two operators $h$ and $h'$, we use $[h,h']$ to denote the commutator $hh'-h'h$. We say that
 $h$ and $h'$ commutes if $[h,h']=0$. Given two $n$-qubit quantum states $\rho$ and $\sigma$, we use
 $\|\rho-\sigma\|_1:= \frac{1}{2}\text{tr}(|\rho-\sigma|)$ to denote their trace distance. Given two probability distributions $\cD_1$ and $\cD_2$ over  $\{0,1\}^n$,  we use $\|\cD_1-\cD_2\|_1$ to denote the total variation distance, that is $\|\cD_1-\cD_2\|_1 = \frac{1}{2}\sum_{x\in\{0,1\}^n} |\cD_1(x)-\cD_2(x)|$, where $\cD_i(x)$ is the probability of sampling $x$ in distribution $\cD_i$.
Given a graph $G=(V,E)$, for any vertex $v\in V$, we use $N(v)$ to denote the set of vertices which are adjacent to $v$ (excluding $v$). For $v,w\in V$, we use $\{v,w\}$ and $\langle v, w \rangle$ for unordered set and ordered set respectively. For a positive integer $m\in \bN$, we use 
$[m]$ to denote $\{1,2...,m\}.$

\subsection{Formal problem definitions}
\label{sec:def} 
 
\noindent\textbf{$k$-local Hamiltonians.}
We say an $n$-qudit Hermitian operator $H$ is a $k$-local Hamiltonian, if 
 $H=\sum_{i=1}^{m} h_i$ for $m=poly(n)$, and
 each $h_i$ only acts non-trivially on at most $k$ qudits. We allow different qudits to have different dimensions. 
 
For the special case of    $k=2$, one can  define $H$ via a graph $G=(V,E)$. That is, on each vertex there is a qudit, and on each edge $\{v,w\}$ there is a Hermitian term $h^{vw}$, such that
$$H_G=\sum_{\{v,w\}\in E} h^{vw}.$$
\vspace{0.5em}

\begin{figure}[h!]
    \centering
    \begin{subfigure}[t]{0.27\textwidth}
        \centering
        \begin{tikzpicture}[
            scale=0.75,
            every node/.style={transform shape},
            qubitlabel/.style={inner sep=0, draw, circle, minimum width=0.3cm, fill=white},    
        ]
            \foreach \x in {0,1,2} {
                \foreach \y in {0,1,2} {
                    \ifthenelse{\equal{\intcalcMod{\x+\y}{2}}{0}}{
                        \draw[fill=black!30] (\x,\y) rectangle ++(1,1);
                    }{
                        \draw (\x,\y) rectangle ++(1,1); 
                    }
                }
            }
            \node[qubitlabel] at (2,2) {\scriptsize $q$};
            \node[draw=none,fill=white, rounded corners, rectangle, inner sep=1mm] at (1.5,1.5) {\large $P$};
        \end{tikzpicture}
        \caption{2D Hamiltonian}
        \label{fig:2d_ham}
    \end{subfigure}
    \qquad
    \begin{subfigure}[t]{0.27\textwidth}
        \centering
        \begin{tikzpicture}[
            scale=0.75,
            every node/.style={transform shape},
            qubitlabel/.style={inner sep=0, draw, circle, minimum width=0.3cm, fill=white},    
        ]
            \foreach \x in {0,1,2} {
                \foreach \y in {0,1,2} {
                    \ifthenelse{\equal{\intcalcMod{\x+\y}{2}}{0}}{
                        \draw[fill=black!30] (\x,\y) rectangle ++(1,1) node[fitting node] (n\x\y) {};
                        \node[draw=none,fill=black!10, rounded corners=1, rectangle, inner sep=0.5mm] at ($ (n\x\y.north east) + (-0.25,-0.25) $) {\scriptsize $X$};
                        \node[draw=none,fill=black!10, rounded corners=1, rectangle, inner sep=0.5mm] at ($ (n\x\y.north west) + (0.25,-0.25) $) {\scriptsize $X$};
                        \node[draw=none,fill=black!10, rounded corners=1, rectangle, inner sep=0.5mm] at ($ (n\x\y.south east) + (-0.25,0.25) $) {\scriptsize $X$};
                        \node[draw=none,fill=black!10, rounded corners=1, rectangle, inner sep=0.5mm] at ($ (n\x\y.south west) + (0.25,0.25) $) {\scriptsize $X$};
                    }{
                        \draw (\x,\y) rectangle ++(1,1) node[fitting node] (m\x\y) {}; 
                        \node at ($ (m\x\y.north east) + (-0.25,-0.25) $) {\scriptsize $Z$};
                        \node at ($ (m\x\y.north west) + (0.25,-0.25) $) {\scriptsize $Z$};
                        \node at ($ (m\x\y.south east) + (-0.25,0.25) $) {\scriptsize $Z$};
                        \node at ($ (m\x\y.south west) + (0.25,0.25) $) {\scriptsize $Z$};
                    }
                }
            }
            
        \end{tikzpicture}
        \caption{Toric code}
        \label{fig:toric_code}
    \end{subfigure}
    \qquad
    \begin{subfigure}[t]{0.27\textwidth}
        \centering
        \begin{tikzpicture}[
            scale=0.75,
            every node/.style={transform shape},
            qubitlabel/.style={inner sep=0, draw, circle, minimum width=0.3cm, fill=white},    
        ]
            \foreach \x in {0,...,3} {
                \foreach \y in {0,...,3} {
                    \draw[color=black!70] (0,\y) -- (3,\y);
                    \draw[color=black!70] (\x,0) -- (\x,3);
                }
            }
            \draw (1,1) rectangle ++(1,1) node[fitting node] (c) {}; 
            \node[draw, circle, minimum width=0.2cm, fill, inner sep=0, label={[label distance=-0.1cm]below right:$Z$}] at (c.north west) {};
            \node[draw, circle, minimum width=0.2cm, fill, inner sep=0, label={[label distance=-0.1cm]above right:$Z$}] at (c.south west) {};
        \end{tikzpicture}
        \caption{2D Ising Model}
        \label{fig:ising_model}
    \end{subfigure}
    \caption{}
\end{figure}

\paragraph{2D Hamiltonians.}
Consider a 2D  lattice $G=(V,E)$ as in \Cref{fig:2d_ham}, where qudits are placed on vertices. As above, we can define a $2$-local Hamiltonian on $G$ via
$H_G = \sum_{\{v,w\} \in E} h^{v,w}$. We can also define a $4$-local Hamiltonian over the 2D lattice by associating a
Hermitian term to each plaquette $P$. We use  $v\in P$ to
denote  that vertex $v$ is in the  plaquette $P$. With some abuse of notations, we also use $v$ and $P$ to denote the
corresponding qubit and the Hermitian term. The 2D, 4-local Hamiltonian on the lattice is given by
\[H = \sum_P P\,.\]

  For many of the proofs in this work, it will useful to partition the plaquette terms $\{P\}_P$ into a set of ``black'' terms $\cB$
  and ``white'' terms $\cW$ by viewing the 2D lattice as a chess board, as in \Cref{fig:2d_ham}. 
  

There is another natural notion of a 2D local Hamiltonian where the qudits are placed on the edges and Hermitian terms are corresponding to plaquettes and stars; this is the version which is primarily considered in \cite{aharonov2018complexity}. However, the two settings (where qudits are on vertices or are on edges) are equivalent when the underlying graph is the 2D square lattice (See Appendix C in \cite{irani2023commuting}).

\paragraph{Commuting and classical Hamiltonians.} We say a  $k$-local Hamiltonian $H = \sum_{i=1}^m h_i$ is a
\emph{commuting local Hamiltonian} (CLH) if $[h_i,h_j]=0,\forall i,j$. Whenever we have a Hamiltonian $H=\sum_P P$
defined over the plaquettes of a 2D lattice, we have $[p,p']=0$ $\forall p,p'$. 
We say a  $k$-local Hamiltonian $H = \sum_{i=1}^m h_i$  is \textit{classical}, if each $h_i$ is diagonalized in the computational basis.  

\paragraph{Defected Toric code and 2D Ising model.} Here we give two examples of qubit CLHs on 2D.  For a vertex $v$ in plaquette $p$, we use $Z^p_v,X^p_v$ to denote the Pauli Z and Pauli X operator on the qubit $v$. When $v$ uniquely identifies a vertex we abbreviate $Z^p_v$ as $Z_v$ and similarly for other Pauli operators. For a plaquette term $p$, we define $Z^p:= \otimes_{v\in p} Z_v$ and $X^p$ similarly. 

 As shown in \Cref{fig:toric_code},
 the \textit{defected Toric code} is defined as 
 $$H=\sum_{p\in \cW} c_p Z^p + \sum_{p\in \cB} c_p X^p, $$ where $c_p\in \bR$ can be arbitrary. 
  The \textit{standard Toric code} is a special case when all $c_p=-1$. 
 
  Denote the 2D lattice as $G=(V,E)$. As in \Cref{fig:ising_model},
the  \textit{ferromagnetic 2D Ising model} is a 2-local Hamiltonian 
$$H=\sum_{\{u,v\}\in E} Z_u\otimes Z_v.$$

The 2D Ising model is a classical Hamiltonian whose eigenstates are all computational basis. The defected Toric code is not a classical Hamiltonian. The ground state of the standard Toric code is highly entangled and cannot be prepared by any constant depth quantum circuit.

\paragraph{Gibbs state  preparation.} Given a $k$-local Hamiltonian $H=\sum_i h_i$,  an inverse temperature
$\beta<+\infty$, the \textit{Gibbs state} with respect to $\left(H,\beta\right)$ is defined as 
\begin{align}
    \rho( H,\beta)=\frac 1 {\text{tr}(\exp(-\beta H))}\exp(-\beta H).
\end{align}
 Given $\epsilon> 0$, we say an algorithm $\cA$  prepares $\rho( H,\beta)$ within precision $\epsilon$, if $\cA$ outputs a state $\rho$ such that
\begin{align}
	\|\rho-\rho( H,\beta)\|_1 \leq \epsilon. \label{eq:rho}
\end{align}
When $H$ is classical,  the \textit{classical Gibbs distribution} with respect to $\left(H,\beta\right)$ is denoted as
$\cD_{\beta H}$, which samples a classical string $x\in\{0,1\}^n$ with probability $\exp(-\beta \langle x|H|x\rangle)/
\text{tr}(\exp(-\beta H))$. We say an algorithm $\cA$ performs classical Gibbs sampling $\cD_{\beta H}$ with precision $\epsilon$, if $\cA$ outputs a distribution $\cD$ such that   
\begin{align}
	\|\cD-\cD_{\beta H}\|_1\leq \epsilon. \label{eq:D}
\end{align}
Note that Eq.~(\ref{eq:D}) is equivalent to Eq.~(\ref{eq:rho}) when $\rho$ and $\rho(H, \beta)$ are diagonal matrices.

\section{Reduction for 2-local qudit CLHs}\label{sec:2local}

In this section, we prepare the Gibbs state  for qudit  2-local  CLHs. In particular,  in Section \ref{sec:2local_general} we will prove the Gibbs sampling reduction for general 2-local qudit CLHs in  Theorem \ref{thm:2local}.  Then in Section \ref{sec:2local_specific} we will give several examples, whose proofs are put into Appendix \ref{appendix:specific}.

\subsection{General case}\label{sec:2local_general}
Recall that a  2-local CLHs  $H_G^{(2)}$ is defined on a graph $G=(V,E)$, where 
$$H_G^{(2)}=\sum_{\{v,w\}\in E} h^{vw} .$$ Here $\{h^{vw}\}_{\{v,w\}\in E}$ are Hermitian terms and commute with each
other. The superscript $(2)$ is to emphasize that the Hamiltonian is  2-local. 
In this section, we assume on each vertex there is a \textit{qudit} rather than a qubit, and we allow $G$ to be an
arbitrary graph rather than just a 2D lattice. 

The Gibbs sampling reduction for 2-local CLHs comes from  the  Structure Lemma, which was originally developed by
\cite{bravyi2003commutative} to study the computational complexity of commuting Hamiltonians.  A constructive proof of
the Structure Lemma can be found in Section  7.3 of \cite{gharibian2015quantum}.
Intuitively, the Structure Lemma says that one can decouple all commuting 2-local terms. This will allow us to identify
eigenstates of a 2-local CLH $H_G^{(2)}$ with the computational basis of a classical Hamiltonian $H_G^{(2c)}$ defined
later. In addition, each such eigenstate can be prepared by a constant depth quantum circuit. Thus, to prepare the Gibbs
state of $H_G^{(2)}$, it suffices to first do classical Gibbs sampling for $H_G^{(2c)}$, yielding a distribution over
computational bassi states, then prepare the corresponding eigenstate of $H_G^{(2)}$ indexed by a sampled basis state
via a constant depth quantum circuit.

We first give the formal statement of the Structure Lemma. 

\begin{lemma}[Rephrasing of the Structure Lemma~\cite{bravyi2003commutative}]\label{lem:Cstructure}
 Consider a vertex $v\in V$ and denote the Hilbert space of the qudit on $v$ as $\cH^v$.  Consider the commuting Hermitian terms $\{h^{vw}\}_{w\in N(v)}$. There exists a direct sum decomposition of $\cH^v$, 
 \begin{align}	
 \cH^v=\bigoplus_{j_v=1}^{J_v} \cH^v_{j_v},
\end{align}
such that $\forall j_v$, for any $w\in N(v)$, the term $h^{vw}$ keeps the subspace $\cH^v_{j_v}\otimes\cH^w$ invariant. Furthermore, each $\cH^{v}_{j_v}$ has  a tensor product factorization: 
\begin{align}
	\cH_{j_v}^v = \bigotimes_{w\in N(v)} \cH_{j_v}^{\langle v,w \rangle}, \label{eq:24}
\end{align}
such that for all neighbors $w\in N(v)$, the term $h^{vw}|_{j_v}$  (the restriction of $h^{vw}$ onto $\cH^v_{j_v}\otimes \cH^w$) 
 acts non-trivially only on $\cH_{j_v}^{\langle v,w \rangle}\otimes \cH^w$, i.e.
  \begin{align}
  	h^{vw}|_{j_v} \subseteq 	\left(   \bigotimes_{u\in N(v)/\{w\}} \cI\left(\cH_{j_v}^{\langle v,u\rangle}\right)  \right)
    \otimes  \cL \left(\cH_{j_v}^{\langle v,w\rangle}\otimes \cH^{w}\right),
  \end{align}
  where $\cI(\cH)$ is the identity operator on space $\cH$, and $\cL(\cH)$ is the set of all linear operators on $\cH$. 
\end{lemma}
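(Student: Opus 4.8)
The plan is to recast the statement in terms of the $*$-algebras that the interaction terms generate on the single qudit $\cH^v$, and then to apply the structure theory of finite-dimensional $C^*$-algebras. First I would, for each neighbor $w\in N(v)$, write $h^{vw}=\sum_k A^{vw}_k\otimes B^{vw}_k$ with $A^{vw}_k\in\cL(\cH^v)$ and $B^{vw}_k\in\cL(\cH^w)$, choosing the decomposition so that $\{B^{vw}_k\}_k$ is linearly independent (a minimal operator-Schmidt decomposition; since $h^{vw}$ is Hermitian one may take every $A^{vw}_k$ and $B^{vw}_k$ Hermitian). Let $\mathcal{A}_w\subseteq\cL(\cH^v)$ be the unital $*$-subalgebra generated by $\{A^{vw}_k\}_k$.

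The first key point is that the family $\{\mathcal{A}_w\}_{w\in N(v)}$ consists of pairwise commuting subalgebras of $\cL(\cH^v)$, and this is where commutativity of the Hamiltonian enters. Expanding $[h^{vw},h^{vw'}]=0$ as an identity of operators on $\cH^v\otimes\cH^w\otimes\cH^{w'}$ gives $\sum_{k,l}[A^{vw}_k,A^{vw'}_l]\otimes B^{vw}_k\otimes B^{vw'}_l=0$, and since $\{B^{vw}_k\otimes B^{vw'}_l\}_{k,l}$ is linearly independent (the two tensor factors live on disjoint registers $w\neq w'$), we conclude $[A^{vw}_k,A^{vw'}_l]=0$ for all $k,l$; hence the generating sets commute, and therefore so do the algebras $\mathcal{A}_w$ and $\mathcal{A}_{w'}$.

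Next I would extract the direct-sum decomposition from a center. Let $\mathcal{A}=\bigvee_{w\in N(v)}\mathcal{A}_w$ and let $\{P_{j_v}\}_{j_v=1}^{J_v}$ be the minimal projectors of the commutative algebra $\mathcal{Z}(\mathcal{A})$; set $\cH^v_{j_v}:=P_{j_v}\cH^v$. Each $P_{j_v}$ is central in $\mathcal{A}$, hence commutes with every $A^{vw}_k$, hence $h^{vw}$ preserves $\cH^v_{j_v}\otimes\cH^w$ — this is the first assertion of the lemma. Restricting to a block, $\mathcal{A}|_{j_v}$ is a factor, so $\cH^v_{j_v}$ splits as a tensor product $\mathcal{K}_{j_v}\otimes\mathcal{K}'_{j_v}$ on which $\mathcal{A}|_{j_v}=\cL(\mathcal{K}_{j_v})\otimes\cI(\mathcal{K}'_{j_v})$, with $\mathcal{K}'_{j_v}$ a possible passive ``spectator'' factor untouched by all terms. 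On the active factor $\mathcal{K}_{j_v}$ the algebras $\mathcal{A}_w|_{j_v}$ are pairwise commuting, have trivial center (the center of $\mathcal{A}_w|_{j_v}$ commutes with $\mathcal{A}_w|_{j_v}$ by definition and with every $\mathcal{A}_{w'}|_{j_v}$ by Step 2, hence is scalar), and jointly generate the whole factor $\cL(\mathcal{K}_{j_v})$. A routine induction on $|N(v)|$ — peel off a factor $\cH^{\langle v,w_1\rangle}_{j_v}$ on which $\mathcal{A}_{w_1}|_{j_v}$ acts as all of $\cL(\cH^{\langle v,w_1\rangle}_{j_v})\otimes\cI$, observe that the remaining algebras lie in its commutant, and check by a dimension count that they still generate the whole commutant — yields $\cH^v_{j_v}=\bigotimes_{w\in N(v)}\cH^{\langle v,w\rangle}_{j_v}$ (absorbing $\mathcal{K}'_{j_v}$ into any one factor) with $\mathcal{A}_w|_{j_v}$ supported on the $w$-th tensor slot. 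Since $h^{vw}|_{j_v}=\sum_k A^{vw}_k|_{j_v}\otimes B^{vw}_k$ with $A^{vw}_k|_{j_v}\in\mathcal{A}_w|_{j_v}$, this gives exactly the claimed localization of $h^{vw}|_{j_v}$ to $\cH^{\langle v,w\rangle}_{j_v}\otimes\cH^w$.

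The routine ingredients are the operator-Schmidt decomposition and the two standard $C^*$-algebra facts (a finite-dimensional $*$-algebra is a direct sum of matrix algebras; a factor acting on a space induces a tensor factorization of that space). The hard part will be the inductive tensor factorization on each block $\cH^v_{j_v}$: one has to verify carefully that at each stage the remaining pairwise-commuting factors fill the residual Hilbert space exactly — rather than only generating a proper subalgebra of $\cL$ of the residual space — and to keep track of the spectator factors so that the final identity $\cH^v_{j_v}=\bigotimes_{w\in N(v)}\cH^{\langle v,w\rangle}_{j_v}$ holds with equality rather than mere inclusion.
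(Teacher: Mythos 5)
Your proof is correct, and it is essentially the standard Bravyi--Vyalyi argument that the paper does not reprove but cites (a constructive proof is referenced to Section~7.3 of \cite{gharibian2015quantum}): one shows the induced algebras $\{\mathcal{A}_w\}_{w\in N(v)}$ pairwise commute via linear independence in the operator-Schmidt decomposition, decomposes $\cH^v$ along the minimal central projections of the jointly generated algebra, and iteratively applies the single-algebra Structure Lemma (the paper's Lemma~\ref{lem:structure}, which it notes is equivalent to Lemma~\ref{lem:Cstructure}) to peel off one tensor factor per neighbor. The only point to make fully explicit is the inductive step you already flag --- that after extracting $\cL(\cH^{\langle v,w_1\rangle}_{j_v})\otimes\cI$ the remaining commuting factors generate the \emph{entire} commutant $\cL(\mathcal R)$, which follows because restriction to a central block is a $*$-homomorphism, so the $\mathcal{A}_w|_{j_v}$ jointly generate all of $\cL(\mathcal K_{j_v})\otimes\cI(\mathcal K'_{j_v})$ --- and this closes as you describe.
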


The Structure Lemma can be understood via \Cref{fig:2localTP}. We can understand \Cref{eq:24} as the following: by
choosing a proper local basis for the Hilbert space $\cH_{j_v}^v$, it is equivalent to the Hilbert space of $|N(v)|$ distinct new qudits. 

\begin{figure}[h!]
    \centering
    \scalebox{1.3}{
    \begin{tikzpicture}[
        every node/.append style={inner sep=0},
    ]
        \node[draw, circle, minimum width=1cm, label={[label distance=0.1cm]above:$j_v$}] (v) at (0,0) {$v$};
        \node[draw, circle, minimum width=1cm, label={[label distance=0.1cm]above:$j_w$}] (w) at (2,0) {$w$};
        \node[circle, fill, minimum width=0.15cm] (vw) at ($ (v.east) + (-0.2,0) $) {};
        \node[circle, fill, minimum width=0.15cm] at ($ (v.north west) + (0.2,-0.2) $) {};
        \node[circle, fill, minimum width=0.15cm] at ($ (v.south) + (0,0.2) $) {};
        \node[circle, fill, minimum width=0.15cm] (wv) at ($ (w.west) + (0.2,0) $) {};
        \node[circle, fill, minimum width=0.15cm] at ($ (w.north) + (0,-0.2) $) {};
        \node[circle, fill, minimum width=0.15cm] at ($ (w.south) + (0,0.2) $) {};
        \draw (vw) -- (wv);
        \node (vwlabel) at ($ (vw) + (0.1,-1) $) {\footnotesize $\mathcal H^{\langle v, w\rangle}_{j_v}$};
        \node at ($ (vwlabel.south) + (0,-0.4)$) {\scriptsize $b^{\langle v,w \rangle}_{j_v}$};
        \draw[->, line width=0.2mm] (vwlabel) -- ($ (vw) + (0,-0.2) $);
        \node (wvlabel) at ($ (wv) + (-0.1,-1) $) {\footnotesize $\mathcal H^{\langle w, v\rangle}_{j_w}$};
        \node at ($ (wvlabel.south) + (0,-0.4)$) {\scriptsize $b^{\langle w,v \rangle}_{j_w}$};
        \draw[->, line width=0.2mm] (wvlabel) -- ($ (wv) + (0,-0.2) $);
        
    \end{tikzpicture}
    }
    \captionsetup{width=0.9\textwidth}
    \caption{ The figure contains two vertex $v$ (on the left) and $w$ (on the right). The Structure Lemma says that after choosing subspace $j_v$ for $\cH^{v}$ and $j_w$ for $\cH^{w}$, all terms  are decoupled, that is they act on different qudits. More specifically,  
    the qudits on $v$ (the big \scalebox{1.5}{$\circ$}) can be interpreted as the tensor product of several qudits of
    smaller dimension (the small $\bullet$). Similarly for the qudit on $w$. The term $h^{vw}$ only acts on the qudits
    correspond to $\{v,w\}$, that is $\cH_{j_v}^{\langle v,w \rangle}$ and  $\cH_{j_w}^{\langle w,v\rangle}$, which are associated with edge $\{v,w\}$ and are not  touched by any other terms. $b^{\langle v,w \rangle}_{j_v}$ and $b^{\langle w,v \rangle}_{j_w}$ are notations for the computational basis states for $\cH^{\langle v,w \rangle}_{j_v}$ and $\cH^{\langle w,v \rangle}_{j_w}$ respectively.
  }
    \label{fig:2localTP}
\end{figure}
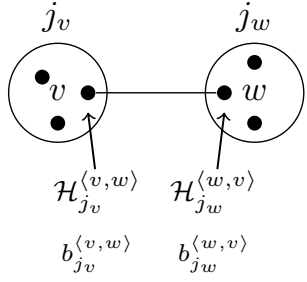

 If for every qudit $v$, one applies 
 \Cref{lem:Cstructure} and chooses an index $j_v\in [J_v]$ and corresponding subspace $\cH^v_{j_v}$, this will decouple
 all terms in $H_G$. Each term $h^{vw}$ restricted to the subspaces $\cH^v_{j_v} \otimes \cH^v_{j_w}$ will act on
 distinct qudits and  
\[
h^{vw} \in \cL\left(\cH_{j_v}^{\langle v,w\rangle}\otimes \cH_{j_w}^{\langle w,v\rangle}\right).
\]

To define the corresponding classical Hamiltonian $H_G^{(2c)}$, we use the indices $\{j_v\}_{v \in V}$ to index the eigenstates of $H_G^{(2)}$. Denote
\[
 h^{vw}|_{j_vj_w}:=  \text{ restriction of $h^{vw}$ onto $\cH_{j_v}^{\langle v, w \rangle} \otimes \cH_{j_w}^{\langle w,v \rangle}.$}
\]
Note that eigenstates of $ h^{vw}|_{j_vj_w}$ might not be computational basis states (and in particular could be
entangled). Nonetheless, we have shown that under the restriction corresponding to $\{j_v\}_{v \in V}$, all terms $h^{vw}|_{j_v,
j_w}$ act on distinct qudits and we can use the computational basis to index the eigenstates. 
  As shown in Figure \ref{fig:2localTP}, let $D^{\langle v, w \rangle}_{j_v} :=  \dim (\cH^{\langle v, w
  \rangle}_{j_v})$ and write the basis of each subspace $\cH_{j_v}^{\langle v, w \rangle}$ as $\ket{b_{j_v}^{\langle v, w \rangle}}$
  where $b_{j_v}^{\langle v, w \rangle}$ ranges over $[D_{j_v}^{\langle v, w \rangle}]$. Then $\bigotimes_{w\in N(v)}
  \ket{b_{j_v}^{\langle v, w \rangle}}$ is a computational basis state of $\cH_{j_v}^{v}$. 
For each edge $\{v,w\}\in E$, the term $h^{vw}|_{j_vj_w}$ is Hermitian and thus can be diagonalized; the computational
basis states $\ket{\bb_{j_vj_w}^{vw}}$ are used to index the eigenstates. Thus, a basis for the full eigenspace of
$h^{v,w}|_{j_v,j_w}$ is given by
    \begin{equation}
        \ket{\bb_{j_vj_w}^{vw}} : = \ket{b_{j_v}^{\langle v, w \rangle},b_{j_w}^{\langle w,v \rangle}},\quad \bb^{v,w}_{j_v
        j_w} \in [D^{\langle v,w \rangle}_{j_v} \times D^{\langle w,v \rangle}_{j,w}]\,.
    \end{equation}
 
    Given an index $\bb^{v,w}_{j_v j_w}$, the corresponding eigenstate is denoted $\psi(\bb_{j_v j_w}^{vw})$ and the
    eigenvalue $\lambda(\bb_{j_v j_w}^{vw})$. The classical Hamiltonian is defined by substituting the eigenstate with its index, that is 
     \begin{align}
        \label{eq:classical_ham}
     	H_G^{(2c)} :=  \sum_{\{v,w\}\in E} \quad  \sum_{j_v,j_w} \quad \sum_{\bb_{j_vj_w}^{vw}}  	\lambda(\bb_{j_vj_w}^{vw}) \ket{\bb_{j_vj_w}^{vw}} \bra{\bb_{j_vj_w}^{vw}}.
     \end{align}  
     Following the usual convention, each term in the summand is implicitly padded with identities as necessary.

 In this way, the eigenstates of $H^{(2c)}_G$ are given by specifying an index $j_v$ for each vertex $v \in V$, then a
 basis state $b^{\langle v, w \rangle}_{j_v j_w}$ for each of the decoupled Hilbert spaces $\cH^{\langle v,
 w\rangle}_{j_v} \otimes \cH^{\langle w,v\rangle}_{j_w}$
     \begin{align}
             &\bj:=\{j_v\}_v,\label{eq:14}\\
             & \bb_\bj:=\{b_{\bj_v, \bj_w}^{\langle v, w \rangle}\}_{\langle v, w \rangle}.
    \end{align}
Thus, by construction, we have the following.
\begin{lemma}
$H_G^{(2c)}$ is 2-local classical Hamiltonian on the graph $G$.	
\end{lemma}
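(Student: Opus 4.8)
The claim is essentially immediate from the construction in \eqref{eq:classical_ham}; the plan is to unpack that definition term by term and check the three things the statement asserts — that $H_G^{(2c)}$ is diagonal in the computational basis (classical), that each of its terms is supported on a single edge of $G$ (2-local on $G$), and that it has polynomially many terms. First I would pin down what ``the qudit on vertex $v$'' and ``the computational basis'' mean after the Structure Lemma: by \Cref{lem:Cstructure}, after a change of basis on $\cH^v$ we have $\cH^v=\bigoplus_{j_v=1}^{J_v}\bigotimes_{w\in N(v)}\cH^{\langle v,w\rangle}_{j_v}$, and I fix once and for all the bases $\{\ket{b^{\langle v,w\rangle}_{j_v}}:b^{\langle v,w\rangle}_{j_v}\in[D^{\langle v,w\rangle}_{j_v}]\}$ of the edge-factors supplied by the lemma. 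The computational basis of $\cH^v$ is then the family of product states $\bigotimes_{w\in N(v)}\ket{b^{\langle v,w\rangle}_{j_v}}$ as $j_v$ ranges over $[J_v]$ and each $b^{\langle v,w\rangle}_{j_v}$ over $[D^{\langle v,w\rangle}_{j_v}]$; this is an orthonormal basis of $\cH^v$ of the correct cardinality, and it is this fixed basis that turns $\cH^v$ into the ``qudit on $v$'' for $H_G^{(2c)}$. In particular $\ket{\bb^{vw}_{j_vj_w}}=\ket{b^{\langle v,w\rangle}_{j_v}}\otimes\ket{b^{\langle w,v\rangle}_{j_w}}$ is a \emph{single} computational basis state of $\cH^v\otimes\cH^w$ — note that it is not the eigenstate $\psi(\bb^{vw}_{j_vj_w})$ of $h^{vw}|_{j_vj_w}$ (which may well be entangled); that is irrelevant for this lemma, since \eqref{eq:classical_ham} only ever uses the eigen\emph{value} $\lambda(\bb^{vw}_{j_vj_w})$ together with the computational-basis projector.

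For classicality, I would observe that each summand of \eqref{eq:classical_ham}, after applying the stated padding convention, is $\lambda(\bb^{vw}_{j_vj_w})$ times the rank-one projector onto the computational basis state $\ket{\bb^{vw}_{j_vj_w}}$ (padded with identity on the remaining edge-factors inside the block $(j_v,j_w)$ and with zero on every other block). Such an operator is diagonal in the computational basis, hence so is the finite sum $H_G^{(2c)}$, which is exactly the definition of a classical Hamiltonian.

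For 2-locality on $G$, I would fix an edge $\{v,w\}\in E$ together with $j_v,j_w$ and the index $\bb^{vw}_{j_vj_w}=(b^{\langle v,w\rangle}_{j_v},b^{\langle w,v\rangle}_{j_w})$, and verify that the padded summand equals $A^v_{j_v,b^{\langle v,w\rangle}_{j_v}}\otimes B^w_{j_w,b^{\langle w,v\rangle}_{j_w}}\otimes\bigotimes_{u\notin\{v,w\}}\cI(\cH^u)$, where $A^v_{j_v,b}$ is the operator on $\cH^v$ that, restricted to the block $\cH^v_{j_v}$, acts as $\ketbra{b}{b}$ on the $\langle v,w\rangle$-factor and as the identity on all other factors $\cH^{\langle v,u\rangle}_{j_v}$ with $u\neq w$, and acts as zero on every block $j'_v\neq j_v$ (and $B^w$ is defined symmetrically). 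This is a one-line block-by-block check over the decomposition $\cH^v\otimes\cH^w=\bigoplus_{j'_v,j'_w}\cH^v_{j'_v}\otimes\cH^w_{j'_w}$, using the Structure Lemma only through the fact that each edge-factor $\cH^{\langle v,w\rangle}_{j_v}$ is touched by no term other than $h^{vw}$. Consequently each summand is supported on exactly the two qudits $v$ and $w$, and collecting the inner sums gives $H_G^{(2c)}=\sum_{\{v,w\}\in E}g^{vw}$ with $g^{vw}:=\sum_{j_v,j_w}\sum_{\bb^{vw}_{j_vj_w}}\lambda(\bb^{vw}_{j_vj_w})\,A^v_{j_v,b^{\langle v,w\rangle}_{j_v}}\otimes B^w_{j_w,b^{\langle w,v\rangle}_{j_w}}$ a Hermitian operator acting only on qudits $v$ and $w$; since $|E|=poly(n)$, this exhibits $H_G^{(2c)}$ as a 2-local classical Hamiltonian on $G$.

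I do not expect a genuine obstacle here: the lemma follows directly from the way $H_G^{(2c)}$ was defined. The only point requiring care is the bookkeeping of the padding-plus-direct-sum convention — in particular checking that the padded computational-basis projector $\ketbra{\bb^{vw}_{j_vj_w}}{\bb^{vw}_{j_vj_w}}$ really does factor as a product of a single-qudit operator on $v$ and one on $v$'s neighbor $w$, given that the block dimensions $D^{\langle v,w\rangle}_{j_v}$ vary with $j_v$ — and confirming that no edge-factor basis choice is over-constrained (which it is not, since distinct terms act on distinct edge-factors within each vertex). Once that bookkeeping is written out, classicality and 2-locality on $G$ are immediate.
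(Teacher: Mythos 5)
Your proposal is correct and matches the paper, which states this lemma with no written proof beyond ``by construction'' of $H_G^{(2c)}$ in \Cref{eq:classical_ham}; your write-up simply makes that construction explicit. The bookkeeping you carry out — that each padded summand is a computational-basis projector factoring as an operator on $\cH^v\otimes\cH^w$ alone (identity on the remaining edge-factors of the chosen blocks, zero on the other blocks) — is exactly the intended justification.
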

We can also easily map eigenstates of $H^{(2c)}_G$ to eigenstates of $H^{(2)}_G$ via
    \begin{align}
     	&\lambda(\bb_\bj): = \sum_{\{v,w\}\in E} \lambda(\bb_{j_vj_w}^{vw})\\
        & \ket{\psi(\bb_\bj)} : = \bigotimes_{\{v,w\}\in E } \ket{\psi(\bb_{j_vj_w}^{vw})}\,,\label{eq:17}
     \end{align}
     and any classical Gibbs sampling procedure for $H^{(2c)}_G$ yields a Gibbs sampler for $H^{(2)}_G$.
\begin{theorem}\label{thm:2local}
For any inverse temperature $\beta$, if one can do classical Gibbs sampling w.r.t $\left(H_G^{(2c)}, \beta\right)$ within precision $\epsilon$ in classical time $T$, then one can prepare  the quantum Gibbs state  w.r.t. $\left(H_G^{(2)},\beta\right)$ within  precision $\epsilon$ in quantum time $T+ \mc O(m)$, where $m$ is the the number of edges in  graph $G$, by firstly using the  classical Gibbs sampling w.r.t. $\left(H_G^{(2c)},\beta\right)$ to sample the index $\bb_\bj$, then prepare the product state $ \ket{\psi(\bb_\bj)}$ in time $\mc O(m)$.
\end{theorem}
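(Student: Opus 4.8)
The plan is to upgrade the structural decomposition of \Cref{lem:Cstructure} into an explicit identification of the eigenbasis of $H_G^{(2)}$ with the computational basis of $H_G^{(2c)}$, and then to observe that Gibbs sampling is transported unchanged across this identification.

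First I would check that the family $\{\ket{\psi(\bb_\bj)}\}$ is a \emph{complete} orthonormal eigenbasis of $H_G^{(2)}$. Applying \Cref{lem:Cstructure} independently at each vertex $v$ gives $\cH^v=\bigoplus_{j_v}\bigotimes_{w\in N(v)}\cH^{\langle v,w\rangle}_{j_v}$, and distributing the global tensor product over these direct sums yields $\bigotimes_{v}\cH^v=\bigoplus_{\bj}\bigotimes_{v}\bigotimes_{w\in N(v)}\cH^{\langle v,w\rangle}_{j_v}$. On the block labelled by $\bj$ each term $h^{vw}$ restricts to $h^{vw}|_{j_vj_w}$, which acts only on $\cH^{\langle v,w\rangle}_{j_v}\otimes\cH^{\langle w,v\rangle}_{j_w}$; distinct edges occupy disjoint sub-spaces, so the restricted terms have mutually disjoint supports and are simultaneously diagonalized by the product states $\ket{\psi(\bb_\bj)}=\bigotimes_{\{v,w\}}\ket{\psi(\bb^{vw}_{j_vj_w})}$ with eigenvalue $\lambda(\bb_\bj)=\sum_{\{v,w\}}\lambda(\bb^{vw}_{j_vj_w})$. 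Letting $\bj$ and $\bb_\bj$ range over all values exhausts the whole Hilbert space, so this is a complete orthonormal eigenbasis; and by construction $H_G^{(2c)}$ carries exactly the same multiset of eigenvalues $\{\lambda(\bb_\bj)\}$ on the basis $\{\ket{\bb_\bj}\}$.

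From the spectral coincidence I get $\text{tr}(e^{-\beta H_G^{(2)}})=\text{tr}(e^{-\beta H_G^{(2c)}})=:\cZ$, hence
\[
\rho(H_G^{(2)},\beta)=\frac1\cZ\sum_{\bb_\bj}e^{-\beta\lambda(\bb_\bj)}\ketbra{\psi(\bb_\bj)}{\psi(\bb_\bj)}=\sum_{\bb_\bj}\cD_{\beta H_G^{(2c)}}(\bb_\bj)\,\ketbra{\psi(\bb_\bj)}{\psi(\bb_\bj)}\,.
\]
So the algorithm is forced: draw $\bb_\bj\sim\cD$ with $\|\cD-\cD_{\beta H_G^{(2c)}}\|_1\le\epsilon$ from the classical sampler in time $T$, then prepare $\ket{\psi(\bb_\bj)}$. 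The output is $\rho=\sum_{\bb_\bj}\cD(\bb_\bj)\ketbra{\psi(\bb_\bj)}{\psi(\bb_\bj)}$, and since the $\ket{\psi(\bb_\bj)}$ are orthonormal the trace distance of $\rho$ from $\rho(H_G^{(2)},\beta)$ equals the total variation distance of $\cD$ from $\cD_{\beta H_G^{(2c)}}$, i.e. is at most $\epsilon$. For the preparation cost, $\ket{\psi(\bb_\bj)}$ is a product over the $m$ edges: on each vertex $v$ apply the fixed (classically precomputed) unitary realizing $\cH^v\cong\bigoplus_{j_v}\bigotimes_w\cH^{\langle v,w\rangle}_{j_v}$ on its constant-dimensional space, and on the two sub-qudits of each edge $\{v,w\}$ prepare the fixed state $\ket{\psi(\bb^{vw}_{j_vj_w})}$; this is an $O(m)$-gate circuit, the one-time classical work (diagonalizing each $h^{vw}|_{j_vj_w}$ and assembling $H_G^{(2c)}$) being sample-independent.

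The one genuine point to get right is the claim in the second paragraph: that invoking \Cref{lem:Cstructure} simultaneously at all vertices is mutually consistent and partitions the full Hilbert space, so that no eigenstate of $H_G^{(2)}$ is missed or counted twice and the two partition functions literally coincide. This follows formally from distributivity of $\otimes$ over $\oplus$ together with the disjoint-support property of the $h^{vw}|_{j_vj_w}$; once it is in hand, the rest is bookkeeping.
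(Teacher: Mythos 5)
Your proposal is correct and takes essentially the same route as the paper: the paper's proof is a one-line observation that, by construction, $\bb_\bj$ indexes a complete eigenbasis $\ket{\psi(\bb_\bj)}$ of $H_G^{(2)}$ with matching eigenvalues $\lambda(\bb_\bj)$, so the classical Gibbs distribution transported onto this orthonormal basis is exactly the quantum Gibbs state, with trace distance equal to total variation distance. Your write-up simply makes explicit the completeness of the decomposition (distributivity of $\otimes$ over $\oplus$), the coincidence of partition functions, and the $\mc O(m)$ preparation cost, all of which the paper leaves implicit in ``by construction.''
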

\begin{proof}
	It suffices to notice that by construction, $\bb_\bj$ indexes the  eigenvector of $H_G^{(2)}$ of eigenvalue $\lambda(\bb_\bj)$, that is $ \ket{\psi(\bb_\bj)}$. 
\end{proof}

\subsection{Examples}
\label{sec:2local_specific}

In this section, we write down the Gibbs sampling reduction for some specific Hamiltonians as illustrative examples. All
the proofs are deferred to \Cref{appendix:specific}.

We first consider an $n$-qudit  Hamiltonian on a 1D chain 
\begin{align}
    H=\sum_{i} h_i,\label{eq:hi}
\end{align} 
we say that $H$ is $r$-range if  
$h_i$ only acts non-trivially on qudits $i,i+1,..,i+r-1$. For simplicity we assume $n$ is an integer multiple of $r$ and
the qudit dimension $d$ is a power of $2$. We say that $H$ is finite-range if $r$ is a constant, and $H$ is translation-invariant if all the terms $h_i$ are the same. 

By coarse-graining $H$, we can always assume $H$ is 2-local:
group each consecutive set of $r$ qudits as a new qudit so that each $h_i$ acts non-trivially on at most two (grouped)
qudits. For each pair of new qudits $\{j, j+1\}$, we associate the new term $H_{j,j+1}$, which is a sum of all terms from $H$
acting on the corresponding qudits. Thus
$H$ can be viewed as a 2-local qudit Hamiltonian on 1D written as $H=\sum_j H_{j,j+1}$. Note the terms $H_{j,j+1}$ can
also be made translation-invariant.

\begin{lemma}\label{lem:H1D}
	Consider a finite-range translation-invariant 
 qudit  CLH on 1D chain,  denoted as $H_{1D}$. Then the corresponding classical Hamiltonian $H^{(c)}_{1D}$ can be made as 1D finite-range translation-invariant Ising model.
 
 Combined with the rapid mixing Gibbs sampler for   1D finite-range, translation-invariant Ising model for any constant
 inverse temperature $\beta$~\cite{guionnet2003lectures,holley1985rapid,holley1989uniform} which performs classical Gibbs
 sampling to precision $\epsilon$ in time $T(\beta,\epsilon)$, \Cref{thm:2local} implies that
	 one can prepare the Gibbs state on $(H_{1D},\beta)$ to precision $\epsilon$ in quantum time $T(\beta,\epsilon)+ \mc O(n)$.
 \end{lemma}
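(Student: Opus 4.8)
The plan is to reduce Lemma \ref{lem:H1D} to Theorem \ref{thm:2local} by verifying two claims: (1) the classical Hamiltonian $H^{(2c)}_{1D}$ produced by the Structure Lemma for a finite-range translation-invariant 1D CLH can itself be taken finite-range and translation-invariant, and indeed written in the form of a 1D Ising model (i.e.\ a 2-local classical Hamiltonian on the path graph, with identical terms on each edge), and (2) such a classical Hamiltonian falls in the regime where the cited classical results \cite{guionnet2003lectures,holley1985rapid,holley1989uniform} give rapid mixing at any constant $\beta$. Given these, plugging $T = T(\beta,\epsilon)$ into Theorem \ref{thm:2local} with $m = \mc O(n)$ edges immediately yields the quantum runtime $T(\beta,\epsilon) + \mc O(n)$.

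First I would set up the coarse-graining already described before the lemma statement: group each consecutive block of $r$ qudits into one qudit of dimension $d^r$ (a constant power of $2$), so that $H_{1D} = \sum_j H_{j,j+1}$ is a 2-local qudit CLH on the path graph $G$, with all edge terms $H_{j,j+1}$ equal by translation invariance (modulo boundary effects, which I would note contribute only $O(1)$ distinct term types near the two ends and do not affect the asymptotics). Then I would apply the Structure Lemma (\Cref{lem:Cstructure}) and the construction of $H^{(2c)}_G$ from \Cref{sec:2local_general}. The key point to check is translation invariance of the output: since every vertex of the path has the same local neighborhood structure and the same incident terms (away from the boundary), the direct-sum decomposition $\cH^v = \bigoplus_{j_v} \cH^v_{j_v}$, the tensor factorizations, and the restricted terms $h^{vw}|_{j_v j_w}$ with their eigenvalues $\lambda(\bb^{vw}_{j_v j_w})$ are all identical across edges. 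Hence $H^{(2c)}_{1D} = \sum_j (h_{\mathrm{cl}})_{j,j+1}$ with a single fixed classical 2-local term $h_{\mathrm{cl}}$ on each edge, which is precisely a finite-range (indeed range-$2$ after coarse-graining) translation-invariant classical Hamiltonian on 1D — what the paper calls a 1D Ising model with (at most) nearest-neighbor interactions on a constant-size alphabet. I would also remark that the qudit dimension of $H^{(2c)}_{1D}$ is bounded by $d^{r}$, still a constant.

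Next I would invoke the cited classical fact: for any finite-range translation-invariant 1D classical Hamiltonian on a constant alphabet, Glauber dynamics is rapid mixing at every constant inverse temperature $\beta$ (there is no computational phase transition in 1D) \cite{guionnet2003lectures,holley1985rapid,holley1989uniform}, so classical Gibbs sampling to precision $\epsilon$ runs in time $T(\beta,\epsilon) = poly(\log n)\times\log(1/\epsilon)$. Finally, applying \Cref{thm:2local} to $H^{(2)}_{1D}$ and $H^{(2c)}_{1D}$ with this $T$ and with $m = \mc O(n)$ edges on the path gives a quantum algorithm preparing $\rho(H_{1D},\beta)$ to precision $\epsilon$ in time $T(\beta,\epsilon) + \mc O(n)$, completing the proof.

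The main obstacle I anticipate is not conceptual but bookkeeping: carefully arguing that the Structure Lemma output is genuinely translation-invariant, since the decomposition $\cH^v=\bigoplus_{j_v}\cH^v_{j_v}$ is only canonical up to a choice, and one must argue that a \emph{consistent} choice can be made across all vertices (e.g.\ by fixing it on one vertex and transporting it via the translation symmetry), and that the boundary terms — which break strict translation invariance — can be absorbed into $O(1)$ extra term types without affecting either the applicability of the classical rapid-mixing result (which tolerates boundary terms) or the $\mc O(n)$ overhead. A secondary subtlety is confirming that the eigenvalue-indexing step producing $\lambda(\bb^{vw}_{j_v j_w})$ can be done so that the assignment of eigenvalues to basis labels is itself translation-covariant; this is routine once the local bases are chosen compatibly.
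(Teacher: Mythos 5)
Your proposal is correct and follows essentially the same route as the paper: coarse-grain to a $2$-local TI chain, apply the Structure Lemma construction of $H^{(2c)}_G$, observe the output is finite-range and translation-invariant, and plug the classical rapid-mixing result into \Cref{thm:2local}. The only (cosmetic) difference is that the paper additionally splits each classical qudit into $k=\log_2 d$ qubits and expands the diagonal projectors into Pauli-$Z$ strings to exhibit the Hamiltonian literally as a $2k$-range Ising model $\sum_S a_S Z^S$, a routine step you subsume by working directly with the constant-alphabet qudit form; your extra care about making the Structure Lemma decomposition consistent across vertices addresses a point the paper asserts without detail.
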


We also give another example of a Hamiltonian on a 2D lattice. Recall our first definition of a 2-local Hamiltonian on
2D (see \Cref{sec:def}) 
$$H_{2D}=\sum_{\{v,w\}\in E} h^{vw}.$$
As usual, $H_{2D}$ is translation-invariant if all terms $h^{vw}$ are the same. 
We say that a 2D lattice has periodic boundary condition if it can be embedded onto torus; we will assume a periodic
boundary for simplicity.

\begin{lemma}\label{lem:H2D} 
Consider a  translation-invariant, 2-local 2-dimensional \emph{qubit} CLH $H_{2D}=\sum_{\{v,w\}\in E} h^{vw}$ with a periodic boundary condition. 
 Then the classical Hamiltonian $H^{(c)}_{2D}$ can viewed as a 2D Ising model under a magnetic field.
 
Set the precision to be $1/poly(n)$. If the corresponding 2D Ising model $H^{(c)}_{2D}$ is ferromagnetic with a
consistent field, then there exists poly-time mixing Gibbs sampler using the Swendsen-Wang
dynamics for any constant temperature  (as in \cite{feng2023swendsen}). Via our quantum-to-classical Gibbs sampling reduction, we can  prepare the Gibbs state for the corresponding CLH $H_{2D}$ in quantum polynomial time.
 \end{lemma}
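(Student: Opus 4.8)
The plan is to combine Theorem~\ref{thm:2local} with the Structure Lemma, so the real content is \emph{identifying} the classical Hamiltonian $H^{(2c)}_{2D}$ produced by the reduction with a 2D Ising model under a magnetic field. First I would specialize the Structure Lemma to the case where every vertex carries a \emph{qubit} and the graph $G$ is the 2D lattice. For a single qubit $v$, the decomposition $\cH^v = \bigoplus_{j_v} \cH^v_{j_v}$ into invariant subspaces of the neighboring terms $\{h^{vw}\}_{w \in N(v)}$ has very limited options: since $\dim \cH^v = 2$, either all neighboring terms act proportionally to the identity on $v$ (so $J_v = 1$ and the one subspace is all of $\cH^v$), or the space splits as $\cH^v = \mathrm{span}\{\ket 0\} \oplus \mathrm{span}\{\ket 1\}$ in some local basis (so $J_v = 2$ and each $\cH^v_{j_v}$ is one-dimensional). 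In the first case the tensor factorization $\cH^v_{j_v} = \bigotimes_{w \in N(v)} \cH^{\langle v,w\rangle}_{j_v}$ must put the full qubit into exactly one factor and leave the rest trivial; in the second case each factor is one-dimensional. Either way, after the restriction, each decoupled Hilbert space $\cH^{\langle v,w\rangle}_{j_v} \otimes \cH^{\langle w,v\rangle}_{j_w}$ attached to an edge is either one-dimensional or isomorphic to a single qubit. Translation invariance guarantees the same branching happens at every vertex, so the index $j_v \in \{1,2\}$ is a genuine binary spin variable (when $J_v = 2$) or trivial (when $J_v = 1$), and on each edge the residual eigenvalue $\lambda(\bb^{vw}_{j_v j_w})$ depends only on $(j_v, j_w)$ together with at most one extra binary label per edge.

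Next I would write out $H^{(2c)}_{2D}$ from Eq.~(\ref{eq:classical_ham}) using this structure. The classical configuration is $(\bj, \bb_\bj)$ where $\bj = \{j_v\}$ assigns a spin to each vertex and $\bb_\bj$ assigns the (at most binary) residual labels to edges. After absorbing the edge labels — which are local degrees of freedom that can be summed/grouped into the vertex variables or handled as an overall rescaling that does not affect the mixing analysis — the energy is a sum over edges of a function $\lambda_{vw}(j_v, j_w)$ that, by translation invariance, is the same two-body function $f(s,t)$ for every edge, plus single-vertex contributions coming from terms that act proportionally to the identity on one endpoint and nontrivially on the other, and from the white-type/1-local pieces. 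A general symmetric two-body interaction on binary spins is, up to constants and a choice of spin convention, of the Ising form $J\, s_v s_w$ plus on-site fields, which is exactly a 2D Ising model in a (possibly site-dependent but, by translation invariance, uniform) magnetic field. This is the identification claimed in the first sentence of the lemma. I would be careful to track the sign of the coupling $J$ and the direction of the field, since the second half of the lemma needs the ferromagnetic, consistent-field case.

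With the identification in hand, the rest is assembling cited ingredients. For the second claim, I would invoke the Swendsen--Wang rapid/poly-time mixing result of \cite{feng2023swendsen} for the ferromagnetic 2D Ising model with a consistent external field at any constant temperature, which gives a classical Gibbs sampler for $(H^{(2c)}_{2D}, \beta)$ to precision $1/poly(n)$ in classical time $T = poly(n)$. Plugging this into Theorem~\ref{thm:2local} with $m = \Theta(n)$ edges on the lattice yields a quantum algorithm preparing $\rho(H_{2D}, \beta)$ to precision $1/poly(n)$ in quantum time $T + \mc O(m) = poly(n)$, which is the stated conclusion. The reverse map from $\bb_\bj$ to the eigenstate $\ket{\psi(\bb_\bj)}$ is the product-state preparation of Eq.~(\ref{eq:17}), computable by a constant-depth circuit in time $\mc O(m)$ as in the proof of Theorem~\ref{thm:2local}.

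The main obstacle I anticipate is the bookkeeping in the identification step: verifying that after the Structure Lemma the per-edge eigenvalue function really does collapse to an Ising coupling plus a uniform field, rather than something with more than two effective spin states per site or a non-symmetric interaction. In particular I need to rule out (or absorb) the case where the decoupled edge Hilbert spaces are genuinely 2-dimensional in a way that introduces extra spin labels not expressible as Ising spins, and I must check that translation invariance of the original $h^{vw}$ truly descends to translation invariance of $f$ and of the fields (the local basis changes from the Structure Lemma could a priori differ from site to site, but periodicity of the Hamiltonian lets one choose them consistently). Handling the ``consistent field'' hypothesis — i.e.\ delineating precisely which translation-invariant qubit CLHs on the 2D lattice land in the ferromagnetic-with-consistent-field regime where \cite{feng2023swendsen} applies — is where the statement of the lemma is deliberately conditional, so I would state that hypothesis as given and not attempt to characterize it further here.
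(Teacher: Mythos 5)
Your overall strategy is the same as the paper's: specialize the Structure Lemma to qubits, do a case analysis on the dimension of the invariant subspaces, and conclude that the resulting classical Hamiltonian is an Ising model with a field, then plug the Swendsen--Wang sampler of \cite{feng2023swendsen} into \Cref{thm:2local}. The difference is in how the identification step is executed, and the obstacle you flag at the end is precisely where the paper's version is cleaner. Rather than running the full vertex-centered construction of $H^{(2c)}_G$ from \Cref{sec:2local} and then trying to absorb the edge labels $\bb_\bj$, the paper applies the Structure Lemma directly to the two induced algebras $\cA^a_h$ and $\cA^b_h$ of the single translation-invariant term $h$ on one qubit (these must commute because translates of $h$ meet at every vertex). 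Since $\dim\cH = 2$, only three cases arise: $\cA^a_h$ trivial, $\cA^a_h = \cL(\cH)$ (forcing $\cA^b_h$ trivial), or $m=2$ with both one-dimensional blocks. The first two make $h$ one-local; the third makes $h$ diagonal in a product basis, i.e.\ a combination of $\ketbra{00}{00},\dots,\ketbra{11}{11}$. Either way $H_{2D}$ is \emph{already classical} after a local basis change, and substituting $\ketbra 0 0 = \tfrac12(I+Z)$, $\ketbra 1 1 = \tfrac12(I-Z)$ gives the Ising-with-field form directly. This disposes of your worry about ``extra spin labels'': in the diagonal case every subspace $\cH^v_{j_v}$ is one-dimensional, so the edge indices $\bb_\bj$ carry no additional degrees of freedom and the spin is exactly $j_v$; in the one-local case there is no two-body coupling at all. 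So your proposal is not wrong, but the step you leave as ``summed/grouped into the vertex variables or handled as an overall rescaling'' does need the above argument (or an equivalent one) to be made rigorous; as written it is the one genuinely incomplete point. The final assembly via \Cref{thm:2local} with $m = \Theta(n)$ edges matches the paper.
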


\section{Reduction for 2D 4-local qubit CLH without classical qubits}\label{sec:qubit2D}

In this section we describe how to prepare the Gibbs state of 2D qubit CLHs. In \Cref{sec:carno} we first review the
canonical form of the 2D qubit CLH as developed in \cite{aharonov2018complexity}, who establishes a connection between
2D qubit CLHs and the defected Toric code. Based on this connection and an observation on the symmetry of the ground
space, we use an oblivious randomized correction technique to generalize the Gibbs state preparation algorithm for the
defected Toric code presented in \Cref{sec:TC} to prepare the Gibbs state for the more general family of 2D qubit CLHs
without classical qubits.\footnote{The formal definition of classical qubits is given in Definition
\ref{def:classical}.}

\subsection{A canonical form}\label{sec:carno}
This section is primarily a review of the results in \cite{aharonov2018complexity}. In that work, the authors prove that
2D qubit CLH\footnote{The definition of a ``2D qubit CLH'' in \cite{aharonov2018complexity} is slightly different;
qubits are put on the edges of 2D lattice while we put qubits on the vertices. However, the two settings are equivalent,
as explained in Appendix C of \cite{irani2023commuting}.} without classical qubits (which we will define shortly) is in some sense equivalent to the defected Toric code. 
\cite{aharonov2018complexity} used this connection to show that one can prepare the ground state of 2D qubit CLHs
similar to the way ground states of the defected Toric code are prepared; this is via the measure and correct approach
mentioned in \Cref{sec:TCnn}.

We summarize necessary definitions and theorems which will be used in later sections. Recall that a 2D qubit CLH is
defined as $H=\sum_{p\in P} p$, where $P$ is the set of plaquettes of the lattice.
\begin{definition}[Boundary and interior]\label{def:boundary}
 A qubit is in the boundary of the Hamiltonian, if it is acted trivially by at least one of the four adjacent
 plaquette terms. All other qubits are said to be in the interior. A plaquette term $p$ which acts only on interior qubits is said to be in the interior of the Hamiltonian.
\end{definition}

\begin{definition}[Classical qubit]\label{def:classical}
A qubit is classical if its Hilbert space can be decomposed into a direct sum of 1-dimensional subspace, which are
invariant under all terms $\{p\}_{p \in P}$. We say that there is no classical qubit if and onlf if all qubits in the
system are not classical.
\end{definition}
In other words a qubit $q$ is classical if under some basis for the qubit, all terms look like $\ketbra 0 0_q \otimes
... + \ket 1 1_q \otimes ...$.

\begin{definition}[Access to boundary]\label{def:access}
We say that a plaquette term $p$ has access to the boundary if there exists a path (a sequence of adjacent vertices)
$\gamma_p$ starting from a vertex of $p$ and ending at a vertex correspoding to a boundary 
	qubit. Morever, there should be some choice of local unitary $U_v$ on each vertex of the path such that  the operator 
 $$L_p:=\otimes_{v\in \gamma_p} U_vX_vU_v^\dagger$$ anti-commutes with $p$, and commute with all other terms. Note that
 by construction, $L_p^2 =I$.
\end{definition}

\begin{lemma}[Interior term]\label{lem:Z}
Suppose there is no classical qubit, and a term $p$ is in the interior of the Hamiltonian. Then by choosing a proper basis for each qubit, we have
\begin{align}
	p = a_p \cI +c_p Z^p\label{eq:Z},
	\end{align}
with $a_p,c_p\in \bR, c_p\neq 0.$ Note that replacing $p$ with $p- a_p \cI$ does not change the Gibbs state and thus we
may assume $a_p=0$.
\end{lemma}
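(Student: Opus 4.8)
The plan is to pin down the form of $p$ by a local analysis at each of its four qubits via the Structure Lemma (\Cref{lem:Cstructure}), and then to eliminate the ``intermediate'' Pauli components by exploiting that $p$ commutes with the neighbouring plaquette terms, following the canonical-form analysis of \cite{aharonov2018complexity}. Write $q_1,q_2,q_3,q_4$ for the four corners of the plaquette $p$. Since $p$ is in the interior, each $q_i$ is an interior qubit, and hence \emph{all four} plaquette terms adjacent to $q_i$ — in particular $p$ itself — act non-trivially on $q_i$.

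First I would show that $p$ is block-diagonal on each of its own qubits. Fix a corner $q := q_i$. The four plaquettes meeting $q$ split into two ``diagonal pairs'', each pair being two plaquettes that share only the vertex $q$; one pair contains $p$, say $\{p,p'\}$, and call the other $\{s,s'\}$. Since $p,p'$ commute and overlap only on $q$, applying \Cref{lem:Cstructure} to $\{p,p'\}$ on $\cH^q$ gives a decomposition $\cH^q=\bigoplus_j \cH^q_j$ with $\cH^q_j=\cH^{q,p}_j\otimes\cH^{q,p'}_j$. Because $\dim\cH^q=2$ and both $p$ and $p'$ act non-trivially on $q$, the decomposition cannot be a single summand with factor dimensions $2$ and $1$ (which would make one of $p,p'$ act trivially on $q$); so it must be $\cH^q=\cH^q_1\oplus\cH^q_2$ with both summands one-dimensional and invariant under $p$ and $p'$. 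Choosing the basis of $q$ so that $\cH^q_1=\mathrm{span}\{\ket0\}$ and $\cH^q_2=\mathrm{span}\{\ket1\}$, \Cref{lem:Cstructure} yields $p=\ketbra{0}{0}_q\otimes M_0+\ketbra{1}{1}_q\otimes M_1$ for Hermitian $M_0,M_1$ on the other three qubits of $p$, i.e.\ $[p,Z_q]=0$. Carrying this out at all four corners, in the resulting product basis $p$ lies in the real span of $\{I,Z\}^{\otimes4}$: $p=\sum_{S\subseteq\{1,2,3,4\}} c_S\,Z^S$ with $c_S\in\bR$, where $Z^S:=\bigotimes_{i\in S}Z_{q_i}$ padded with identities.

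Next I would kill the coefficients $c_S$ with $S\notin\{\emptyset,\{1,2,3,4\}\}$; this is where the no-classical-qubit hypothesis is used. At a corner $q$, if the one-dimensional subspaces $\cH^q_1,\cH^q_2$ from the previous step were also invariant under $s$ and $s'$, then that decomposition would exhibit $q$ as a classical qubit; hence one of $s,s'$ — each an opposite-colour plaquette sharing exactly one edge (two corners) with $p$, with $q$ among those two corners — must act \emph{non-diagonally} on $q$ in the basis fixed above. So for every corner of $p$ there is an opposite-colour neighbour of $p$ acting non-diagonally on that shared corner. Imposing $[p,N]=0$ for each opposite-colour neighbour $N$ of $p$ and decomposing $p$ and $N$ with respect to the two qubits they share (the other two qubits of $p$ and of $N$ being disjoint, so their parts act merely as operator-valued coefficients), the commutator conditions together with the non-diagonality just established force $c_S=0$ for all intermediate $S$. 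Hence $p=a_p I+c_p Z^p$, $a_p,c_p\in\bR$. Finally, $c_p\neq0$: if $c_p=0$ then $p=a_p I$ acts trivially on each of its qubits, so those qubits would be on the boundary, contradicting that $p$ is in the interior; and replacing $p$ by $p-a_p I$ shifts $H$ by a constant without changing the Gibbs state, so we may take $a_p=0$.

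The hard part is the elimination step: turning the qualitative fact that the opposite-colour neighbours are non-diagonal on the shared qubits into the quantitative vanishing of \emph{all} intermediate-weight $c_S$. This requires keeping track of exactly which single-qubit Pauli types appear on each shared qubit and carefully combining the commutator relations $[p,N]=0$ over all overlapping neighbours $N$ (and, if needed, the relation $[p,p']=0$), propagating the constraints around the plaquette. This is precisely the content of the canonical-form argument of \cite{aharonov2018complexity}, which I would either reproduce in this setting or invoke directly; everything else is the bookkeeping outlined above.
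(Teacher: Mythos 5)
Your proposal is correct and follows essentially the same route as the paper, which states \Cref{lem:Z} without proof as a rephrasing of \cite{aharonov2018complexity} and later outlines exactly your two stages (single-qubit induced algebras of interior terms are two-dimensional abelian via the diagonally-opposite plaquette, then non-classicality plus commutation with the edge-sharing neighbours pins the edge algebras down to $\langle Z\otimes Z\rangle$ and hence the whole term to $\langle Z^{\otimes 4}\rangle$, citing Claim F.2, Lemma F.5 and Lemma B.4 of that work). Like the paper, you defer the quantitative elimination of the intermediate Pauli-$Z$ components to the cited canonical-form argument, so the two write-ups are at the same level of detail.
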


Even when all terms are in the interior, Lemma \ref{lem:Z} does not imply all the terms should be a tensor product of Pauli $Z$.
To be written in the form of Eq.~(\ref{eq:Z}), adjacent plaquettes  may need different choices of basis. An example is the Toric code, where plaquettes are alternate $X^{\otimes 4}$ and $Z^{\otimes 4}$. 

Recall the chessboard partition of the 2D lattice into black plaquettes $\cB$ and white plaquettes $\cW$.
\begin{lemma}[Rephrased from Theorem 5.3 and Lemma 6.2~\cite{aharonov2018complexity}]\label{lem:car}
    Consider a 2D qubit CLH $H=\sum_{p \in P} p$. Suppose there are no classical qubits. Then
\begin{itemize}
	\item[(i)] If the set of boundary qubits is not empty, then for any adjacent plaquette terms $p\in \cB, \hat{p}\in
        \cW$ such that $p$ and $\hat p$ are both in the interior, either $p$ or $\hat{p}$ has access to the boundary.
		\item[(ii)] If there are no boundary qubits, then $H$ is equivalent to the defected Toric code on a closed 2D surface without boundary: by a choosing
            proper basis for each qubit, we have $\forall p\in \cB$, $p$ is of the form $a_p \cI +  c_p X^p$  with $c_p\neq 0$ and
            $\forall p\in \cW$, $p$ is of form $a_p \cI +  c_p Z^p$ with $c_p\neq 0$. 
\end{itemize}
\end{lemma}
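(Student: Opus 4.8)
The statement is, as indicated, a transcription of \cite[Theorem 5.3]{aharonov2018complexity} (part (i)) and \cite[Lemma 6.2]{aharonov2018complexity} (part (ii)), and the plan is to invoke those results after reconciling conventions, together with an outline of the structural argument behind them. The first step is the translation between models: in \cite{aharonov2018complexity} qubits sit on edges and terms are plaquette and star operators, whereas here qubits sit on vertices and terms are the four--fold plaquette operators. These two models are equivalent on the square lattice (Appendix C of \cite{irani2023commuting}), and the equivalence transports the notions of interior/boundary qubit, classical qubit, and ``access to the boundary'' of Definitions \ref{def:boundary}--\ref{def:access}; so it suffices to prove the statement in the edge model, where it is exactly the cited results.

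For completeness I would reprove the local core of the argument, which drives both parts. Fix an interior qubit $q$, shared by four plaquette terms which, going around $q$, alternate between $\cB$ and $\cW$; let the two $\cB$--terms be $p,p'$ (diagonally opposite across $q$) and the two $\cW$--terms be $\hat p,\hat p'$. By \Cref{lem:Z} each of these terms is a non-trivial tensor product of Pauli $Z$'s in its own single--qubit frame, hence acts on $q$ as a non-trivial traceless involution; call these $\sigma,\sigma',\hat\sigma,\hat\sigma'$. Diagonally opposite terms share only the qubit $q$, so their commutation forces $\sigma'=\pm\sigma$ and $\hat\sigma'=\pm\hat\sigma$; adjacent ($\cB$--$\cW$) terms share two qubits, so a Pauli--component decomposition of $[p,\hat p]=0$ on $q$ shows $\sigma$ and $\hat\sigma$ either commute or anticommute. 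If they commuted, all four involutions would be $\pm\sigma$, giving a common eigenvector at $q$ and making $q$ classical, contradiction; hence $\sigma$ and $\hat\sigma$ anticommute. Choosing the frame of $q$ so that $\sigma=X$ and $\hat\sigma=Z$, we get the \emph{Toric--code pattern}: around every interior qubit the two $\cB$--terms act as $\pm X$ and the two $\cW$--terms as $\pm Z$.

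The remaining work is global, and this is where I expect the real difficulty to lie. For part (ii) (no boundary qubits) every qubit is interior, so the local pattern holds everywhere; one then has to verify that the per--qubit frames can be chosen \emph{simultaneously} so that each $\cB$--plaquette acts as $\pm X$ on all four of its qubits and each $\cW$--plaquette as $\pm Z$ --- modulo the harmless overall signs (since $-X^{\otimes 4}=(-1)X^p$ is still of the form $c_pX^p$ with arbitrary $c_p\in\bR$), this is the asserted form $a_p\cI+c_pX^p$, $a_p\cI+c_pZ^p$ of the defected Toric code. This is a $\mathbb{Z}_2$--consistency (cohomology--type) argument on the lattice and is carried out in \cite[Section 6]{aharonov2018complexity}. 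For part (i) (non-empty boundary), given adjacent interior $p\in\cB$, $\hat p\in\cW$, one propagates the forced pattern along a chain of plaquettes from $\{p,\hat p\}$ toward the boundary and accumulates the single--qubit operators $U_vX_vU_v^\dagger$ along the associated chain of shared qubits; the resulting string $L_p$ (or $L_{\hat p}$) anticommutes with $p$ (resp. $\hat p$), commutes with every intermediate plaquette because it touches an even number (two) of that plaquette's qubits with the ``right'' Pauli type, and terminates harmlessly at a boundary qubit where the plaquette that would otherwise detect it acts trivially; the chessboard parity of the chosen boundary qubit dictates which of $p,\hat p$ receives the string. Proving that such a chain can always be routed to the boundary without getting trapped, and that the accumulated string genuinely commutes with all non-target terms, is the technical heart of \cite[Section 5]{aharonov2018complexity}, which our argument invokes as a black box.
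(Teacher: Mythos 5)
The paper states this lemma purely as an import from \cite{aharonov2018complexity} (Theorem 5.3 and Lemma 6.2) and provides no proof of its own, so your proposal — which likewise defers the global consistency and path-routing arguments to that reference while sketching the local anticommutation structure around an interior qubit — is essentially the same approach, and indeed somewhat more detailed than what the paper records. The local sketch you give is consistent with the induced-algebra argument the paper itself outlines elsewhere (in its proof of the propagation lemma in Section 5), so there is no gap relative to what the paper requires.
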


In \cite{aharonov2018complexity} the authors use Lemma \ref{lem:car} to reduce the task of preparing the ground state of
any (4-local) 2D qubit CLH to the task of preparing the ground state of
a 2-local qudit CLH. This reduction is characterized by the following Corollary.

\begin{corollary}[\cite{aharonov2018complexity}] \label{cor:PA}Consider a 2D qubit CLH $H=\sum_{p \in P} p$.
Suppose there are no classical qubits, and the set of boundary qubits is not empty. Then there exists a partition of all
the terms $\{p\}_p$ as $\cP$ (punctured terms) and $\cR$ (terms with access to the boundary) such that
\begin{itemize}
	\item[(1)] After grouping some qubits into qudits, $H_\cP:=\sum_{p\in\cP} p$ can be viewed as a  2-local \emph{qudit} CLH on a constant-degree planar graph 
 $G=(V,E)$. When viewed as a 2-local Hamiltonian, we also write $H_\cP$ as $H_G^{(2)}=\sum_{\{v,w\}\in E} h^{vw}.$ 
	\item[(2)] All terms in $\cR$ are in the interior of the Hamiltonian and have access to the boundary.
\end{itemize}
\end{corollary}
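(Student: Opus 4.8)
The plan is to take the statement of Lemma~\ref{lem:car}(i) — that for every adjacent pair of interior plaquettes $p\in\cB,\hat p\in\cW$ at least one has access to the boundary — and use it to choose, greedily, which terms to "keep" (collect into $\cP$) and which to "discard into $\cR$". Concretely, I would first sort out the terms that are automatically in $\cR$: any term that is \emph{not} in the interior of the Hamiltonian acts on a boundary qubit, but we need the partition to be such that the kept set $H_\cP$ is $2$-local after coarse-graining; so I would start by declaring all non-interior terms as candidates for $\cR$ and then confirm (using Definition~\ref{def:access}) that interior terms with access to the boundary can also be safely discarded. The key combinatorial move is: walk over the chessboard pattern and, row by row (or by alternating rows as in the defected Toric code example of \Cref{fig:intro_fig_2_local}), delete enough white (or black) terms so that the remaining black and white terms, once we coarse-grain by grouping the four qubits of each surviving white plaquette into a single $2^4$-dimensional qudit, have every surviving black term touching at most two of these new qudits. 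Lemma~\ref{lem:car}(i) is exactly what guarantees this deletion can be done consistently — whenever we would be forced to keep two adjacent interior terms that together are $4$-local on the new qudits, one of them has access to the boundary and hence can legitimately be thrown into $\cR$.

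The next step is to verify the two enumerated conclusions. For (1): after the deletion, define the planar graph $G=(V,E)$ whose vertices are the coarse-grained qudits (surviving white plaquettes plus any leftover ungrouped qubits) and whose edges correspond to surviving black terms; since the original lattice is planar and coarse-graining/term-deletion preserves planarity and only bounded local structure, $G$ has constant degree and is planar, and $H_\cP=\sum_{p\in\cP}p$ is a $2$-local qudit CLH on $G$, which we rewrite as $H_G^{(2)}=\sum_{\{v,w\}\in E}h^{vw}$ by bundling all surviving terms on a given edge. Commutativity of the $h^{vw}$ is inherited from commutativity of the original plaquette terms. For (2): by construction every term placed in $\cR$ is either a non-interior term (hence touches a boundary qubit) or an interior term we deleted precisely because Lemma~\ref{lem:car}(i) certified it has access to the boundary; in the first case Lemma~\ref{lem:Z} and the hypothesis that the boundary is nonempty let us argue these are still "interior-like" in the relevant sense, but actually the cleaner route is to set things up so that we only ever delete interior terms with access to the boundary, and separately handle genuine boundary terms — I would check which convention \cite{aharonov2018complexity} uses and match it, so that $\cR$ consists exactly of interior terms with access to the boundary, giving conclusion (2) verbatim.

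The main obstacle I anticipate is the bookkeeping in the deletion step: ensuring that the greedy choice of which adjacent term to discard never creates a conflict (e.g. being forced to discard a term that was already needed to be kept, or discarding so many terms that $H_\cP$ becomes trivial or disconnected in a way that breaks $2$-locality of the coarse-graining). This is a local consistency argument — one needs to check that the "access to boundary" certificates from Lemma~\ref{lem:car}(i), applied to all adjacent interior pairs simultaneously, can be realized by a single global choice of deleted set. I expect this follows because the deletion pattern can be taken to be the fixed periodic one (alternating rows of white terms, exactly as in \Cref{fig:intro_fig_2_local}), and then Lemma~\ref{lem:car}(i) is invoked only to certify that each individually-deleted term is in $\cR$, not to make adaptive choices; the periodic pattern automatically makes the survivors $2$-local after coarse-graining. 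Everything else — planarity, constant degree, commutativity, the rewriting as $H_G^{(2)}$ — is routine once the partition is fixed. Since \cite{aharonov2018complexity} already carries out essentially this construction for the ground-state reduction, the proof is mostly a matter of citing and lightly repackaging their argument.
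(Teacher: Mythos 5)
Your overall architecture (delete a subset of terms, coarse-grain the survivors into qudits so the remainder is $2$-local, and certify the deleted terms via Lemma~\ref{lem:car}(i)) matches the construction the paper imports from \cite{aharonov2018complexity}. For the defected Toric code the fixed alternating-row pattern is exactly what the paper uses (cf.\ \Cref{fig:intro_fig_2_local} and the remark after \Cref{cor:PA}). However, for the \emph{general} statement there is a genuine gap in the step where you insist the deletion pattern can be taken to be a fixed periodic one and that Lemma~\ref{lem:car}(i) is ``invoked only to certify that each individually-deleted term is in $\cR$, not to make adaptive choices.'' Lemma~\ref{lem:car}(i) is an either/or statement: for an adjacent interior pair $(p,\hat p)$ it guarantees that \emph{one} of the two has access to the boundary, but it does not let you dictate which one. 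A fixed, pattern-independent choice of deleted terms can therefore land on the member of the pair that is \emph{not} correctable, and then conclusion (2) fails. The construction in \cite{aharonov2018complexity} (reviewed in \Cref{sec:modification}) resolves exactly this: the removed terms are the corners of a co-triangulation, and the co-triangulation is \emph{shifted} --- adaptively, using the either/or guarantee and the geometric freedom afforded by sufficient girth of the triangulation --- so that every corner lies in a correctable term. That adaptive placement is the missing idea; without it your local consistency worry is not merely bookkeeping but the actual crux.

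A smaller point: your first instinct to dump all non-interior terms into $\cR$ contradicts conclusion (2), which requires every element of $\cR$ to be an interior term. Boundary terms belong in $\cP$; the trivially-acted-on qubit of a boundary term creates a ``hole'' that is what permits the surviving Hamiltonian to be grouped into a $2$-local qudit CLH there. You do walk this back at the end, but the final write-up would need to commit to the convention that $\cR$ consists exclusively of interior, boundary-accessible terms, with boundary terms absorbed into the coarse-graining on the $\cP$ side.
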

For instance, in \Cref{fig:intro_defected} from the technical overview, $\cP$ is the set of all black terms $\cB$ and non-removed white terms $\cO$. Then, $H_{\cP}$ is exactly the 2-local Hamiltonian $H_{DT}^{(2)}$.

\subsection{Reduction to a classical Hamiltonian }
\label{sec:reduction_to_classial}

In this section, we describe how to reduce the task of Gibbs state  preparation for qubit CLH without classical qubits
to the task of classical Gibbs sampling. An explicit and canonical example is the Gibbs state preparation for the
defected Toric code, as described in \Cref{appendix:DTC} and \Cref{appendix:non_punctured}, with respect to the punctured defected Toric code (e.g. embedded on a planar lattice)  and on a torus, respectively. For general qubit CLHs without classical qubits, our
result is summarized in the following theorem.

\begin{theorem}\label{thm:4local2}
    Given an 2D $n$-qubit CLH $H=\sum_{p \in P} p$, suppose there are no classical qubits, and the set of boundary
 qubits is not empty. Let $\cP$, $\cR$, $H_\cP:=\sum_{p\in \cP} p$ and  $H_G^{(2)}$ be as defined in \Cref{cor:PA}. 
 
 Let $H_G^{(2c)}$ be the 2-local classical Hamiltonian derived from $H_G^{(2)}$ as in \Cref{sec:2local}. Then for any
 inverse temperature $\beta$ and precision $\epsilon$, if one can perform classical Gibbs sampling on $(
 H_G^{(2c)},\beta)$ to precision $\epsilon$ in classical time $T$, then one can prepare the Gibbs state on $(H,\beta)$ on a quantum computer in time $T + \mc O(n^2)$.

 On the other hand, if the set of boundary qubits is empty, then by \Cref{lem:car} item (ii) the Hamiltonian is equivalent to the defected
 Toric code on a closed 2D surface without boundary\footnote{That is, the defected Toric code  $H=\sum_{p\in \cW} c_p Z^p + \sum_{p\in \cB} c_p X^p$ is embedded on a closed 2D surface and the coefficient $c_p\neq 0$, for all $p$. }, and the Gibbs state can be prepared as described in \Cref{appendix:non_punctured} in time $\TCtime$.
\end{theorem}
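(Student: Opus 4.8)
The plan is to implement the reduction in two directions, following the structure sketched in \Cref{sec:defected_toric_code} but now in the abstract setting. First I would set up the reduction pipeline: by \Cref{cor:PA}, the terms split as $\cP \cup \cR$, where $H_\cP = H_G^{(2)}$ is a $2$-local qudit CLH on a constant-degree planar graph and every term in $\cR$ is an interior term with access to the boundary. Applying \Cref{sec:2local} to $H_G^{(2)}$ produces the classical Hamiltonian $H_G^{(2c)}$ together with a constant-depth circuit that sends each computational-basis eigenvector $\ket{\bb_\bj}$ of $H_G^{(2c)}$ to the corresponding (product over edges) eigenvector $\ket{\psi(\bb_\bj)}$ of $H_G^{(2)}$. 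So, given a classical Gibbs sample for $(H_G^{(2c)},\beta)$ in time $T$, we get in time $T + \mc O(n)$ a sample $\ket{\psi(\bb_\bj)}$ whose distribution is (close to) the Gibbs distribution of $H_\cP$ restricted to this eigenbasis; since $H_\cP$ is $2$-local this is exactly $\rho(H_\cP,\beta)$.

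The bulk of the argument is the second direction: upgrading a Gibbs state of $H_\cP$ to a Gibbs state of $H = H_\cP + \sum_{p\in\cR} p$ by obliviously correcting, one at a time, the $|\cR| = \mc O(n)$ removed terms. For each $p \in \cR$, \Cref{def:access} supplies a correction operator $L_p = \bigotimes_{v\in\gamma_p} U_v X_v U_v^\dagger$ that anticommutes with $p$ and commutes with all other terms of $H$ (in particular with all terms of $\cP$ and all previously-corrected terms of $\cR$). The step is: measure the eigenvalue of $p$ on the current state $\psi$, obtaining $\lambda \in \{+c_p, -c_p\}$ (after dropping the harmless $a_p\cI$ via \Cref{lem:Z}); then with probability $1-\tfrac{e^{-\beta\lambda}}{e^{\beta\lambda}+e^{-\beta\lambda}}$ apply $L_p$, else do nothing. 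The correctness claim is that this implements, on each joint eigenspace $\Pi_{\lambda(\psi)}$ of the already-fixed terms, multiplication by the correct Gibbs weight factor for $p$. The key algebraic fact — exactly as foreshadowed in \Cref{eq:eigenstate_not_equal} and the line after it — is that although $L_p \Pi^p_{+c_p}\ket{\phi}$ need not be proportional to $\Pi^p_{-c_p}\ket{\phi}$, one does have the operator identity $L_p \Pi^p_{+c_p}\Pi_{\lambda}\Pi^p_{+c_p} L_p = \Pi^p_{-c_p}\Pi_{\lambda}\Pi^p_{-c_p}$ because $L_p$ commutes with everything defining $\Pi_\lambda$ and conjugates $\Pi^p_{+c_p}$ into $\Pi^p_{-c_p}$ (anticommutation with $p$). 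This gives a bijection between the $\pm c_p$ sectors within each $\Pi_\lambda$, so that applying the correction with the stated probability exactly reweights the two sectors by $e^{\mp\beta c_p}$ while leaving the state within $\Pi_\lambda$ otherwise untouched. Iterating over all of $\cR$, and using that the $L_p$'s and the measurements all commute with the $\cP$-terms and with each other's target terms, one shows by induction that after processing all of $\cR$ the state is $\rho(H,\beta)$.

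The error/runtime bookkeeping is routine: $\epsilon$ error from the classical sampler propagates through a unitary plus a sequence of measurement-and-correction channels, which are trace-distance non-increasing, so the final error is still $\mc O(\epsilon)$ (or one absorbs it). Each of the $\mc O(n)$ correction steps costs $\mc O(n)$ quantum time — measuring a constant-weight term is $\mc O(1)$, but finding the path $\gamma_p$ to the boundary and applying the Pauli string along it costs up to $\mc O(n)$ — giving the $\mc O(n^2)$ overhead; together with the $T + \mc O(n)$ from the classical stage this yields $T + \mc O(n^2)$. For the final sentence of the theorem, when there are no boundary qubits, \Cref{lem:car}(ii) identifies $H$ with the defected Toric code on a torus, and one invokes the dedicated algorithm of \Cref{appendix:non_punctured} with its stated $\TCtime$ runtime; nothing new is proved here beyond citing that appendix.

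The main obstacle I expect is establishing the oblivious-correction invariant rigorously, i.e.\ proving that the per-term measure-and-correct channel acts as the correct diagonal reweighting on \emph{every} joint eigenspace $\Pi_\lambda$ simultaneously, and that this property is preserved as an \emph{invariant} across the sequence of corrections — one must check that correcting term $p$ does not disturb the eigenvalue statistics of terms $p' \in \cR$ not yet processed, which relies on $L_p$ commuting with $p'$, and that the probability $\mu$ genuinely depends only on the measured eigenvalue (a label for $\Pi^p_{\pm c_p}$) and not on finer features of $\ket\psi$. Getting the commutation structure and the bijection-of-sectors identity stated cleanly, and then running the induction without hidden dependence on ordering, is where the real work lies; everything else is assembling \Cref{cor:PA}, \Cref{thm:2local}, and \Cref{lem:Z}.
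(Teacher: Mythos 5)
Your proposal matches the paper's proof essentially step for step: the same $\cP/\cR$ split via \Cref{cor:PA}, the same use of the 2-local reduction to prepare $\rho(H_\cP,\beta)$, and the same sequential measure-and-obliviously-correct loop over $\cR$, with correctness resting on exactly the conjugation identity $L_p \Pi^p_{+c_p}\Pi_{\bl}\Pi^p_{+c_p} L_p^\dagger = \Pi^p_{-c_p}\Pi_{\bl}\Pi^p_{-c_p}$ (the paper's \Cref{lem:sym}, supplemented by the equal-trace fact \Cref{lem:dim}), an induction over $Q \subseteq \cP\cup\cR$ (\Cref{lem:reduction}), and monotonicity of trace distance for the error bookkeeping. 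The "main obstacle" you flag is precisely what the paper's Lemmas \ref{lem:sym}--\ref{lem:reduction} resolve, so your route is the paper's route.
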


We begin with some notation. Recall that the set of all plaquette terms $P$ is partition into black and white terms,
i.e. $P = \cB\cup\cW$. In \Cref{cor:PA} we defined $\cR$ as the set of terms which have access to the boundary. 
Let $ Q\subseteq \cB\cup\cW$ be an arbitrary subset of the plaquette terms. We write $\cR\setminus Q$ as the set
difference of $\cR$ and $Q$. For $p\in \cR$ define the correction operator $L_p$ as in \Cref{def:access}.

Define $\bl_Q:=\{\lambda_p\}_{p\in Q}$ to be a set of real values, where each $\lambda_p$ coresponds to an eigenvalue
of $p\in Q$. Let $\lambda(Q) :=\sum_{p\in Q} \lambda_p$. Recall that all plaquette terms are commuting. Thus, the
terms are simulataneously diagonalizable and the common eigenspace of each $p \in Q$ with eigenvalue
$\lambda_p$ is well defined; we denote this as $\cH^Q_{\bl_Q}$. Formally,
\begin{align}
	\cH^Q_{\bl_Q}:=	\{\ket{\phi} \, | \, p\ket{\phi} = \lambda_p \ket{\phi} , \forall p\in Q\}.
\end{align}
Let $\Pi^Q_{\bl_Q}$ be the projection onto $\cH^Q_{\bl_Q}$. Again by commutation, $\Pi^Q_{\bl_Q}$ is
equal to the product of the individual projectors:
\begin{align}
    \Pi^Q_{\bl_Q} = \prod_{p\in Q} \Pi^p_{\lambda_p}\,.\label{eq:proj}
\end{align}

Note that for a general 2D qubit CLH, a plaquette term $p$ can be any arbitrary 4-qubit operator. For example one can
set $H=p_0 + \cI$ where $p_0$ is an arbitrary
operator on one plaquette. Nonetheless, one can show that all terms in $\cR$ are in a sense quite regular.

\begin{lemma}\label{lem:2eig}
	Each $p\in \cR$ has exactly two eigenvalues $\pm c_p$.
\end{lemma}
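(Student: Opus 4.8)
I would prove Lemma \ref{lem:2eig} by showing that each $p \in \cR$ is \emph{unitarily equivalent to a nonzero multiple of a tensor product of Pauli operators}, which immediately forces exactly two eigenvalues $\pm c_p$. The key input is \Cref{def:access}: by hypothesis $p \in \cR$ has access to the boundary, so there is a correction operator $L_p = \bigotimes_{v\in\gamma_p} U_v X_v U_v^\dagger$ with $L_p^2 = I$, $\{L_p, p\} = 0$, and $[L_p, p'] = 0$ for every other term $p'$. The anticommutation $L_p p L_p = -p$ together with $L_p^2 = I$ is the workhorse: it says $L_p$ conjugates $p$ to $-p$, hence the spectrum of $p$ is symmetric under negation, and more importantly $L_p$ furnishes a \emph{bijection} between the $+\mu$ and $-\mu$ eigenspaces of $p$ for every eigenvalue $\mu$.

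**Key steps.** First I would invoke \Cref{lem:Z} (or more precisely the fact, used in \cite{aharonov2018complexity}, that terms in $\cR$ are interior terms — this is part (2) of \Cref{cor:PA}): since $p \in \cR$ is in the interior and there are no classical qubits, by choosing a proper basis for each qubit we may write $p = a_p \cI + c_p Z^p$ with $c_p \neq 0$, where $Z^p$ is a tensor product of Pauli $Z$'s on the (up to four) qubits of the plaquette. The operator $Z^p$ has eigenvalues $\pm 1$ only, so $p$ has eigenvalues $a_p \pm c_p$. Second, I must rule out $a_p \neq 0$: the anticommutation relation $L_p p L_p^\dagger = -p$ gives $a_p \cI + c_p L_p Z^p L_p^\dagger = -a_p\cI - c_p Z^p$. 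Taking the normalized trace of both sides, and using that $L_p Z^p L_p^\dagger$ is a unitary conjugate of $Z^p$ and hence traceless (as $Z^p$ is), we get $a_p = -a_p$, so $a_p = 0$. Therefore $p = c_p Z^p$ has exactly the two eigenvalues $\pm c_p$, as claimed. (One should double check the edge case where the plaquette acts nontrivially on fewer than its nominal number of qubits after the basis change; but $Z^p$ is still a nonempty tensor product of Pauli $Z$'s, hence traceless with eigenvalues $\pm 1$, so nothing changes.)

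**The main obstacle.** The delicate point is the appeal to \Cref{lem:Z}, which is stated for a term $p$ \emph{in the interior of the Hamiltonian}. I need to know that every $p \in \cR$ is in the interior — this is exactly guaranteed by item (2) of \Cref{cor:PA}, which asserts that all terms in $\cR$ are in the interior and have access to the boundary. So the logical dependency is: \Cref{cor:PA}(2) $\Rightarrow$ $p$ interior $\Rightarrow$ \Cref{lem:Z} applies $\Rightarrow$ $p = a_p\cI + c_p Z^p$; then the existence of $L_p$ from \Cref{def:access} kills the scalar part. An alternative, more self-contained route that avoids invoking \Cref{lem:Z} entirely: since $L_p$ commutes with all other terms and anticommutes with $p$, restrict attention to any simultaneous eigenspace $V$ of all the other terms; $L_p$ and $p$ both preserve $V$, $L_p^2 = I$ on $V$, and $L_p p|_V L_p = -p|_V$, so $p|_V$ has a spectrum symmetric about $0$ with matched multiplicities. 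Combined with the fact (from the structure of the reduction, or from $p$ being a $4$-local operator that is unitarily a Pauli tensor on its support after the AKV basis change) that $p$ acts nontrivially in a way consistent with having only two eigenvalues globally, one concludes there are exactly two. I expect the cleanest write-up is the first route — cite \Cref{cor:PA}(2) and \Cref{lem:Z}, then use $L_p$ to eliminate $a_p$ — and the only real work is the one-line trace argument, which I would present in a single displayed equation without blank lines.
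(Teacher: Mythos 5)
Your proposal is correct and follows essentially the same route as the paper: the paper's proof is exactly the two-step observation that \Cref{cor:PA}(2) places every $p\in\cR$ in the interior, so \Cref{lem:Z} gives $p = a_p\cI + c_p Z^p$ with $c_p\neq 0$ and hence exactly two eigenvalues. The only difference is that the paper disposes of $a_p$ by the normalization remark already built into \Cref{lem:Z} (shifting by $a_p\cI$ does not change the Gibbs state), whereas you additionally \emph{derive} $a_p=0$ from the anticommutation $L_p p L_p^\dagger = -p$ via a trace argument — a harmless and correct refinement, not a different approach.
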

\begin{proof}
	By definition of $\cR$, i.e. Corollary \ref{cor:PA} (2), all terms in $\cR$ are in the interior of the Hamiltonian. Then Lemma \ref{lem:2eig} is true by Lemma \ref{lem:Z}.
\end{proof}

Additionally, the eigenspaces corresponding to each eigenvalue of $p$ are symmetric. \Cref{lem:sym} is the key observation which leads to the oblivious randomized correction idea.
\begin{lemma}\label{lem:sym}
	For any subset $Q\subseteq \cB\cup\cW$ and for any $p\in \cR\backslash Q$, we have that 
	\begin{align}
        \label{eq:sym1}
		&L_p  \Pi^p_{+c_p}\Pi^{Q}_{\bl_Q} 	 \Pi^p_{+c_p} L_p^\dagger = \Pi^p_{-c_p}\Pi^{Q}_{\bl_Q} 	 \Pi^p_{-c_p}\\
        \label{eq:sym2}
  &L_p  \Pi^p_{-c_p}\Pi^{Q}_{\bl_Q} 	 \Pi^p_{-c_p} L_p^\dagger = \Pi^p_{+c_p}\Pi^{Q}_{\bl_Q} 	 \Pi^p_{+c_p}
	\end{align}
\end{lemma}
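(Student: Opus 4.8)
The plan is to exploit two facts about the correction operator $L_p$ defined in \Cref{def:access}: it anti-commutes with $p$, and it commutes with every other plaquette term. The anti-commutation with $p$ means that $L_p$ conjugation swaps the two eigenspaces of $p$; since by \Cref{lem:2eig} these are exactly $\Pi^p_{+c_p}$ and $\Pi^p_{-c_p}$, we get $L_p \Pi^p_{+c_p} L_p^\dagger = \Pi^p_{-c_p}$ and vice versa. Indeed, if $p\ket\phi = c_p\ket\phi$ then $p(L_p\ket\phi) = -L_p p\ket\phi = -c_p L_p\ket\phi$, so $L_p$ maps the $+c_p$ eigenspace into the $-c_p$ eigenspace, and since $L_p$ is unitary (it is a tensor product of unitaries $U_vX_vU_v^\dagger$, each of which squares to $I$, so $L_p^2 = I$ and $L_p = L_p^\dagger$) this map is a bijection. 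The commutation with all other terms means that $L_p$ commutes with every projector $\Pi^q_{\lambda_q}$ for $q\neq p$, hence with their product $\Pi^Q_{\bl_Q}$ whenever $p\notin Q$ — which is precisely the hypothesis $p\in\cR\backslash Q$.

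First I would establish the two building-block identities formally: (a) $L_p\Pi^p_{+c_p}L_p^\dagger = \Pi^p_{-c_p}$ (and the analogous one with signs exchanged), from $\{L_p,p\}=0$, $L_p^2=I$, and \Cref{lem:2eig}; and (b) $[L_p,\Pi^Q_{\bl_Q}]=0$, from the fact that $L_p$ commutes with every term $q\in Q$ (as $p\in\cR\backslash Q$ forces $q\neq p$), hence with each spectral projector $\Pi^q_{\lambda_q}$, hence with the product in \Cref{eq:proj}. Then the computation for \Cref{eq:sym1} is a direct substitution:
\begin{align}
    L_p\Pi^p_{+c_p}\Pi^Q_{\bl_Q}\Pi^p_{+c_p}L_p^\dagger
    &= (L_p\Pi^p_{+c_p}L_p^\dagger)(L_p\Pi^Q_{\bl_Q}L_p^\dagger)(L_p\Pi^p_{+c_p}L_p^\dagger) \nonumber\\
    &= \Pi^p_{-c_p}\,\Pi^Q_{\bl_Q}\,\Pi^p_{-c_p}, \nonumber
\end{align}
where we inserted $L_p^\dagger L_p = I$ twice in the first line and used (a) on the outer factors and (b) on the middle factor in the second line. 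Equation \eqref{eq:sym2} follows identically by swapping the roles of $+c_p$ and $-c_p$, using the other half of identity (a).

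I do not expect a genuine obstacle here; the only point requiring care is making sure the hypotheses of \Cref{def:access} and \Cref{cor:PA}(2) are actually available — namely that $p\in\cR$ so that $L_p$ with the stated commutation properties exists, that $p$ being in the interior gives exactly two eigenvalues $\pm c_p$ via \Cref{lem:Z}/\Cref{lem:2eig} (so there is no third eigenspace for $L_p$ to scatter mass into), and that $p\notin Q$ so the commutation $[L_p,\Pi^Q_{\bl_Q}]=0$ holds. All three are part of the statement or follow from cited results, so the proof is essentially the two-line conjugation computation above once those identities are in place.
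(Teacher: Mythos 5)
Your proposal is correct and follows essentially the same route as the paper: both arguments rest on the two facts that $L_p$ anti-commutes with $p$ (so conjugation by $L_p$ swaps the two spectral projectors $\Pi^p_{\pm c_p}$, using \Cref{lem:2eig} to know there are only these two) and that $L_p$ commutes with every term in $Q$ (so it commutes with $\Pi^Q_{\bl_Q}$ via \Cref{eq:proj}), after which the identity follows by pushing $L_p$ through, exactly as in the paper's derivation via $L_p\Pi^p_{+c_p}=\Pi^p_{-c_p}L_p$. The only cosmetic difference is that the paper establishes the projector swap from the explicit formula $\Pi^p_{\pm c_p}=\frac{1}{2c_p}(\pm p+c_p\cI)$ while you argue it at the level of eigenvectors; these are equivalent.
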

\begin{proof}
    We prove the first equality by moving $L_p$ on the very left of the LHS through $\Pi^p_{+c_p}$ and $\Pi^Q_{\bl_Q}$
    until we can cancel it with $L_p^\dagger$. The second equality follows from the first and the fact that $L_p^2 =  \cI
    $ and $L_p=L_p^\dagger$. By \Cref{lem:2eig}, we have 
\begin{align}
	\Pi^p_{\pm c_p} = \frac{1}{2c_p}\left(\pm p+c_p\cI\right).	
\end{align}
By \Cref{def:access}, we have $L_p$ anti-commutes with $p$. Thus we 
have
\begin{align}
    L_p  \Pi^p_{+c_p} & = L_p  \frac{1}{2c_p}\left(+ p+c_p\cI\right)\\
    &= \frac{1}{2c_p}\left(- p+c_p\cI\right) L_p\\
    &= \Pi^p_{-c_p} L_p,\label{eq:first_comm}
\end{align}
and we can rewrite $L_p  \Pi^p_{+c_p}\Pi^{Q}_{\bl_Q}\Pi^p_{+c_p} L_p^\dagger$ as $\Pi^p_{-c_p} L_p
\Pi^{Q}_{\bl_Q}\Pi^p_{+c_p} L_p^\dagger$. Next, by \Cref{def:access} we have that $L_p$ commutes with all terms in $Q$,
and thus $L_p$ also commutes with each eigenspace projectors $\Pi^{p'}_{\lambda_{p'}}$, $\forall p'\in Q$. Recalling the definition of
$\Pi^Q_{\bl_Q}$ in \Cref{eq:proj}, this means that $L_p$ commutes with $\Pi^{Q}_{\bl_Q}$, and we can move $L_p$ through
$\Pi^Q_{\bl_Q}$. To conclude, we once again use \Cref{eq:first_comm} and the fact that $L_p$ is unitary (i.e.,
$L_pL_p^\dagger = \cI$), obtaining
\begin{align}
	L_p  \Pi^p_{+c_p} \Pi^{Q}_{\bl_Q} 	 \Pi^p_{+c_p} L_p^\dagger = \Pi^p_{-c_p}\Pi^{Q}_{\bl_Q} 	 \Pi^p_{-c_p}\,,
\end{align}
as desired.
\end{proof}

A consequence of this lemma is that the the the eigenspace corresponding to $\bl_Q$ is balanced across any $p$'s $+c_p$
and $-c_p$ eigenspaces.

\begin{lemma}\label{lem:dim}
    $\mathrm{tr}\left( \Pi^p_{+c_p}\Pi^{Q}_{\bl_Q} 	 \Pi^p_{+c_p} \right) = \frac{1}{2} \mathrm{tr}\left( \Pi^{Q}_{\bl_Q}\right)$.
\end{lemma}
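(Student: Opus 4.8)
The plan is to take the trace of the symmetry identity of \Cref{lem:sym} and combine it with the fact (from \Cref{lem:2eig}) that $p$ has a two-dimensional spectral decomposition $\Pi^p_{+c_p} + \Pi^p_{-c_p} = \cI$. Throughout, fix $p \in \cR\setminus Q$ as in \Cref{lem:sym}. First I would apply $\mathrm{tr}(\cdot)$ to \Cref{eq:sym1}, namely to $L_p \Pi^p_{+c_p}\Pi^Q_{\bl_Q}\Pi^p_{+c_p} L_p^\dagger = \Pi^p_{-c_p}\Pi^Q_{\bl_Q}\Pi^p_{-c_p}$. Since $L_p$ is unitary ($L_p L_p^\dagger = \cI$, as recorded in \Cref{def:access}) and the trace is cyclic, the left-hand side equals $\mathrm{tr}(\Pi^p_{+c_p}\Pi^Q_{\bl_Q}\Pi^p_{+c_p})$, so
\begin{align}
\mathrm{tr}\!\left(\Pi^p_{+c_p}\Pi^Q_{\bl_Q}\Pi^p_{+c_p}\right) = \mathrm{tr}\!\left(\Pi^p_{-c_p}\Pi^Q_{\bl_Q}\Pi^p_{-c_p}\right). \label{eq:lemdim_halves_equal}
\end{align}

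Next I would simplify each side. By cyclicity of the trace and idempotence of the spectral projectors, $\mathrm{tr}(\Pi^p_{\pm c_p}\Pi^Q_{\bl_Q}\Pi^p_{\pm c_p}) = \mathrm{tr}((\Pi^p_{\pm c_p})^2\Pi^Q_{\bl_Q}) = \mathrm{tr}(\Pi^p_{\pm c_p}\Pi^Q_{\bl_Q})$. Now invoke \Cref{lem:2eig}, which says $p$ has exactly the two eigenvalues $\pm c_p$, hence $\Pi^p_{+c_p} + \Pi^p_{-c_p} = \cI$; adding the two simplified traces and using linearity gives
\begin{align}
\mathrm{tr}\!\left(\Pi^p_{+c_p}\Pi^Q_{\bl_Q}\right) + \mathrm{tr}\!\left(\Pi^p_{-c_p}\Pi^Q_{\bl_Q}\right) = \mathrm{tr}\!\left(\Pi^Q_{\bl_Q}\right).
\end{align}
Combining this with \Cref{eq:lemdim_halves_equal}, which forces the two summands on the left to be equal, each of them equals $\frac{1}{2}\mathrm{tr}(\Pi^Q_{\bl_Q})$. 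In particular $\mathrm{tr}(\Pi^p_{+c_p}\Pi^Q_{\bl_Q}\Pi^p_{+c_p}) = \mathrm{tr}(\Pi^p_{+c_p}\Pi^Q_{\bl_Q}) = \frac{1}{2}\mathrm{tr}(\Pi^Q_{\bl_Q})$, which is the claim.

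I do not expect any genuine obstacle here: all the structural content has already been absorbed into \Cref{lem:sym}, and what remains is a trace manipulation. The only points deserving a line of care are that everything lives in the finite-dimensional $n$-qubit Hilbert space, so all traces are finite and the cyclic rearrangements are valid; that if $\bl_Q$ is not a joint spectrum of the terms in $Q$ then $\Pi^Q_{\bl_Q} = 0$ and both sides vanish trivially; and that the step $\mathrm{tr}(\Pi^p_{\pm c_p}\Pi^Q_{\bl_Q}\Pi^p_{\pm c_p}) = \mathrm{tr}(\Pi^p_{\pm c_p}\Pi^Q_{\bl_Q})$ needs only cyclicity and $(\Pi^p_{\pm c_p})^2 = \Pi^p_{\pm c_p}$ (alternatively, one may use that $\Pi^p_{\pm c_p}$ commutes with $\Pi^Q_{\bl_Q}$ since all plaquette terms commute).
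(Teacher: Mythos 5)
Your proposal is correct and follows essentially the same route as the paper's proof: both combine the trace of the symmetry identity from \Cref{lem:sym} (using unitarity of $L_p$ and cyclicity) with the resolution $\Pi^p_{+c_p}+\Pi^p_{-c_p}=\cI$ from \Cref{lem:2eig} to split $\mathrm{tr}(\Pi^Q_{\bl_Q})$ into two equal halves. The only difference is the order of the two steps, which is immaterial.
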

\begin{proof}
	Note that 
	\begin{align}
		\mathrm{tr}\left(  \Pi^p_{+c_p}\Pi^{Q}_{\bl_Q} 	 \Pi^p_{+c_p} \, + \,  \Pi^p_{-c_p}\Pi^{Q}_{\bl_Q}\Pi^p_{-c_p} \right) & = 	\text{tr}\left(  (\Pi^p_{+c_p})^2\Pi^{Q}_{\bl_Q} 	 \, + \,  (\Pi^p_{-c_p})^2\Pi^{Q}_{\bl_Q} \right) \\
		& = \text{tr} \left(\Pi^{Q}_{\bl_Q}\right)\, ,
		\end{align}
where the last equality comes is because $\Pi_{+c_p},\Pi_{-c_p}$ are projections, and $\Pi_{+c_p}+\Pi_{-c_p}=\cI$.

Since $L_p$ is a unitary, by \Cref{lem:sym} we have that 
\begin{align}
    \text{tr}(\Pi^p_{+c_p}\Pi^{Q}_{\bl_Q} 	 \Pi^p_{+c_p}) = \Pi^p_{-c_p}\Pi^{Q}_{\bl_Q} 	 \Pi^p_{-c_p}\,.
\end{align}
Thus $\text{tr}( \Pi^p_{+c_p}\Pi^{Q}_{\bl_Q} 	 \Pi^p_{+c_p}) = \frac{1}{2} \text{tr}( \Pi^{Q}_{\bl_Q})$.
\end{proof}

Underyling the \Cref{thm:4local2} is the following algorithm. We will prove \Cref{thm:4local2} by proving correctness via
\Cref{lem:reduction}. Recall that by \Cref{cor:PA} the Hamiltonian $H_G^{(2)}$ is 2-local after we group some qubits into qudits.
    Using the notation from \Cref{sec:2local}, $H_G^{(2)}$ and $H_G^{(2c)}$ denote the $2$-local CLH and the
    corresponding classical
    Hamiltonian, and $\{\bb_\bj\}_{\bb_\bj}$ denote the computational basis of the grouped qudits. Since
    $G$ is planar (\Cref{cor:PA} item (1)) the number of edges $m$ is $\mc O(n)$, with $n$ being the number of
    vertices. As usual, we assume access to a classical Gibbs sampler which obtains a computational basis state
$\ket{\psi(\bb_\bj)}$ with probability $p(\bb_\bj)$ in time $T+ \mc O(m)=T+ \mc O(n)$. That is, we prepare a state
  \begin{align}
      &\rho(\cP):= \sum_{\bb_\bj} p(\bb_\bj)  \ket{\psi(\bb_\bj)}\bra{\psi(\bb_\bj)}\\
      \text{such that }&\left\| \rho(\cP) - \rho(H_G^{(2)},\beta)\right\|_1 \leq \epsilon\,.	\label{eq:gibbs}
  \end{align}
  Here we did not write down the explicit formula for $p(\bb_\bj)$ since we will not use it. In the second step of the
  algorithm, we sequentially measure the current state with respect to each removed term $p \in \cR$ and perform an oblivious
  randomized correction. The details are in \Cref{alg:cA}.  
	
\begin{algorithm}[H]
\caption{Oblivious Randomized Correction}\label{alg:cA}
\begin{algorithmic}[1]
\State Set the current completed set as $Q=\cP$.
\State Set the current state as $\rho(Q)\leftarrow  \ket{\psi(\bb_\bj)}$. \Comment{Prepare $\rho(\cP)$.}
\For{$p\in \cR$}
	\State Measure the current state $\rho(Q)$ w.r.t measurement $p$.  \label{line:rQ}
	\If{the outcome is $\lambda_p\in \{\pm c_p\}$}
		\State w.p. $\frac{\exp(-\beta \lambda_p)}{\exp(-\beta \lambda_p)+\exp(\beta \lambda_p)}$ do nothing and w.p. $\frac{\exp(\beta \lambda_p)}{\exp(-\beta \lambda_p)+\exp(\beta \lambda_p)}$ apply $L_p$ to the measured states.
	\EndIf
	\State Set $Q  \leftarrow Q\cup \{p\}$. Denote the current state as $\rho(Q)$.
\EndFor
\end{algorithmic}
\end{algorithm}

We prove correctness by induction. Define $H_Q:=\sum_{p\in Q} p$. We claim the following.

\begin{lemma}\label{lem:reduction}
    Assume $\bb_\bj$ are sampled from the correct classical Gibbs distribution over $H^{(2c)}_G$. At the end of each
    \textbf{for} iteration in \Cref{alg:cA}, the current state $\rho(Q)$ satisfies
	\begin{align}
			\|\rho(Q)- \rho(H_Q,\beta)\|_1 \leq \epsilon\,. \label{eq:induction}
	\end{align} 
\end{lemma}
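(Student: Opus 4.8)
The plan is to prove \Cref{lem:reduction} by induction on the size of $Q$, following the order in which terms of $\cR$ are processed by \Cref{alg:cA}. The base case is $Q = \cP$: here $\rho(Q) = \rho(\cP) = \sum_{\bb_\bj} p(\bb_\bj) \ketbra{\psi(\bb_\bj)}{\psi(\bb_\bj)}$, and by \Cref{eq:gibbs} this is $\epsilon$-close in trace distance to $\rho(H_G^{(2)},\beta) = \rho(H_\cP,\beta)$, which is exactly the claim since $H_Q = H_\cP$. For the inductive step, suppose $\|\rho(Q) - \rho(H_Q,\beta)\|_1 \le \epsilon$ and we process the next term $p \in \cR\setminus Q$, obtaining $Q' = Q\cup\{p\}$. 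Since trace distance is non-increasing under the quantum channel implemented in one \textbf{for} iteration (measure $p$, then apply the oblivious randomized correction), it suffices to show that applying this same channel to the \emph{ideal} state $\rho(H_Q,\beta)$ produces exactly $\rho(H_{Q'},\beta)$; then the triangle inequality gives $\|\rho(Q') - \rho(H_{Q'},\beta)\|_1 \le \|\rho(Q) - \rho(H_Q,\beta)\|_1 \le \epsilon$.

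So the core of the argument is the identity: if $\mathcal{E}_p$ denotes the channel "measure $p$, get outcome $\lambda_p \in \{\pm c_p\}$, then with probability $\frac{e^{\beta\lambda_p}}{e^{\beta\lambda_p}+e^{-\beta\lambda_p}}$ apply $L_p$", then $\mathcal{E}_p(\rho(H_Q,\beta)) = \rho(H_{Q'},\beta)$. To see this, first decompose $\rho(H_Q,\beta) = \frac{1}{Z_Q}\exp(-\beta H_Q) = \frac{1}{Z_Q}\sum_{\bl_Q} e^{-\beta\lambda(Q)} \Pi^Q_{\bl_Q}$. Since $p$ commutes with every term in $Q$, measuring $p$ refines each $\Pi^Q_{\bl_Q}$ into $\Pi^p_{+c_p}\Pi^Q_{\bl_Q}\Pi^p_{+c_p} + \Pi^p_{-c_p}\Pi^Q_{\bl_Q}\Pi^p_{-c_p}$. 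By \Cref{lem:dim} these two pieces have equal trace $\tfrac12\mathrm{tr}(\Pi^Q_{\bl_Q})$, and by \Cref{lem:sym} the correction operator $L_p$ maps the $+c_p$ piece exactly onto the $-c_p$ piece and vice versa. Therefore, starting from the piece $\Pi^p_{+c_p}\Pi^Q_{\bl_Q}\Pi^p_{+c_p}$ (which has outcome $\lambda_p = +c_p$), with probability $\frac{e^{-\beta c_p}}{e^{\beta c_p}+e^{-\beta c_p}}$ we keep it as a "$+c_p$" contribution and with the complementary probability we map it to a "$-c_p$" contribution; symmetrically for the $-c_p$ piece. Collecting terms, the total weight on $\Pi^p_{+c_p}\Pi^Q_{\bl_Q}\Pi^p_{+c_p}$ after the channel is proportional to $e^{-\beta\lambda(Q)}\big(\tfrac12\cdot\tfrac{e^{-\beta c_p}}{e^{\beta c_p}+e^{-\beta c_p}} + \tfrac12\cdot\tfrac{e^{-\beta c_p}}{e^{\beta c_p}+e^{-\beta c_p}}\big) = e^{-\beta\lambda(Q)}\cdot\tfrac{e^{-\beta c_p}}{e^{\beta c_p}+e^{-\beta c_p}}$, i.e. proportional to $e^{-\beta(\lambda(Q)+c_p)}$ after absorbing the outcome-independent factor $\tfrac{1}{e^{\beta c_p}+e^{-\beta c_p}}$; and symmetrically the weight on the $-c_p$ piece is proportional to $e^{-\beta(\lambda(Q)-c_p)}$. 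Summing over $\bl_Q$, this is precisely $\frac{1}{Z_{Q'}}\sum_{\bl_{Q'}} e^{-\beta\lambda(Q')}\Pi^{Q'}_{\bl_{Q'}} = \rho(H_{Q'},\beta)$, where we used \Cref{lem:2eig} so that the eigenvalues of $p$ are exactly $\pm c_p$ and $\Pi^{Q'}_{\bl_{Q'}}$ ranges over all common eigenspaces.

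The main obstacle is bookkeeping the normalization and the "oblivious" nature of the correction: the correction probability in \Cref{alg:cA} depends only on the measured outcome $\lambda_p$ (equivalently only on $\Pi_{\lambda(\psi)}$, not on $\ket\psi$), so one must be careful that \Cref{lem:sym} is applied at the level of the full eigenspace projector $\Pi^Q_{\bl_Q}$ rather than to an individual pure state — indeed \Cref{eq:eigenstate_not_equal} warns that $L_p\Pi^p_{+c_p}\ket\psi$ need not be proportional to $\Pi^p_{-c_p}\ket\psi$ — and it is exactly the channel-level (density-matrix) identity $L_p\Pi^p_{+c_p}\Pi^Q_{\bl_Q}\Pi^p_{+c_p}L_p = \Pi^p_{-c_p}\Pi^Q_{\bl_Q}\Pi^p_{-c_p}$ that makes everything go through. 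A secondary point to verify is that contraction of trace distance under $\mathcal{E}_p$ applies even though $\mathcal{E}_p$ is defined via a measurement followed by an outcome-conditioned unitary — but this is a legitimate quantum channel (a convex combination over outcomes of unitary-conjugated projections), so monotonicity of trace distance under CPTP maps applies directly.
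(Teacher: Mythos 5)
Your proposal is correct and follows essentially the same route as the paper's proof: induction on $Q$ with the base case given by \Cref{eq:gibbs}, the channel-level identity $\cN_p(\rho(H_Q,\beta)) = \rho(H_{Q\cup\{p\}},\beta)$ established via \Cref{lem:sym}, \Cref{lem:dim}, and \Cref{lem:2eig}, and the error carried through by monotonicity of trace distance under CPTP maps. The subtleties you flag (applying the symmetry at the level of eigenspace projectors rather than pure states, and the legitimacy of the data-processing step) are exactly the ones the paper's proof handles.
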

\begin{proof}
When $Q=\cP$, note that by definition $H_\cP = H_G^{(2)}$. Thus Lemma \ref{lem:reduction} holds by assumption on the
initial distribution, as given in \Cref{eq:gibbs}.

Suppose Lemma \ref{lem:reduction} holds for a set $Q$.  Denote $\Lambda_Q$ be the set of distinct vectors $\bl_Q$ where
$\Pi^{Q}_{\bl_Q}$ is not 0. Note that
\begin{align}
	\rho(H_Q,\beta) 	=\sum_{\bl_Q\in\Lambda_Q} \frac{\exp(-\beta \lambda(Q))}{Z(Q)}  \cdot \Pi^Q_{\bl_Q},
\end{align}
where $Z(Q)$ is the partition function for $H_Q$ at inverse temperature $\beta$,
\begin{align}
    Z(Q):=\sum_{\bl_Q\in\Lambda_Q}	 \exp(-\beta \lambda(Q)) \cdot  \text{tr}\left(\Pi^Q_{\bl_Q} \right)\,.
\end{align}

 Now, consider the next iteration where we measure some $p\in \cR\backslash Q$.
 Let us first  assume  that in \Cref{alg:cA} line \ref{line:rQ}, we are measuring the exact Gibbs state 
 \[
     \hat{\rho}(Q):=\rho(H_Q,\beta)
\]
 rather than $\rho(Q)$. We may represent the operation performed during each iteration as a quantum channel $\cN_p$.
 Then, the state at the end of the iteration is
\begin{align}
	  \hat{\rho}(Q\cup\{p\}) & := \cN_p(\hat{\rho}(Q))\label{eq:basic} 
	  \\
	   &= \sum_{\bl_Q} \sum_{\lambda_p\in \{\pm c_p\}}  \left[ \frac{\exp(-\beta \lambda(Q))}{Z(Q)} \cdot   \Pi^p
	  _{\lambda_p} \Pi^Q_{\bl_Q} \Pi^p_{\lambda_p} \cdot \frac{\exp(-\beta \lambda_p)}{\exp(-\beta \lambda_p)+\exp(\beta \lambda_p)}\right. \nonumber\\
	   &\quad\left.+ \frac{\exp(-\beta \lambda(Q))}{Z(Q)} \cdot  L_p \Pi^p
	  _{\lambda_p} \Pi^Q_{\bl_Q} \Pi^p_{\lambda_p}  L_p^\dagger \cdot \frac{\exp(\beta \lambda_p)}{\exp(-\beta \lambda_p)+\exp(\beta \lambda_p)}\right]\nonumber\\
	  &= \sum_{\bl_Q} \sum_{\lambda_p\in \{\pm c_p\}}  \left[ \frac{\exp(-\beta \lambda(Q))}{Z(Q)} \cdot   \Pi^p
	  _{\lambda_p} \Pi^Q_{\bl_Q} \Pi^p_{\lambda_p} \cdot \frac{\exp(-\beta \lambda_p)}{\exp(-\beta \lambda_p)+\exp(\beta \lambda_p)}\right.\nonumber\\
	   &\quad+\left. \frac{\exp(-\beta \lambda(Q))}{Z(Q)} \cdot   \Pi^p
	  _{-\lambda_p} \Pi^Q_{\bl_Q} \Pi^p_{-\lambda_p}  \cdot \frac{\exp(\beta \lambda_p)}{\exp(-\beta \lambda_p)+\exp(\beta \lambda_p)}\right]\label{eq:flip}\\
	  &= \sum_{\bl_Q} \sum_{\lambda_p\in \{\pm c_p\}}  \frac{2\exp(-\beta \lambda(Q))}{Z(Q)}   \frac{\exp(-\beta \lambda_p)}{\exp(-\beta \lambda_p)+\exp(\beta \lambda_p)} 
\cdot   \Pi^p
	  _{\lambda_p} \Pi^Q_{\bl_Q} \Pi^p_{\lambda_p}\label{eq:49}
	 \end{align}
where \Cref{eq:flip} comes from Lemma \ref{lem:sym}, and \Cref{eq:49} comes from renaming the $-\lambda_p$ to $\lambda_p$ in the second 
half of \Cref{eq:flip}.

We next argue that $\hat{\rho}(Q\cup\{p\})$ is equal to $\rho(H_{Q\cup\{p\}},\beta)$.
First, notice that since the terms in $Q\cup\{p\}$ are commuting, we have 
\begin{align}
		\Pi^p
	  _{\lambda_p} \Pi^Q_{\bl_Q} \Pi^p_{\lambda_p} = \Pi^{Q\cup \{p\}}_{\bl_{Q\cup\{p\}}}\,. \label{eq:50}
\end{align}
Since $\hat{\rho}(Q \cup \{p\})$ is a positive linear combination of positive operators, we have that
$\hat{\rho}(Q\cup\{p\})\succeq 0$. Moreover, it is correctly normalized:
\begin{align}
    \text{tr}(\hat{\rho}(Q\cup\{p\}))	 &:=  \sum_{\bl_Q} \sum_{\lambda_p\in \{\pm c_p\}} \frac{2\exp(-\beta \lambda(Q))}{Z(Q)}   \frac{\exp(-\beta \lambda_p)}{\exp(-\beta \lambda_p)+\exp(\beta \lambda_p)}
    \cdot \text{tr}(\Pi^p
	  _{\lambda_p} \Pi^Q_{\bl_Q} \Pi^p_{\lambda_p}) \\
	  & = \sum_{\bl_Q} \sum_{\lambda_p\in \{\pm c_p\}} \frac{2\exp(-\beta \lambda(Q))}{Z(Q)}   \frac{\exp(-\beta \lambda_p)}{\exp(-\beta \lambda_p)+\exp(\beta \lambda_p)}
      \cdot \frac{1}{2} \text{tr}(\Pi^Q_{\bl_Q}) \label{eq:half}\\
      & = \sum_{\bl_Q}\frac{\exp(-\beta \lambda(Q))}{Z(Q)}  \text{tr}(\Pi^Q_{\bl_Q})   \sum_{\lambda_p\in \{\pm c_p\}}  \frac{\exp(-\beta \lambda_p)}{\exp(-\beta \lambda_p)+\exp(\beta \lambda_p)}\\
      & = \text{tr}(\rho(H_Q,\beta))\\
&=1\,, \label{eq:48}
\end{align}
where \Cref{eq:half} comes from \Cref{lem:dim}.
 In summary,
 \begin{itemize}
     \item  By \Cref{eq:48}, $\hat{\rho}(H\cup \{p\})$ is a quantum state (this can also be inferred from the definition
         of the algorithm. However, the above calcuations also show this explicitly.)
     \item \Cref{eq:49,eq:50} imply that $\hat{\rho}(H\cup \{p\})$  can be block-diagonalized with respect to the
         projectors $\Pi^{Q\cup \{p\}}_{\bl_{Q\cup\{p\}}}$, as should be true for a genuine Gibbs state. Additionally,
         within a fixed $\bl_Q$, the term $\exp(-\beta\lambda_p)+\exp(\beta\lambda_p)$ in the denominator of the weights
         in \Cref{eq:49} takes the same value for each $\lambda_p \in\{\pm c_p\}$. Thus, the eigenvalues are
         proportional to  ${\exp(-\beta \lambda(Q)-\beta \lambda_p)}$, which shows that $\hat \rho(H \cup \{p\})$ also
         has the correct weights.
 \end{itemize}
  We conclude that
 \begin{align}
 &\cN_p(\hat \rho(Q)) = \hat{\rho}(Q\cup\{p\}) =\rho(H_{Q\cup\{p\}},\beta)\,.
 \end{align}

 Of course, in Algorithm \ref{alg:cA} we start the \textbf{for} loop with the state $\rho(Q)$, which may not be the
 exact Gibbs state $\hat \rho(Q)$. Nonetheless, for the final state $\rho(Q\cup\{p\})$, we derive
\begin{align}
	\|\rho(Q\cup\{p\}) - \rho(H_{Q\cup\{p\}},\beta) \|_1 &= 	\|\cN_p(\rho(Q))- \cN_p(\hat{\rho}(Q)) \|_1\\
	& \leq \|\rho(Q)-\hat{\rho}(Q)\|_1\label{eq:trace}\\
	&\leq \epsilon \label{eq:epsilon}. 
\end{align}
\Cref{eq:trace} comes from the monotonicity of trace distance under quantum channels, and \Cref{eq:epsilon} comes
from the induction hypothesis that \Cref{eq:induction} holds at the beginning of each iteration. 
\end{proof}

\begin{proof}[Proof of \Cref{thm:4local2}.]
Theorem \ref{thm:4local2} is just a corollary of \Cref{lem:reduction}. The runtime of \Cref{alg:cA} is the sum of the time used for preparing the Gibbs state  of $H^{(2)}_G$ and the time used to perform the randomized correction for $\cR$. Recall that since 
$G$ is a constant-degree planar graph, the number of edges is $\mc O(m)=\mc O(n)$. Thus 
 the total runtime of \Cref{alg:cA}  is 
\begin{align}
    T + \mc O(m) +  |\cR|\times \mc O(n) = T + \mc O(n^2),
\end{align}
where the $\mc O(n)$ is the cost for the correction operation $L_p$ which is a tensor product state on at most $n$ qubits. 
$|\cR|=\cO(n)$ is the size of the set $\cR$.

\end{proof}

\section{Reduction for 2D (4-local) qubit CLH with classical qubit}
\label{sec:on2D}
\subsection{\texorpdfstring{A review of the techniques of Aharonov et. al \cite{aharonov2018complexity}}{Review of the techniques of Aharonov et. al}}
\label{sec:aharanov_review}

Recall that \cite{aharonov2018complexity} studied the structure of qubit 2D commuting local Hamiltonians to argue that preparing the ground state can be done in $\textsf{NP}$. The two main technical results of our work (\Cref{thm:4local2,thm:gibbs_with_classical}) require opening up this result so that we are not only able to prepare a single ground state, but sample from the Gibbs state of the Hamiltonian. This requires a good understanding of their correction operators, as well as the role classical qubits play in their proof. As such, we dedicate this section to reviewing the techniques used in their paper. 

\subsubsection{\texorpdfstring{$\mathrm C^\star$-algebras and the Structure Lemma}{C-star algebras and the Structure Lemma}}\label{sec:C}
The primary technical tool used in \cite{aharonov2018complexity} (and in nearly all other works on commuting local
Hamiltonians \cite{bravyi2003commutative, schuch2011complexity, irani2023commuting}) is the Structure Lemma for $\mathrm C^\star$ algebras. 

\begin{definition}[$\mathrm C^\star$-algebra]
    For any Hilbert space $\mc H$, let $\mc L(\mc H)$ be the set of all linear operators over $\mc H$. Then, a $\mathrm
    C^\star$-algebra is any complex algebra $\mathcal{A} \subseteq \mc L(\mc H)$ that is closed under the $\dagger$
    operation (playing the role of complex conjugation) and includes the identity. 
\end{definition}

\begin{definition}[Commuting algebras]
    Let $\mathcal{A}$ and $\mathcal{A}'$ be two $\mathrm C^*$-algebras on $\mc H$.  We say $\mathcal{A}$ and $\mathcal{A}'$ \emph{commute} if $[h, h'] = 0$ for all $h \in \mathcal{A}$ and $h' \in \mathcal{A}'$.
\end{definition}

The connection between local Hamiltonians and algebras is made through the concept of an ``induced algebra''.

\begin{definition}[Induced algebra]
    Let $h$ be a Hermitian operator acting on two qudits $q_1$ and $q_2$.  Consider the decomposition of $h$ into
    \begin{equation*}
        h = \sum_{i, j} (h_{ij})_{q_1} \otimes (\ketbra{i}{j})_{q_2}\,.
    \end{equation*}
    Here $\{\ket{i}_{q_2}\}_{i}$ is an orthogonal basis of $\mc H^{q_2}$, the Hilbert space of $q_2$. Then the induced
    algebra of $h$ on $q_1$ is the $\mathrm C^*$-algebra generated by $\{h_{ij}\}_{ij}$ and $\cI$, denoted as 
    \begin{equation*}
 \mathcal{A}_{q_1}(h):= \left\langle \{ (h_{ij})_{q_1}\}_{ij}, \cI \right\rangle\,.
    \end{equation*}
    If the operator $h$ is clear from context, we will abbreviate  $\mc A_{q_1}(h)$ as $\mc A_{q_1}$.
\end{definition}
The following lemma tells us that the induced algebra is independent of the choice of basis for $q_2$.
\begin{lemma}[Claim B.3 of \cite{aharonov2018complexity}]
    Let $h$ be a Hermitian operator and consider two decompositions of $h$
    \begin{equation*}
        h = \sum_{i, j} (h_{ij})_{q_1} \otimes (g_{ij})_{q_2} = \sum_{i, j} (\hat{h}_{ij})_{q_1} \otimes (\hat{g}_{ij})_{q_2}\,,
    \end{equation*}
    where both the sets $\{g_{ij}\}_{ij}$ and $\{\hat{g}_{ij}\}_{ij}$ are linearly independent.  Then the $\mathrm C^*$-algebra generated by $\{h_{ij}\}_{ij}$ and $\{\hat{h}_{ij}\}_{ij}$ are the same. In particular, the Schmidt decomposition of $h$ is one way to generate an induced algebra.
\end{lemma}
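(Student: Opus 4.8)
The plan is to show that \emph{every} decomposition of $h$ whose $q_2$-operators are linearly independent yields one and the same subspace of operators on $q_1$ — the ``marginal operator space'' of $h$ — and then to note that the $\mathrm C^\star$-algebra generated by a set together with $\cI$ depends on the set only through its linear span. Concretely, viewing $h \in \mc L(\mc H^{q_1}) \otimes \mc L(\mc H^{q_2})$, I would introduce
\[
  U(h) := \bigl\{\, (\mathrm{id}\otimes \psi)(h) \;:\; \psi \in \mc L(\mc H^{q_2})^\star \,\bigr\} \subseteq \mc L(\mc H^{q_1}),
\]
the image of the linear map $\psi \mapsto (\mathrm{id}\otimes\psi)(h)$ that contracts the $q_2$-leg against a linear functional. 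This is a subspace, and it is manifestly defined without reference to any decomposition.

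The key step is to prove: if $h = \sum_{ij}(h_{ij})_{q_1}\otimes (g_{ij})_{q_2}$ with $\{g_{ij}\}_{ij}$ linearly independent, then $\mathrm{span}\{h_{ij}\}_{ij} = U(h)$. For $\mathrm{span}\{h_{ij}\} \supseteq$ nothing is needed: for every $\psi$, $(\mathrm{id}\otimes\psi)(h) = \sum_{ij}\psi(g_{ij})\,h_{ij}$ lies in $\mathrm{span}\{h_{ij}\}$, so $U(h)\subseteq \mathrm{span}\{h_{ij}\}$. For the reverse inclusion I would use linear independence: extend $\{g_{ij}\}$ to a basis of $\mc L(\mc H^{q_2})$ and take the dual functionals $\{\psi_{kl}\}$, so that $(\mathrm{id}\otimes\psi_{kl})(h) = \sum_{ij}\psi_{kl}(g_{ij})\,h_{ij} = h_{kl} \in U(h)$; hence $\mathrm{span}\{h_{ij}\}\subseteq U(h)$. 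Applying this to both given decompositions yields $\mathrm{span}\{h_{ij}\} = U(h) = \mathrm{span}\{\hat h_{ij}\}$.

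To finish, I would recall that a $\mathrm C^\star$-algebra containing a set $S$ and $\cI$ is in particular a subspace containing $S$, so it contains $\mathrm{span}(S) + \bC\cI$, and conversely it is the smallest $\mathrm C^\star$-algebra containing $\mathrm{span}(S) + \bC\cI$; thus the generated algebra depends on $S$ only through $\mathrm{span}(S) + \bC\cI$. Since $\mathrm{span}\{h_{ij}\} + \bC\cI = \mathrm{span}\{\hat h_{ij}\} + \bC\cI$ by the previous paragraph, the two generated $\mathrm C^\star$-algebras coincide, which is the claim. For the ``in particular'' remark: the Schmidt decomposition of $h$ with respect to the Hilbert--Schmidt inner products on $\mc L(\mc H^{q_1})$ and $\mc L(\mc H^{q_2})$ has the form $h = \sum_k s_k A_k\otimes B_k$ with $\{B_k\}$ orthonormal, hence $h = \sum_k A_k \otimes (s_k B_k)$ with $\{s_k B_k\}$ linearly independent; so it is a special case of the above and generates $U(h)$ together with $\cI$, i.e.\ the induced algebra.

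There is no real obstacle here — this is a linear-algebra fact. The only points requiring care are to invoke the linear-independence hypothesis exactly where the inclusion $\mathrm{span}\{h_{ij}\}\subseteq U(h)$ needs it (realizing each coefficient operator as an honest contraction of $h$), and to be explicit that ``the $\mathrm C^\star$-algebra generated by'' is insensitive to replacing a generating set by its span together with the identity.
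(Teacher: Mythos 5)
Your proof is correct. Note that the paper does not actually prove this statement---it is quoted verbatim as Claim~B.3 of \cite{aharonov2018complexity} and used as a black box---so there is no in-paper argument to compare against. Your argument is the standard one for this fact: the coordinate-free characterization of $\mathrm{span}\{h_{ij}\}$ as the image of the partial-contraction map $\psi \mapsto (\mathrm{id}\otimes\psi)(h)$, with the linear-independence hypothesis invoked exactly where it is needed (to produce dual functionals realizing each $h_{kl}$ as a contraction of $h$), together with the observation that the generated $\mathrm C^\star$-algebra depends on a generating set only through its linear span. The treatment of the Schmidt decomposition as a special case is also correct. No gaps.
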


The induced algebra gives us a tool to analyze whether two terms commute.
\begin{lemma}
    Let $(h_1)_{q,q_1}$ and $(h_2)_{q,q_2}$ be two Hermitian operators. Then $[h_1, h_2] = 0$ if and only if $\mathcal{A}_{q}(h_1)$ commutes with $\mathcal A_q(h_2)$.
\end{lemma}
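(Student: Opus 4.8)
The plan is to reduce the commutation of $h_1$ and $h_2$ to commutation of their ``matrix-element'' generators, and then bootstrap from generators to the full generated $\mathrm{C}^\star$-algebras using the fact that the commutant of a self-adjoint set is itself a $\mathrm{C}^\star$-algebra. Throughout I take $q, q_1, q_2$ to be pairwise distinct qudits (the relevant case, and the one in which the statement holds).

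First I would fix orthonormal bases $\{\ket{i}\}$ of $\mc H^{q_1}$ and $\{\ket{k}\}$ of $\mc H^{q_2}$ and write $h_1 = \sum_{i,j} (a_{ij})_q \otimes (\ketbra{i}{j})_{q_1}$, $h_2 = \sum_{k,l} (b_{kl})_q \otimes (\ketbra{k}{l})_{q_2}$, so that $\mc A_q(h_1) = \langle \{a_{ij}\}_{ij}, \cI \rangle$ and $\mc A_q(h_2) = \langle \{b_{kl}\}_{kl}, \cI \rangle$. Viewing both operators as acting on $\mc H^q \otimes \mc H^{q_1} \otimes \mc H^{q_2}$ (padding with identity on the non-shared factor), a direct computation gives
\[
[h_1, h_2] \;=\; \sum_{i,j,k,l} [a_{ij}, b_{kl}]_q \otimes (\ketbra{i}{j})_{q_1} \otimes (\ketbra{k}{l})_{q_2}.
\]
Because the operators $\{(\ketbra{i}{j})_{q_1} \otimes (\ketbra{k}{l})_{q_2}\}_{i,j,k,l}$ are linearly independent, this sum vanishes iff $[a_{ij}, b_{kl}] = 0$ for all $i,j,k,l$. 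This already yields the ($\Leftarrow$) direction, since if the algebras commute then in particular their generators $a_{ij}$ and $b_{kl}$ do.

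For ($\Rightarrow$) I would promote generator-wise commutation to algebra-wise commutation. The key observation is that for any self-adjoint subset $\mc S \subseteq \mc L(\mc H^q)$, the commutant $\mc S' := \{x : [x,s] = 0 \ \forall s \in \mc S\}$ is a $\mathrm{C}^\star$-algebra: it is plainly a unital algebra, and it is closed under $\dagger$ because $[x,s]=0$ implies $[x^\dagger, s^\dagger] = 0$ while $\mc S = \mc S^\dagger$. Since $h_1, h_2$ are Hermitian one checks $a_{ij}^\dagger = a_{ji}$ and $b_{kl}^\dagger = b_{lk}$, so $\{a_{ij}\}_{ij} \cup \{\cI\}$ and $\{b_{kl}\}_{kl} \cup \{\cI\}$ are self-adjoint. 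Assuming $[h_1,h_2]=0$, every $a_{ij}$ lies in the commutant of $\{b_{kl}\}_{kl}\cup\{\cI\}$; as that commutant is a $\mathrm{C}^\star$-algebra, it contains the $\mathrm{C}^\star$-algebra generated by the $a_{ij}$ and $\cI$, namely $\mc A_q(h_1)$. Hence every element of $\mc A_q(h_1)$ commutes with every $b_{kl}$. Running the same argument with the roles reversed — now using that $\mc A_q(h_1)$ is itself a self-adjoint set, so its commutant is again a $\mathrm{C}^\star$-algebra — every $b_{kl}$ lies in that commutant, which therefore contains $\mc A_q(h_2)$. Thus $\mc A_q(h_1)$ and $\mc A_q(h_2)$ commute.

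The only real subtlety I anticipate is bookkeeping rather than a conceptual hurdle: correctly padding $h_1$ and $h_2$ with identities so the commutator identity above is valid (this is where distinctness of $q, q_1, q_2$ enters), and checking self-adjointness of the generating sets so that ``commutant'' can be upgraded to ``$\mathrm{C}^\star$-algebra'' — without the latter one would only conclude that the plain, not-necessarily-$\dagger$-closed algebra generated by the $a_{ij}$ commutes with the $b_{kl}$, which is a priori weaker than a statement about $\mc A_q(h_1)$. Everything else is routine.
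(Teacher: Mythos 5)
Your proof is correct. The paper states this lemma without proof (it appears in the review of the $\mathrm{C}^\star$-algebra machinery from \cite{aharonov2018complexity}, where it is treated as a standard fact), so there is no in-paper argument to compare against; your argument is the standard one: expand $[h_1,h_2]=\sum_{i,j,k,l}[a_{ij},b_{kl}]_q\otimes(\ketbra{i}{j})_{q_1}\otimes(\ketbra{k}{l})_{q_2}$, use linear independence of the matrix units to reduce to generator-wise commutation, and then upgrade to the full algebras via the observation that the commutant of a self-adjoint set is itself a $\mathrm{C}^\star$-algebra (using $a_{ij}^\dagger=a_{ji}$ from Hermiticity of $h_1$). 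You also correctly flag the hypothesis $q_1\neq q_2$, which the paper leaves implicit in the notation but which is genuinely needed — e.g.\ $Z\otimes Z$ and $X\otimes X$ on the same pair of qubits commute while $\langle Z\rangle$ and $\langle X\rangle$ do not.
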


Finally, we recall the Structure Lemma, which was first applied to understanding commuting local Hamiltonians by \cite{bravyi2003commutative}. Since then, it has been the principal tool used by subsequent works on the CLH \cite{aharonov2013commuting, aharonov2011complexity, schuch2011complexity, irani2023commuting}. At a high level, the Structure Lemma says that algebras can be block-diagonalized, and that within each of these blocks, the algebra takes on a tensor product structure which identifies its commutant. For a proof of the lemma, see \cite[Section~7.3]{gharibian2015quantum}.
\begin{lemma}[The Structure Lemma]\label{lem:structure}
    Let $\mathcal{A} \subseteq \mc L(\mc H^q)$ be a $\mathrm C^\star$-algebra on $\mc H^q$. Then there exists a direct sum decomposition $\mc H^q = \bigoplus_{i} \mc H^{q}_i$ and a tensor product structure $\mc H^{q}_i = \mc H^q_{(i,1)} \otimes \mc H^q_{(i,2)}$ such that
    \begin{equation*}
        \mathcal{A} = \bigoplus_i \mathcal{L}(\mc H^q_{(i,1)}) \otimes \mathcal I(\mc H^q_{(i,2)})\,.
    \end{equation*}
\end{lemma}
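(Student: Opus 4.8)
The plan is to prove the operator-algebraic form of Artin--Wedderburn in three stages: split off the center to obtain the direct sum, show each summand is a matrix factor, and then read off the tensor-product structure from a system of matrix units. \textbf{Step 1 (the direct sum from the center).} The center $\mathcal{Z} := \{z \in \mathcal{A} : [z,a] = 0 \ \forall a \in \mathcal{A}\}$ is a commutative $\mathrm C^\star$-algebra, so by the spectral theorem it is spanned by a finite family of mutually orthogonal projections; take $\{P_i\}_i$ to be its minimal nonzero projections, so $\sum_i P_i = \cI$. Put $\mc H^q_i := P_i \mc H^q$. Since each $P_i\in\mathcal A$ is central, $\mathcal{A} = \bigoplus_i P_i \mathcal{A} P_i$, and each $\mathcal{A}_i := P_i \mathcal{A} P_i$ is a $\mathrm C^\star$-algebra acting on $\mc H^q_i$ whose center is $\mathbb{C} P_i$ by minimality of $P_i$; that is, $\mathcal{A}_i$ is a factor. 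It therefore suffices to treat a single factor.

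For a single factor $\mathcal A_i$, \textbf{Step 2 (a factor is a matrix algebra).} Since $\mc H^q_i$ is finite-dimensional, $\mathcal{A}_i$ contains a nonzero projection $e$ that is minimal in the sense $e \mathcal{A}_i e = \mathbb{C} e$: take any nonzero projection in $\mathcal A_i$ whose range has minimal dimension, and note that if $e \mathcal{A}_i e \neq \mathbb{C}e$ then a nontrivial spectral projection of a non-scalar self-adjoint element of $e\mathcal{A}_i e$ would be a strictly smaller nonzero projection. Next, any two minimal projections $e,f$ of $\mathcal A_i$ are Murray--von Neumann equivalent inside $\mathcal A_i$: in a factor the central support of a minimal projection is $P_i$, hence $e\mathcal{A}_i f \neq 0$, and polar-decomposing any nonzero $x \in e \mathcal{A}_i f$ yields a partial isometry $v \in \mathcal{A}_i$ with $v^\dagger v = f$ and $v v^\dagger = e$ (using minimality of $e$ and $f$ to force the source and range projections to be exactly these). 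Iterating, build a maximal family of matrix units $\{e_{ab}\}_{a,b=1}^{n_i}\subseteq\mathcal A_i$ with $e_{ab}^\dagger = e_{ba}$, $e_{ab}e_{cd} = \delta_{bc}e_{ad}$, each $e_{aa}$ minimal, and $\sum_a e_{aa} = P_i$ (maximality: if $P_i - \sum_a e_{aa}\neq 0$ it would contain another minimal projection). For any $x\in\mathcal A_i$ we then have $x = \sum_{a,b} e_{aa} x e_{bb}$ with $e_{aa} x e_{bb} \in e_{aa}\mathcal{A}_i e_{bb} = e_{a1}(e_{11}\mathcal{A}_i e_{11})e_{1b} = \mathbb{C} e_{ab}$, so $\mathcal{A}_i = \mathrm{span}\{e_{ab}\} \cong M_{n_i}(\mathbb{C})$.

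\textbf{Step 3 (reading off the tensor structure).} Put $\mc H^q_{(i,1)} := \mathbb{C}^{n_i}$ with standard basis $\{\ket a\}$ and $\mc H^q_{(i,2)} := e_{11}\mc H^q_i$, of dimension $m_i$, say. The linear map $U_i : \mc H^q_{(i,1)} \otimes \mc H^q_{(i,2)} \to \mc H^q_i$ given by $U_i(\ket a \otimes \ket\xi) := e_{a1}\ket\xi$ is unitary, since the $e_{a1}$ are partial isometries with pairwise orthogonal ranges summing to $\mc H^q_i$, and conjugation by $U_i$ carries $e_{ab}$ to $\ketbra a b \otimes \cI$. Hence $U_i^\dagger \mathcal{A}_i U_i = \mathcal L(\mc H^q_{(i,1)}) \otimes \mathcal I(\mc H^q_{(i,2)})$. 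Assembling the $U_i$ into a single unitary on $\mc H^q = \bigoplus_i \mc H^q_i$ and absorbing it into the identification of the $\mc H^q_i$ gives exactly $\mathcal{A} = \bigoplus_i \mathcal L(\mc H^q_{(i,1)}) \otimes \mathcal I(\mc H^q_{(i,2)})$.

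\textbf{Main obstacle.} Steps 1 and 3 are essentially bookkeeping; the content is Step 2, and within it the single fact doing the real work is that any two minimal projections in a factor are equivalent via a partial isometry lying in the algebra (equivalently, that the central support of a minimal projection is the whole unit). This is what forces the uniform ``one block of matrix size $n_i$, flat multiplicity $m_i$'' picture rather than something more intricate, and it is exactly where both finite-dimensionality and the $\dagger$-structure (needed to extract genuine partial isometries from polar decompositions) are essential. One could instead shortcut Step 2 by quoting abstract Artin--Wedderburn for the semisimple algebra $\mathcal{A}$ together with complete reducibility of its representation on $\mc H^q$ — the latter itself following from $\dagger$-closedness, since the orthogonal complement of an invariant subspace is invariant — but the matrix-unit construction above is the self-contained route, and is the one carried out in \cite[Section~7.3]{gharibian2015quantum}.
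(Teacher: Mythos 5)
Your proof is correct; the paper itself does not prove this lemma but defers to \cite[Section~7.3]{gharibian2015quantum}, and your minimal-central-projection / matrix-unit argument is precisely the standard finite-dimensional Artin--Wedderburn proof carried out there. The key steps all check out: splitting along the minimal central projections, the equivalence of minimal projections via polar decomposition in a factor, and the unitary $U_i(\ket{a}\otimes\ket{\xi}) = e_{a1}\ket{\xi}$ that exhibits the tensor factorization.
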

We remark that \Cref{lem:structure} is equivalent to \Cref{lem:Cstructure} in \Cref{sec:2local}.

A corollary of \Cref{lem:structure}, and the reason why it is so useful for characterizing the properties of commuting local Hamiltonians, is that in order to commute with an algebra, another algebra must live entirely within the $\mc H^q_{(i,2)}$ subspaces.  Formally, we have the following.
\begin{corollary}\label{cor:structure}
    Let $(h)_{q,q_1}$ and $(h')_{q,q_2}$ be two Hermitian operators with $[h,h']=0$. Let $\mathcal{A}_q(h),\mathcal{A}_q(h') \subseteq \mc L(\mc H^q)$ be the induced algebras on $\mc H^q$.
    Suppose $\{\mc H^q_{(i,j)}\}_{i,j}$ is the decomposition induced by \Cref{lem:structure} applied to $\mathcal{A}_q(h)$. Then the following holds:
    \begin{align*}
        \mathcal{A}_q(h) &= \bigoplus_i \mathcal{L}(\mc H^q_{(i,1)}) \otimes \mc I(\mc H^q_{(i,2)})\\
        \mathcal{A}_q(h') &\subseteq \bigoplus_i \mc I(\mc H^q_{(i,1)}) \otimes \mathcal{L}(\mc H^q_{(i,2)})\,.
    \end{align*}
    Crucially, all operators keep the decomposition $\mc H^q = \bigoplus_i \mc H^q_i$ invariant.
\end{corollary}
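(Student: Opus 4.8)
The plan is to derive the whole statement from two ingredients already available: the preceding lemma that $[h,h']=0$ forces the induced algebras $\mathcal{A}_q(h)$ and $\mathcal{A}_q(h')$ to commute as subalgebras of $\mc L(\mc H^q)$, and \Cref{lem:structure} applied \emph{to $\mathcal{A}_q(h)$}. The Structure Lemma instantly supplies the decomposition $\mc H^q = \bigoplus_i \mc H^q_i$, the tensor factorization $\mc H^q_i = \mc H^q_{(i,1)} \otimes \mc H^q_{(i,2)}$, and the first displayed identity $\mathcal{A}_q(h) = \bigoplus_i \mathcal{L}(\mc H^q_{(i,1)}) \otimes \mc I(\mc H^q_{(i,2)})$. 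All that remains is to pin down the commutant of this block-diagonal algebra and observe that $\mathcal{A}_q(h')$ lands inside it.

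First I would observe that $\mathcal{A}_q(h)$ contains the orthogonal projector $\Pi_i$ onto each block $\mc H^q_i$: it is the element of $\bigoplus_i \mathcal{L}(\mc H^q_{(i,1)}) \otimes \mc I(\mc H^q_{(i,2)})$ whose $i$-th component is $\mc I(\mc H^q_{(i,1)}) \otimes \mc I(\mc H^q_{(i,2)})$ and whose other components vanish. Since every $g \in \mathcal{A}_q(h')$ commutes with all of $\mathcal{A}_q(h)$, in particular $[g,\Pi_i]=0$ for every $i$, so $g$ preserves each $\mc H^q_i$. This establishes the ``all operators keep the decomposition $\mc H^q = \bigoplus_i \mc H^q_i$ invariant'' assertion (for $\mathcal{A}_q(h)$ itself it is immediate from the displayed form), and it lets me restrict $\mathcal{A}_q(h')$ blockwise, writing $\mathcal{A}_q(h') = \bigoplus_i \mathcal{A}_q(h')|_i$ with $\mathcal{A}_q(h')|_i \subseteq \mc L(\mc H^q_{(i,1)} \otimes \mc H^q_{(i,2)})$.

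Fixing a block $i$, the restriction $\mathcal{A}_q(h')|_i$ commutes with $\mathcal{L}(\mc H^q_{(i,1)}) \otimes \mc I(\mc H^q_{(i,2)})$, which is $\mathcal{A}_q(h)$ restricted to that block. Here I would invoke (or just prove directly) the standard fact that, inside $\mc L(\mc H_1 \otimes \mc H_2)$, the commutant of $\mathcal{L}(\mc H_1) \otimes \mc I(\mc H_2)$ is exactly $\mc I(\mc H_1) \otimes \mathcal{L}(\mc H_2)$; the direct argument expands an arbitrary operator in the basis $\{\ketbra{a}{b}\}$ of $\mc L(\mc H_1)$ and imposes commutation with every $\ketbra{a}{b} \otimes \mc I$. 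Applying this in each block gives $\mathcal{A}_q(h')|_i \subseteq \mc I(\mc H^q_{(i,1)}) \otimes \mathcal{L}(\mc H^q_{(i,2)})$, and reassembling over $i$ yields the second display, $\mathcal{A}_q(h') \subseteq \bigoplus_i \mc I(\mc H^q_{(i,1)}) \otimes \mathcal{L}(\mc H^q_{(i,2)})$.

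There is essentially no hard step: the only substantive ingredient is the per-block commutant computation, which is elementary linear algebra (equivalently a one-line consequence of the double commutant theorem). The closest thing to an obstacle is purely organizational — making sure \Cref{lem:structure} is applied to $\mathcal{A}_q(h)$ rather than $\mathcal{A}_q(h')$, and checking that the block projectors genuinely lie in $\mathcal{A}_q(h)$, since it is precisely their membership that forces both algebras to respect the same block decomposition $\mc H^q = \bigoplus_i \mc H^q_i$.
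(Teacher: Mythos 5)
Your proof is correct: the block projectors $\Pi_i$ do lie in $\mathcal{A}_q(h)$ by \Cref{lem:structure}, commuting with them forces $\mathcal{A}_q(h')$ to be block-diagonal, and the per-block commutant of $\mathcal{L}(\mc H^q_{(i,1)}) \otimes \mc I(\mc H^q_{(i,2)})$ is indeed $\mc I(\mc H^q_{(i,1)}) \otimes \mathcal{L}(\mc H^q_{(i,2)})$. The paper states this corollary without proof, treating it as immediate from the Structure Lemma, and your argument is exactly the standard derivation it implicitly relies on.
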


\subsubsection{Restrictions on 2D commuting local Hamiltonians}
With the Structure Lemma in hand, we can see how \cite{aharonov2018complexity} apply these tools to show that 2D qubit
\textsf{CLH} is in \textsf{NP}. 

We start with some definitions. First, we review the notion of a \emph{classical qubit}, defined originally in \Cref{def:classical}. For a qubit $q$ and a commuting Hamiltonian $H$, let $\mc N(q)$ be the set of terms acting non-trivially on $q$. We
say that $q$ is \emph{classical} if there is a non-trivial decomposition of $\mc H^q$ (i.e. $\mc H^q = \bigoplus_{i \in \ell} \mc H^q_i$, with $\ell > 1$) and each term
$h \in \mc N(q)$ keeps this decomposition invariant. We define $\mc C_0(H)$ as the set of classical qubits of $H$. If the Hamiltonian $H$ is clear from context, we abbreviate $\mc C_0(H)$ as $\mc C_0$.
\begin{definition}[Classical restriction]
    Let $H$ be a commuting local Hamiltonian with classical qubits $\mc C_0$. Then, there exists a unitary $U =
    \cI_{\overline{\mc C_0}} \otimes \bigotimes_{q \in \mc C_0} U_q$ such that $\tilde H := U H U^\dagger$ is block
    diagonal with respect to the computational basis on $\mc C_0$. A classical assignment to $\mc C_0$ corresponds
    to a string $s \in \{0,1\}^{|\mc C_0|}$, with restricted Hamiltonian,
    \[
        H|_s = \Pi_s \tilde H \Pi_s  \quad \text{ where } \Pi_s := \bigotimes_{q \in \mc C_0} \ketbra{s_q}{s_q}_q\,.
    \] 
    Moreover, $H|_s$ is still a commuting Hamiltonian.
\end{definition}
\begin{proof}
    Any classical qubit $q \in \mc C_0$ has a decomposition into $\{\pi_q, \cI -\pi_q\}$ such that each term $h \in \mc
    N(q)$ commutes with $\pi_q = \ketbra \psi \psi$ and $\cI - \pi_q = \ketbra{\psi^\perp}{\psi^\perp}$, with $\braket \psi {\psi^\perp} = 0$. This implies there exists a unitary transformation $U_q$ on $q$ such that
    \begin{equation}
        U_q \ket \psi = \ket 0 \quad \text{and} \quad U_q \ket{\psi^\perp} = \ket 1\,.\label{eq:diag}
    \end{equation}
    Applying this to each qubit $q \in \mc C_0$ yields the desired unitary $U = \bigotimes_{q \in \mc C_0} U_q$.

    To show that $H|_s$ is commuting, we note that each $h$ in the original Hamiltonian is block diagonal with respect
    to \emph{every} $\{\pi_q, \cI - \pi_q\}$, and using \Cref{eq:diag}, we see that $U h U^\dagger$ is block diagonal
    with respect to $\ketbra 0 0$ and $\ketbra 1 1$. Since $\Pi_S$ is also diagonal in the computational basis,
    $U h U^\dagger$ commutes with $\Pi_S = \otimes_q \ketbra{s_q}{s_q}$. Thus, commutation of $\Pi_s U h U^\dagger \Pi_s$ and $\Pi_s U h' U^\dagger \Pi_s$ reduces to the commutation of $h$ and $h'$.
\end{proof}
\begin{remark}
    \label{rem:reduced_support}
    Since each term in $H|_{s}$ now acts as $\ketbra{s_q}{s_q}$ on any classical qudit $q$, we will treat each term $h \in H|_s$ as having support only on $\text{sup}(h) \cap \overline{\mc C_0}$, where $\text{sup}(h)$ is the set of qubits on which $h$ acts non-trivially.
\end{remark}

A key point is that an initial restriction $H|_s$, with $s \in \{0,1\}^{\mc C_0}$ can lead to the creation of new classical qubits in $H|_s$. For instance, in \Cref{fig:classical_propagation}, setting $s_c = \ket{1}$ removes $h$ from the Hamiltonian and causes $q$ to become classical.
\tikzset{anchor/.append code=\let\tikz@auto@anchor\relax}
\begin{figure}[!ht]
\centering
\begin{tikzpicture}[scale=1.5,
        classical/.style={draw,circle,inner sep=0pt,minimum size=5pt},    
        quantum/.style={fill,circle,inner sep=0pt,minimum size=5pt},
        alglabel/.style={font=\tiny}
    ]
    \foreach \i in {0,...,2} {
        \draw [very thin,gray] (\i,0) -- (\i,2);
    }
    \foreach \i in {0,...,2} {
        \draw [very thin,gray] (0,\i) -- (2,\i);
    }

    \draw[fill=gray!20] (0,1) rectangle node[midway,black] {$h$} ++ (1,1) ;
    \draw[fill=gray!20] (1,1) rectangle node[midway,black] {$h_1$} ++ (1,1) ;
    \draw[fill=gray!20] (0,0) rectangle node[midway,black] {$h_2$} ++ (1,1) ;
    \draw[fill=gray!20] (1,0) rectangle node[midway,black] {$h_3$} ++ (1,1) ;

    \node[classical,label={[label distance=-0.1cm]above right:$c$}] (c) at (1,2) {};
    \node[quantum,label={[label distance=.1cm]above:$q$}] (q) at (1,1) {};

    \node[label={[label distance=-0.1cm,anchor=north east]\tiny $\ketbra 0 0$}] at (c) {};
    \node[label={[label distance=-0.1cm,anchor=north west]\tiny $\ketbra 1 1$}] at (c) {};
    \node[label={[label distance=-0.1cm,anchor=south west]\tiny $\alg Z$}] at (q) {};
    \node[label={[label distance=-0.1cm,anchor=north west]\tiny $\alg \Id$}] at (q) {};
    \node[label={[label distance=-0.1cm,anchor=south east]\tiny $\alg X$}] at (q) {};
    \node[label={[label distance=-0.1cm,anchor=north east]\tiny $\alg Z$}] at (q) {};

\end{tikzpicture}
\captionsetup{width=.8\linewidth}
\caption{$c$ is an initial classical qubit. If $c$ is set to $\ket 1$, then all terms acting on $q$ have local algebras $\subseteq \alg Z$, rendering $q$ classical.}
\label{fig:classical_propagation}
\end{figure}
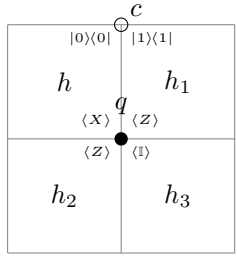

\begin{definition}[Propagated classical qubits]
    For a commuting Hamiltonian $H$, let $\mc C_0$ be the set of qubits which are classical with respect to $H$. Then given an assignment $s_0$ to $\mc C_0$, we write $\mc C_1(s_0)$ to denote the classical qubits of the restricted Hamiltonian $H|_{s_0}$. In general, we write $\mc C_i(s_0,\dots,s_{i-1})$ to refer to the classical qubits of
    \[
        H|_{(s_0,\dots,s_{i-1})} := (((H|_{s_0})|_{s_1})\dots)|_{s_{i-1}}\,.
    \]
\end{definition}

\begin{definition}[Valid restriction]
    A \emph{valid} restriction of $H$ is a sequence $\bs s = (s_0, \dots, s_i)$ and Hamiltonian $H|_{\bs s}$ such that
    $s_i$ is supported on the classical qubits from the prior restrictions $s_0, \dots, s_{i-1}$, i.e. $s_i \in
    \{0,1\}^{\mc C_i(s_0,\dots, s_{i-1})}$. We say that a valid restriction is \emph{terminating} if $H|_{\bs s}$ has no
    classical qubits and $\mc C_{i+1}(s_0,\dots, s_i) = \emptyset$.
\end{definition}
When we refer to a terminating restriction, that restriction is implictly assumed to be valid. 

\begin{definition}[Possible Classical Qubits] \label{def:possibly_classical}
    For a commuting Hamiltonian $H$, we write $\mc C$ to be the set of qubit $q$ which 
    are classical with respect to any valid restriction $\bs s$ and the Hamiltonian $H_{\bs s}$. 
\end{definition}
In other words, $\mc C$ is the set of qubits which might become classical when we sequentially choose an assignment for the current classical qubits. For example in Figure \ref{fig:classical_propagation}  $q$ is a ``possible classical qubit'' and both $c,q\in \mc C$.

Now that we have characterized all the possible classical qubits, we define the notion of a ``fully quantum'' qubit (and term).
\begin{definition}[Fully quantum qubit]
    Given a commuting Hamiltonian $H$, we say that a qubit $q$ is fully quantum if $q$ is not a possible classical qubit (i.e. for any valid restriction $\bs s$, the qubit $q$ remains non-classical in $H|_{\bs s}$).
\end{definition}
\begin{definition}[Fully quantum term]
    \label{def:fully_quantum_term}
    We say that a term $h$ of $H$ is fully quantum if all qubits $q$ on which \old{$H$} \new{$h$} acts non-trivially are fully quantum.
\end{definition}

With this notation, we can understand the first step of the algorithm of \cite{aharonov2018complexity} as applying a valid, terminating restriction $\bs s = (s_0,\dots, s_i)$ to the Hamiltonian.

\subsection{Classical qudits }\label{sec:classical}

In this section we show  a reduction from a CLH instance $H$ to a classical Hamiltonian with constant locality, albeit with some (fully quantum) terms removed, but without needing to first remove classical qubits. In general, the choice of classical qubits can affect the correction operators for a removed term. To deal with this issue, we make the following assumption.

\begin{assumption}[Fully quantum terms are Uniformly Correctable]
\label{as:uniformly_correctable}
    Let $H$ be an instance of \textsf{CLH} where the set of classical qubits is non-empty. Suppose $h$ is a fully quantum term (\Cref{def:fully_quantum_term}). Consider a terminating restriction of the Hamiltonian $H|_s$, such that $H|_s$ has no classical qubits. Suppose that in $H|_s$, $h$ is correctable via the correction operator $L_h$.  Then we assume that $h$ is correctable via the \emph{same} correction operator $L_h$ for any other terminating restriction of the Hamiltonian, $H|_t$, $t \neq s$.
\end{assumption}

Formally, our result is stated as follows.
\begin{theorem}
\label{thm:gibbs_with_classical}
    Let $H$ be an instance of CLH with some classical qubits. Then, by removing a set of fully quantum terms
    $\mc R$, the resulting Hamiltonian can be converted to a $2+\mathcal O(1)$-local classical Hamiltonian $H^{(c)}$ via a constant depth quantum circuit. Moreover, if
    \begin{itemize}
        \item we assume \Cref{as:uniformly_correctable} and,
        \item there is an algorithm to prepare the Gibbs state $\rho'$ of $H^{(c)}$ in time $T$ within precision
            $\epsilon$,
    \end{itemize}
    then there is a quantum algorithm to prepare the Gibbs state $\rho$ of the original Hamiltonian
    $H$ in time $T +  \mc O(n^2)$ within precision $\epsilon$.
\end{theorem}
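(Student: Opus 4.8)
The plan is to mirror the structure of the proof of \Cref{thm:4local2}, but with the extra machinery needed to handle the classical qubits. First I would fix a terminating restriction $\bs s = (s_0, \dots, s_i)$ of $H$, which by the results reviewed in \Cref{sec:aharanov_review} removes all classical qubits and yields a Hamiltonian $H|_{\bs s}$ with no classical qubits. Applying \Cref{cor:PA} (or its generalization to polygonal complexes) to $H|_{\bs s}$, I obtain a partition of its plaquette terms into a ``punctured'' 2-local-after-grouping part $H_\cP$ and a set $\cR$ of interior terms with access to the boundary, where $\cR$ consists of fully quantum terms. The classical Hamiltonian $H^{(c)}$ is then built by: (i) taking the 2-local classical Hamiltonian $H^{(2c)}$ obtained from $H_\cP$ via the Structure Lemma as in \Cref{sec:2local}; and (ii) adding, for each classical qubit that was restricted, a classical variable whose value ranges over the possible assignments, together with constant-local classical terms encoding the energy contributions of the removed-by-restriction terms. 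The propagation structure (the recursive definition of $\mc C_i$) is what controls the locality of these extra terms: since each restriction step only introduces finitely many new classical qubits in a bounded neighborhood, the dependency of any term on the classical variables is $\mc O(1)$-local, giving a $2 + \mc O(1)$-local classical Hamiltonian. The low-depth quantum circuit converting $H^{(c)}$-eigenstates to eigenstates of the restricted Hamiltonian is the one from the Structure Lemma applied to $H_\cP$, tensored with the trivial circuit (computational basis states) on the classical qubits.

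Next I would establish correctness of the Gibbs sampler. Given a sample from the classical Gibbs distribution of $H^{(c)}$, one reads off (a) an assignment $s$ to the classical qubits and (b) an index $\bb_\bj$ for $H^{(2c)}$ (restricted to the sector determined by $s$). One prepares the product eigenstate $\ket{\psi(\bb_\bj)}$ on the fully-quantum region, tensored with $\ket{s}$ on the classical qubits, and then runs \Cref{alg:cA}: sequentially measure each removed term $p \in \cR$ and apply the oblivious randomized correction with probability $\exp(\beta\lambda_p)/(\exp(-\beta\lambda_p)+\exp(\beta\lambda_p))$. The key point is that the inductive argument of \Cref{lem:reduction} goes through verbatim provided that, for each $p \in \cR$, the correction operator $L_p$ used in the algorithm (1) anti-commutes with $p$, (2) commutes with all other non-removed terms, and (3) is the \emph{same} regardless of which classical-qubit sector we are in. Properties (1) and (2) come from the fact that $p \in \cR$ has access to the boundary in $H|_{\bs s}$; property (3) is exactly where \Cref{as:uniformly_correctable} is used --- it guarantees that $L_p$ is a valid correction operator for $p$ in \emph{every} terminating restriction, so that a single $L_p$ works across all sectors simultaneously. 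With these three properties, \Cref{lem:sym} and \Cref{lem:dim} hold inside each classical sector, and the channel $\cN_p$ maps $\rho(H_Q,\beta)$ to $\rho(H_{Q\cup\{p\}},\beta)$ block-by-block over the classical sectors; summing over sectors with the correct Gibbs weights (which are already correctly produced by the classical sampler on $H^{(c)}$) gives the full Gibbs state. The $\epsilon$-error and runtime bookkeeping is then identical to the proof of \Cref{thm:4local2}: trace-distance contraction under the channels propagates the initial $\epsilon$, and the runtime is $T + \mc O(m) + |\cR| \cdot \mc O(n) = T + \mc O(n^2)$.

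The main obstacle I anticipate is justifying that the correction operators $L_p$ are well-defined and sector-independent without appealing to the power of an $\NP$ prover, as \cite{aharonov2018complexity} could. Concretely, one must argue: (a) that the notion of ``fully quantum term'' is robust, i.e. that the terms in $\cR$ really do remain interior-with-boundary-access under every terminating restriction (this is where a propagation lemma characterizing how far classical-ness can spread is needed); and (b) that \Cref{as:uniformly_correctable} is not vacuous and can be combined with the canonical-form results of \cite{aharonov2018complexity} to produce an explicit $L_p$ that is simultaneously valid in all sectors. A subtle secondary point is ensuring that the structure-lemma decoupling of $H_\cP$, and hence the classical Hamiltonian $H^{(2c)}$, can be chosen uniformly across classical sectors --- otherwise the low-depth circuit converting $H^{(c)}$ eigenstates would itself have to depend on $s$, which is fine (it is still constant depth and classically controlled on $\ket{s}$) but needs to be stated carefully. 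Once these structural facts are in place, the probabilistic/linear-algebraic core of the argument is a direct adaptation of \Cref{lem:reduction}.
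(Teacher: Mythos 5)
Your second phase --- sampling from the classical Gibbs distribution of a Hamiltonian that retains the classical qubits as free variables, then running the oblivious randomized correction block-by-block over the classical sectors with \Cref{as:uniformly_correctable} guaranteeing a single sector-independent $L_p$ --- is exactly the paper's argument in \Cref{sec:proof_for_classical_gibbs}, and your observation that \Cref{lem:sym,lem:dim} hold after tensoring with $\ketbra{s,t}{s,t}$ matches the paper verbatim. The gap is in the first phase: you begin by fixing a terminating restriction $\bs s$ and applying \Cref{cor:PA} to $H|_{\bs s}$, asserting that the resulting removed set $\cR$ ``consists of fully quantum terms.'' This does not follow. \Cref{cor:PA} applied to one restriction only tells you which terms have boundary access \emph{in that restricted Hamiltonian}; a corner term of the co-triangulation may well be supported on a qubit in $\mc C_0$, or on a qubit that becomes classical under some other restriction, and such a term is not fully quantum and cannot be placed in $\cR$ (the theorem statement requires $\cR \subseteq$ fully quantum terms, and the correction machinery only applies to them). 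The paper resolves this with an explicit case analysis on the corner terms (its Cases 1a, 1b, 2, 3): fully quantum interior corners go into $\cR$; boundary corners are absorbed into the ``hole'' left by a trivially-acting neighbor; and corners supported on originally-classical or propagated-classical qubits are \emph{not removed at all} --- instead the tiling is arranged so that the classical qubit itself sits in its own tile, whence any assignment renders that term $2$-local. Your proposal has no mechanism for the non-fully-quantum corners, and this is where the propagation lemma you flag as an ``anticipated obstacle'' is actually load-bearing: \Cref{lem:classical_characterization} is what confines the possibly-classical qubits to $\mc C_0$ and to supports of boundary/$\mc C_0$-adjacent terms, which is what makes the case analysis exhaustive.

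Relatedly, your justification of the $2+\mc O(1)$ locality (``each restriction step only introduces finitely many new classical qubits in a bounded neighborhood'') is not the paper's argument and is not obviously sufficient on its own, since a priori the number of restriction rounds, and hence the spread of $\mc C$, could grow with $n$. The paper instead writes the unrestricted classical Hamiltonian as $\sum_{s}\ketbra{s}{s}\otimes \tilde H^{(c,s)}$ and collapses the exponential sum by showing that the Structure-Lemma decomposition $\mc D^s_T$ of a tile $T$ depends only on $s$ restricted to $\overline Q(T)\cap \mc C_0$, a constant-size set; this simultaneously answers your ``subtle secondary point'' about sector-uniformity of the decoupling (the decomposition is manifestly a local function of $s$, so the conversion circuit is constant depth and classically controlled) and delivers the locality bound $2+k$ with $k=\max_{T,T'}|\mc C_T\cup\mc C_{T'}|$. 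To repair your proposal you would need to (i) supply the propagation/characterization lemma, (ii) replace ``apply \Cref{cor:PA} to $H|_{\bs s}$'' with a restriction-independent choice of removed terms via the case analysis above, and (iii) make the locality argument via the local dependence of the tile decompositions on $\mc C_0$ rather than via the number of propagation rounds.
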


We prove this theorem in two steps. First, in \Cref{sec:modification} we show how
to modify the proof of \cite{aharonov2018complexity} so that classical qubits do not need to be removed when preparing a
single ground state. Then in \Cref{sec:proof_for_classical_gibbs} we show how to extend this idea to Gibbs state
sampling and get a proof for \Cref{thm:gibbs_with_classical}.

\subsubsection{Characterization of classical qubits}
Proving \Cref{thm:gibbs_with_classical} will require characterizing the set of possible classical qubits $\mc C$
(\Cref{def:possibly_classical}).  We begin with a lemma about algebra on qubits in the interior (recall \Cref{def:boundary}).

\begin{lemma}[Propagation of Classical Qubits]
   Let $c \in \mc C_0$ be a classical qubit in the original Hamiltonian $H$, which is acted on non-trivially by a term
   $A$. Suppose $B$ is another term which interacts non-trivially with $A$ on two other qubits $q,q' \neq c$.
   Furthermore, suppose that $B$ is an interior term and is supported entirely outside
   of $\mc C_0$. Then any projection $\pi_q$ and $\pi_{q'}$ respecting the local algebras $\mc A_q(B)$ and $\mc
   A_{q'}(B)$ respectively satisfies
   \[
       \mc A_r(B) = \mc A_r(\pi_q \pi_{q'} B \pi_{q'} \pi_q) \quad \text{and} \quad \mc A_{r'}(B) = \mc A_{r'}(\pi_q
       \pi_{q'} B \pi_{q'} \pi_q)\,,
   \]
   where $r,r'$ are the other qubits in the support of $B$. In particular, ``classical-ness'' does not propagate from
   $c$ through interior terms.
\end{lemma}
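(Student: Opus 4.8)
The plan is to put $B$ into a rigid normal form and then read off the induced algebras by inspection.

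\textbf{Step 1 (reduction to the abelian case).} Since $B$ is an interior term, each of $q,q',r,r'$ is an interior qubit, so $B$ acts non-trivially on all four. Fix a qubit, say $q$, and let $D$ be the plaquette term sitting diagonally opposite $B$ across $q$; $D$ acts non-trivially on $q$ (as $q$ is interior), shares \emph{only} the qubit $q$ with $B$, and commutes with $B$. Hence the induced algebras $\mc A_q(B)$ and $\mc A_q(D)$ commute inside $\mc L(\cH^q)$, which rules out $\mc A_q(B)=\mc L(\cH^q)$ (otherwise $\mc A_q(D)$ would be scalar, contradicting that $D$ acts non-trivially on $q$). By \Cref{lem:structure}, the only $\mathrm C^\star$-subalgebra of $\mc L(\cH^q)$ on a qubit that is neither scalar nor all of $\mc L(\cH^q)$ is the $2$-dimensional maximal abelian one, so in a suitable basis $\mc A_q(B)=\alg{Z_q}$; likewise $\mc A_{q'}(B)=\alg{Z_{q'}}$, $\mc A_r(B)=\alg{Z_r}$, $\mc A_{r'}(B)=\alg{Z_{r'}}$ (each in its own basis). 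A projection respecting $\mc A_q(B)$ — i.e. one leaving the block decomposition of \Cref{lem:structure} invariant, equivalently lying in $\mc A_q(B)'=\mc A_q(B)=\alg{Z_q}$ — is thus $\pi_q=\ketbra{b_q}{b_q}_q$ for a $Z_q$-eigenstate $\ket{b_q}$ (or $\pi_q=\cI$), and similarly $\pi_{q'}=\ketbra{b_{q'}}{b_{q'}}_{q'}$. Decomposing $B$ along $q$, every coefficient operator on $q$ lies in $\mc A_q(B)$, so $[\pi_q,B]=0$, and likewise $[\pi_{q'},B]=0$; combining with $[\pi_q,\pi_{q'}]=0$ yields $\pi_q\pi_{q'}B\pi_{q'}\pi_q=\pi_q\pi_{q'}B=\ketbra{b_qb_{q'}}{b_qb_{q'}}_{qq'}\otimes B_{b_qb_{q'}}$, where $B=\sum_{j,k}\ketbra{jk}{jk}_{qq'}\otimes B_{jk}$ is the block decomposition forced by $\mc A_q(B)=\alg{Z_q}$, $\mc A_{q'}(B)=\alg{Z_{q'}}$, with $B_{jk}$ supported on $\{r,r'\}$. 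Since $\mc A_r(B)=\alg{Z_r}$ has only $\mathbb{C}\cI$ and $\alg{Z_r}$ as subalgebras, and $\mc A_r(\pi_q\pi_{q'}B\pi_{q'}\pi_q)=\mc A_r(B_{b_qb_{q'}})\subseteq\mc A_r(B)$, it now suffices to show that $B_{jk}$ acts non-trivially on $r$ (resp. on $r'$) for \emph{every} $j,k$.

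\textbf{Step 2 (normal form of $B$ and conclusion).} Here I would use that $B$ is an interior term whose support is disjoint from the classical qubits $\mc C_0$ to apply a local version of \Cref{lem:Z}: in a suitable product basis on $\{q,q',r,r'\}$ one has $B=a_B\cI+c_BZ^B$ with $c_B\neq0$ and $Z^B=Z_qZ_{q'}Z_rZ_{r'}$, and this basis is the one diagonalizing the induced algebras fixed in Step 1. Then $B_{jk}=a_B\cI_{rr'}+(-1)^{j+k}c_BZ_rZ_{r'}$, which is non-trivial on both $r$ and $r'$ for every $j,k$. Concretely, $\pi_q\pi_{q'}B\pi_{q'}\pi_q=\ketbra{b_qb_{q'}}{b_qb_{q'}}_{qq'}\otimes\bigl(a_B\cI_{rr'}+(-1)^{b_q+b_{q'}}c_BZ_rZ_{r'}\bigr)$; decomposing this operator along the qubits other than $r$, the coefficient operators on $r$ are $\cI_r$ and a non-zero multiple of $Z_r$, so its induced algebra on $r$ is $\alg{\cI_r,Z_r}=\alg{Z_r}=\mc A_r(B)$, and the identical computation gives $\mc A_{r'}(\pi_q\pi_{q'}B\pi_{q'}\pi_q)=\mc A_{r'}(B)$. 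The cases where $\pi_q$ and/or $\pi_{q'}$ equals $\cI$ are the same with fewer restrictions. This establishes both displayed equalities; since $r$ and $r'$ retain their full induced algebras under the restriction, they cannot be rendered classical by restricting $q,q'$, so ``classical-ness'' does not propagate from $c$ through the interior term $B$.

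\textbf{Main obstacle.} The delicate point is the local version of \Cref{lem:Z} invoked in Step 2: \Cref{lem:Z} is stated under the hypothesis that there are \emph{no} classical qubits anywhere, whereas here $H$ does have classical qubits — just not in $\text{sup}(B)$. I would resolve this either by re-examining the proof of \Cref{lem:Z} (Theorem 5.3 and Lemma 6.2 of \cite{aharonov2018complexity}) to confirm that normalizing an interior term to a Pauli-$Z$ string only uses commutation of $B$ with the $\mc O(1)$ plaquette terms overlapping $\text{sup}(B)$ together with non-classicality of the qubits in $\text{sup}(B)$ — all of which hold in our setting — or, failing that, by a direct argument that each block $B_{jk}$ acts non-trivially on $r$. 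The latter appears harder: naive commutation of $B$ with its neighbors does not obviously prevent some block $B_{jk}$ from collapsing so as to act trivially on $r$, and excluding exactly this scenario is the crux. Everything else — the exclusion of $\mc L(\cH^q)$, the block decomposition, and the final algebra computation — is routine.
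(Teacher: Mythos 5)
Your overall strategy is the same as the paper's: put $B$ into the normal form $a_B\cI + c_B Z^{\otimes 4}$ and then verify by a direct block computation that projecting $q,q'$ onto $Z$-eigenstates leaves the induced algebras on $r,r'$ equal to $\alg{Z}$. Your Step~1 and the final computation are fine and match the paper. The problem is that the step you defer --- the normal form itself --- is the entire mathematical content of the lemma, and what Step~1 actually establishes is strictly weaker than what is needed. Knowing that each single-qubit induced algebra $\mc A_p(B)$ equals $\alg{Z_p}$ only tells you that $B$ is diagonal in a product basis, i.e.\ $B=\sum_{S\subseteq\{q,q',r,r'\}} a_S Z^S$. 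That is not enough: the operator $B=\ketbra{00}{00}_{qq'}\otimes\cI_{rr'}+(\cI-\ketbra{00}{00})_{qq'}\otimes Z_rZ_{r'}$ acts non-trivially on all four qubits and has $\mc A_p(B)=\alg{Z_p}$ for every $p$ in its support, yet its block $B_{00}$ acts trivially on $r$ and $r'$, so the conclusion you are trying to prove fails for it. Such operators have to be excluded by commutation with the neighbouring terms, and that exclusion is exactly the content of the intermediate results the paper imports from \cite{aharonov2018complexity}: Lemma~F.5 upgrades the single-qubit statement to $\mc A_{p,p'}(B)=\alg{Z\otimes Z}$ (a two-dimensional algebra, not the four-dimensional diagonal one) on every edge of $B$, and Lemma~B.4 propagates this to triples and then to the full support, forcing $\mc A(B)=\alg{Z^{\otimes 4}}$. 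You correctly flag this as the crux in your ``main obstacle'' paragraph, but you leave it unresolved, so as written the proposal does not prove the lemma.

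For what it is worth, your proposed fix --- checking that the proof of Theorem~5.3 of \cite{aharonov2018complexity} uses only commutation of $B$ with the constantly many terms overlapping its support together with non-classicality of the qubits in that support --- is precisely what the paper does; its proof is essentially a checklist that Claim~F.2, Lemma~F.5 and Lemma~B.4 of that reference apply term-locally. So your instinct about where the difficulty lies and how to resolve it is right, but the pairwise-algebra step must actually be supplied (or cited), since the single-qubit computation alone genuinely does not suffice.
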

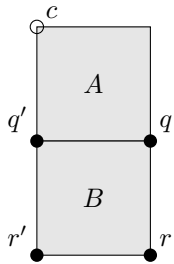
\begin{figure}[!ht]
\centering
\begin{tikzpicture}[scale=1.5,
        classical/.style={draw,circle,inner sep=0pt,minimum size=5pt},    
        quantum/.style={fill,circle,inner sep=0pt,minimum size=5pt},
        alglabel/.style={font=\tiny}
    ]

    \draw[fill=gray!20] (0,1) rectangle  ++ (1,1)node[midway] {$A$} ;
    \draw[fill=gray!20] (0,0) rectangle  ++ (1,1)node[midway] {$B$} ;

    \node[classical,label={[label distance=-0.1cm]above right:$c$}] (c) at (0,2) {};
    \node[quantum,label={[label distance=-0.1cm]above right:$q$}] (q) at (1,1) {};
    \node[quantum,label={[label distance=-0.1cm]above left:$q'$}] (qp) at (0,1) {};
    \node[quantum,label={[label distance=-0.1cm]above right:$r$}] (r) at (1,0) {};
    \node[quantum,label={[label distance=-0.1cm]above left:$r'$}] (rp) at (0,0) {};
\end{tikzpicture}
\caption{The terms $A$ and $B$ and relevant qubits.}
\label{fig:classical_prop}
\end{figure} 
\begin{proof}
    First, we claim that the assumption that $B$ is in the interior and supported entirely on non-classical qubits
    implies that $\mc A(B) = \alg{Z^{\otimes 4}}$, under some change of basis. The proof is essentially by Theorem 5.3
    of \cite{aharonov2018complexity}, except that they prove the claim for \emph{each} interior term in the entire
    Hamiltonian, with the assumption that all classical qubits have been removed. In our case, we only apply their proof
    for a single term which is not supported on any classical qubits. For completeness, we outline the required
    components.

    Consider the sequence of qubits $(q,q',r',r)$ acted on by $B$. Since each of these qubits is in the interior, this
    implies that the two ``star'' and two ``plaquette'' terms adjacent to each qubit act non-trivially on it, and thus
    induce a $2$-dimensional algebra (Claim F.2 of \cite{aharonov2018complexity}). This implies (by Lemma F.5) that $\mc
    A_{p,p'}(B) = \alg{ Z \otimes Z}$ for every length two subsequence of $(q,q',r',r)$ (i.e. every edge of $B$). Now
    consider length three subsequences $(p,p',p'')$. Lemma B.4 of \cite{aharonov2018complexity} tells us that
    \[
        \mc A_{p,p',p''}(B) \subseteq \mc A_{p,p'}(B) \otimes \mc A_{p''}(B) = \alg{Z \otimes Z} \otimes \alg Z.
    \]
    Similarly,
    \[
        \mc A_{p,p',p''}(B) \subseteq \mc A_p(B) \otimes \mc A_{p',p''}(B) = \alg Z \otimes \alg{ Z \otimes Z}.
    \]
    By writing down the permissible expressions for $h \in \mc A_{p,p',p''}(B)$ subject to these conditions, we find
    that $\mc A_{p,p',p''}(B) = \alg{Z \otimes Z \otimes Z}$ (details can be found in \cite{aharonov2018complexity}).
    Extending this argument once more to length-$4$ sequences completes the claim.

    In conclusion $\mc A(B) = \alg{Z^{\otimes 4}}$ and thus $B = \alpha \cI + \beta Z^{\otimes 4}$ with $\beta \neq 0$.
    Consider the effect of applying projector $\pi_q$ and $\pi_{q'}$ on qubits $q$ and $q'$. Since $\mc A_q(B) = \mc
    A_{q'}(B) = \alg Z$, the fact that $\pi_q$ and $\pi_{q'}$ respect the local algebras implies that $\pi_q$ and
    $\pi_{q'}$ are in the $Z$-basis. If both projectors are trivial, the algebra of $B$ on $r,r'$ is unaffected. We give
    the proof for the case when both are non-trivial and dimension $1$; the case when only one is dimension 1 and the
    other is trivial is similar. The result of applying $\pi_q$ and $\pi_{q'}$ is
    \[
        \pi_q \pi_{q'} B \pi_{q'} \pi_{q} = \alpha \pi_q \otimes \pi_{q'} \otimes \cI + \beta ((-1)^{b_1} \pi_q) \otimes
        ((-1)^{b_2} \pi_{q'}) \otimes Z^{\otimes 2}\,,
    \]
    where $b_1, b_2 \in \{0,1\}$ indicate the possible phase (corresponding to whether $\pi_q, \pi_{q'}$ correspond to
    the $+1$ or $-1$ eigenspaces of $Z$). We case on the values of $(b_1,b_2)$.
    \begin{itemize}
        \item If $b_1 = b_2$ then we obtain the Schmidt decomposition
            \[
                B = \pi_q \otimes \pi_q \otimes (\alpha \cI + \beta Z^{\otimes 2})\,,
            \]
            and $\mc A_{r,r'}(\pi_q \pi_{q'} B \pi_{q'} \pi_q)$ remains $\alg{Z^{\otimes 2}}$.
        \item If $b_1 \neq b_2$ then $\pi_q \otimes \pi_{q'}$ and $(-1)^{b_1}\pi_q \otimes (-1)^{b_2}\pi_{q'}$ are
            orthogonal, and we have the Schmidt decomposition
            \[
                B = \pi_q \otimes \pi_q \otimes (\alpha \cI) + (-1)^{b_1}\pi_q \otimes (-1)^{b_2}\pi_{q'} \otimes \beta
                Z^{\otimes 2}\,.
            \]
            Again, this leaves the local algebra unchanged.
    \end{itemize}
    Therefore, no choice of (consistent) projectors on $q, q'$ change $B$'s local algebra on $\{r,r'\}$.
\end{proof}

This lemma provides strong restrictions on the possible set of classical qubits after ``propagating'' the initial set of
classical qubits.
\begin{lemma}[Characterization of Classical Qubits]
    \label{lem:classical_characterization}
    Let $\mc C$ be the set of all possibly classical qubits (as in \Cref{def:possibly_classical}). Then, each classical
    qubit $c \in \mc C$ is either
    \begin{itemize}
        \item a classical qubit in the original Hamiltonian $H$, or
        \item supported on a boundary term or a term supported on $\mc C_0$ (i.e. adjacent to an originally classical
            qubit).
    \end{itemize}
\end{lemma}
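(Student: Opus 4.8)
The plan is to prove the statement in the following sharpened, inductive form. Let $S$ denote the set of qubits that are either in $\mc C_0$, lie in the support of some boundary term, or lie in the support of some term whose support meets $\mc C_0$; the lemma is then exactly the assertion $\mc C \subseteq S$. I would prove, by induction on the length $k$ of a valid restriction $\bs s = (s_0,\dots,s_{k-1})$, the claim $P(k)$: the classical qubits of $H|_{\bs s}$ (i.e. $\mc C_k(s_0,\dots,s_{k-1})$) all lie in $S$. Since every $c\in\mc C$ is classical in $H|_{\bs s}$ for some valid $\bs s$, applying $P(|\bs s|)$ then yields $\mc C \subseteq S$.

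The base case $P(0)$ is immediate, since the classical qubits of $H$ are $\mc C_0 \subseteq S$. For the inductive step, fix a valid restriction $(s_0,\dots,s_k)$; because it is valid, each $s_j$ is supported on the classical qubits of $H|_{(s_0,\dots,s_{j-1})}$, which lie in $S$ by $P(j)$ for $j\le k$, so every $s_j$ is supported inside $S$. Suppose toward a contradiction that some $c$ is classical in $H|_{(s_0,\dots,s_k)}$ but $c \notin S$. Since $c$ avoids $S$, every plaquette term $B$ of $H$ acting on $c$ is an interior term supported entirely outside $\mc C_0$ --- precisely the ``protected'' situation of the Propagation of Classical Qubits lemma --- and $c$ itself is not classical in $H$. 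The core of the argument is to carry the hypothesis of that lemma through the whole restriction sequence: whenever a round $s_j$ projects a qubit $q$ of such a $B$, the qubit $q$ is classical in the intermediate Hamiltonian $H|_{(s_0,\dots,s_{j-1})}$, in which (the current restriction of) $B$ still acts non-trivially on $q$ with local algebra $\alg{Z}$, so the projector $\pi_q$ respects $\mc A_q(B)$; the Propagation lemma then shows this projection leaves the local algebra of $B$ on its remaining qubits --- in particular $\mc A_c(B)$ --- unchanged, and $B$ still acts non-trivially on $c$. Iterating over $j = 0,\dots,k$, every protected term $B \ni c$ descends to a term of $H|_{(s_0,\dots,s_k)}$ whose local algebra on $\mc H^c$ equals $\mc A_c(B)$ in $H$. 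Because a restriction never creates new terms or new adjacencies, these are all the terms acting on $c$ in $H|_{(s_0,\dots,s_k)}$, so the algebra they generate on $\mc H^c$ is the same as in $H$. By the Structure Lemma (\Cref{lem:structure}), ``$c$ classical'' is equivalent to this generated algebra being a proper subalgebra of $\mc L(\bC^2)$; since $c$ is not classical in $H$, it cannot be classical in $H|_{(s_0,\dots,s_k)}$ either, a contradiction. This closes the induction and gives $\mc C \subseteq S$.

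The main obstacle is the bookkeeping around the Propagation lemma. As stated, it projects two qubits of a single four-qubit protected term $B$ and is phrased relative to a neighbouring term $A$ touching an originally classical qubit. For the induction I need a version that (a) allows projecting any subset of the qubits of $B$ over several rounds, which I would obtain by applying the lemma one round at a time and checking that each intermediate restriction of $B$ is still of the form $\alpha \cI + \beta Z^{\otimes \ell}$ with $\beta \neq 0$ (hence still of ``protected'' type and still non-trivial on $c$) --- exactly the Schmidt decompositions appearing in the lemma's proof --- and (b) does not refer to the auxiliary term $A$ at all; this is harmless, since the proof of the Propagation lemma only uses that $B$ is interior and supported outside $\mc C_0$. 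The only other point needing care is the equivalence ``$c$ classical $\iff$ the algebras induced on $\mc H^c$ generate a proper subalgebra'', which follows directly from the Structure Lemma together with the definition of a classical qubit. With these two items settled, the argument above goes through.
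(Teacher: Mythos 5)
Your proposal is correct and follows essentially the same route as the paper: assume $c\notin S$, observe that every term acting on $c$ is then interior and supported outside $\mc C_0$, and invoke the Propagation of Classical Qubits lemma to conclude that no sequence of valid projections can alter the local algebras on $c$, so $c$ cannot become classical. The paper's own proof is considerably terser and silently assumes exactly the points you flag (iterating the propagation lemma over multiple restriction rounds, dropping the auxiliary term $A$, and the algebra-theoretic characterization of classicality for a qubit), so your version is a faithful, more careful rendering of the same argument rather than a different one.
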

\begin{proof}
    Assume $c \in \mc C$ is not originally a classical qubit. Then, there must have been some set of projections
    $\pi_{c_1}, \dots, \pi_{c_k}$ such that in the Hamiltonian $\pi_{c_k} \dots \pi_{c_1} H \pi_{c_1} \dots \pi_{c_k}$,
    the qubit $c$ is a classical. Suppose $c$ is acted on by $N(c) = \{h_1, h_2, h_3, h_4\}$. Certainly, if no projection
    $\pi_{c_i}$ is applied to a qubit in the support of any $h \in N(c)$, it does not render $c$ a classical qubit.
    Otherwise, \Cref{lem:classical_characterization} implies that if each $h \in N(c)$ is supported outside of
    $\mc C_0$ and in the interior of the system, no projection applied to a qudit of $h_i$ renders $c$ classical. Thus,
    some term $h \in N(c)$ is either a boundary term or supported on $\mc C_0$.
\end{proof}

\subsubsection{Uniform preparation of a ground state}
\label{sec:modification}
In this section, we describe how to avoid restricting classical qubits in the proof of \cite{aharonov2018complexity}. 
\begin{remark}[Comparison to the proof of Aharonov et.\ al \cite{aharonov2018complexity}]
We emphasize that this new proof does not qualitatively improve on their result as the initial Hamiltonian is
block-diagonal with respect to $\mc C$ and thus for the task of \emph{finding} a ground state, one can always assume the
first step is to perform a classical restriction. Moreover, this new proof is \emph{worse} in the sense that
\cite{aharonov2018complexity} obtains a $2$-local classical Hamiltonian, whereas, in general, we need to assume  Assumption \ref{as:uniformly_correctable} and our Hamiltonian can be
$k$-local, for some $k \in \mc O(1)$ depending on the structure of the Hamiltonian. However,  the ideas used here will be
useful for the full proof of \Cref{thm:gibbs_with_classical} in \Cref{sec:proof_for_classical_gibbs}, which allows us to
prepare the Gibbs state instead of only a single ground state.
\end{remark}

Let us first recall the high level proof of \cite{aharonov2018complexity}, which  can be boiled
down to:
\begin{enumerate}
    \item The prover provides a terminating restriction $\bs s$ such that $H|_{\bs s}$ contains a ground state of $H$.
        This removes all classical qubits.
    \item Remove a set of terms $\mc R$ of $H|_{\bs s}$ so that the Hamiltonian becomes two local. \label{item:removal}
    \item Prepare a ground state of the two-local Hamiltonian via \cite{bravyi2003commutative} (with the help of the
        prover). \label{item:prepare_gs}
    \item Correct the removed operators in $\mc R$ to get a ground state for $H$. \label{item:correct}
\end{enumerate}

Our first observation is that \Cref{item:removal} can be performed without first removing classical qubits, as this step
depends only on the geometric structure of the Hamiltonian. However, the issue comes in \Cref{item:correct}; depending
on the choice of classical qubits, the set of ``correctable'' terms may be different. Therefore, if we perform
\Cref{item:removal} without considering the classical restriction and the resulting set of correctable terms, we may not
be able to correct all terms in $\mc R$.

More precisely, \cite{aharonov2018complexity} characterize the set of correctable terms via paths to the boundary.
\begin{lemma}[Access to the Boundary \cite{aharonov2018complexity}]
    \label{lem:original_access_to_boundary}
    Let $H$ be a CLH instance without any classical qubits, and let $h,h'$ be terms in the interior of the system, such
    that $h$ and $h'$ share a single edge. Then \emph{either} $h$ or $h'$ has a path to the boundary. If $h$ (or $h'$)
    has a path to the boundary, we say that $h$ is \emph{correctable}.
\end{lemma}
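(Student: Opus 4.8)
The plan is to recover this statement directly from the structural analysis of 2D qubit CLHs in \cite{aharonov2018complexity} (their Theorem 5.3 together with Lemma 6.2). Indeed, up to the translation between the ``qubits on edges'' and ``qubits on vertices'' conventions (Appendix C of \cite{irani2023commuting}), this lemma is exactly \Cref{lem:car} item (i), specialized to the two interior terms $h,h'$ that share an edge. One should note at the outset that the statement is only meaningful when the set of boundary qubits is non-empty; if it is empty the system lives on a torus and all four terms around every qubit are non-trivial, so the relevant claim is instead \Cref{lem:car} item (ii). For completeness — and because the modifications in \Cref{sec:modification} and \Cref{sec:proof_for_classical_gibbs} need to open up the construction of the correction operators $L_p$ — I would recall the argument in two steps.

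First I would establish that every interior term has a canonical Pauli form. Using the induced-algebra machinery of \Cref{sec:C}: each interior qubit $q$ is acted on non-trivially by all four adjacent terms, and since $q$ is not classical, a counting argument on the pairwise-commuting induced algebras $\{\mc A_q(h_i)\}_i$ forces each of them to be the $2$-dimensional algebra $\alg Z$ in a suitable qubit-local basis, with neighbouring terms alternating between $\alg Z$ and $\alg X$ on the shared qubit. Propagating this constraint around the edges of a single term via \Cref{cor:structure} — the same length-$2 \to 3 \to 4$ argument recalled in the proof of \Cref{lem:classical_characterization} — shows that each interior plaquette term equals $\alpha \cI + \beta Z^{\otimes 4}$ and each interior star term equals $\alpha \cI + \beta X^{\otimes 4}$ with $\beta \neq 0$, all in a common local basis. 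This is precisely \Cref{lem:Z}, now applied to $h$, $h'$, and their neighbours.

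Second I would construct the correction operator. Given the canonical forms, a Pauli string supported on a path of the (dual) lattice commutes with every term except those at its two endpoints; concretely, an $X$-string anticommutes with exactly the $Z^{\otimes 4}$-terms it crosses an odd number of times, and dually for $Z$-strings. Starting at the shared edge $e$ of $h$ and $h'$, I place on $e$ a single-qubit Pauli (in the appropriate basis) that anticommutes with, say, $h$; this operator fails to commute with at most one other term touching $e$, and I extend the string by one edge to repair that, never re-entering the support of $h$. Iterating produces a growing string whose only ``unbalanced'' term is $h$ plus a moving endpoint. Because the lattice is finite and planar with non-empty boundary, this endpoint either reaches a boundary qubit — in which case the accumulated string is a valid $L_h$ and $h$ is correctable — or it is forced back through the support of $h'$, in which case the same string, minus its contribution at $e$, is a correction operator for $h'$. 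The ``$h$ or $h'$'' dichotomy is exactly this case split (not an exclusive or: both can be correctable), and it is where planarity and the boundary enter, via a topological argument ruling out the string closing into a loop that is, up to products of interior terms, trivial.

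The main obstacle is not any single computation but keeping the two ingredients mutually consistent: Step 1 fixes, for every interior qubit, a local basis in which \emph{all} adjacent terms simultaneously read as $Z^{\otimes 4}$ or $X^{\otimes 4}$, and the string construction and endpoint/parity bookkeeping of Step 2 must be carried out in that global choice of bases. This is precisely the bookkeeping done in Sections 5--6 of \cite{aharonov2018complexity}, so I would ultimately defer to their proof for the details, emphasizing only the two facts that \Cref{sec:modification} and \Cref{sec:proof_for_classical_gibbs} reuse and must adapt: that $L_p$ can be taken to be a conjugated Pauli string along a path, and that correctability of an interior term is equivalent to the existence of such a path to the boundary.
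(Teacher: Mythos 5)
The paper does not actually prove this lemma --- it is imported verbatim from \cite{aharonov2018complexity} (as a rephrasing of their Theorem~5.3 and Lemma~6.2, cf.\ \Cref{lem:car} item (i)) --- so your ultimate deferral to that work matches the paper's own treatment exactly. Your two-ingredient sketch (the canonical form $\alpha \cI + \beta Z^{\otimes 4}$ for interior terms as in \Cref{lem:Z}, plus path-supported Pauli-string correction operators as in \Cref{def:access}) is consistent with the pieces of the argument the paper does recall elsewhere, and nothing in it needs correcting.
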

These ``paths to the boundary'' yield correction operators, and the choice to remove $h$ or $h'$ depends on which term
is correctable. For instance in \Cref{fig:choice_of_classical}, we see that our choice for the classical qubit $c$
changes which of the two boxed terms may be removed.

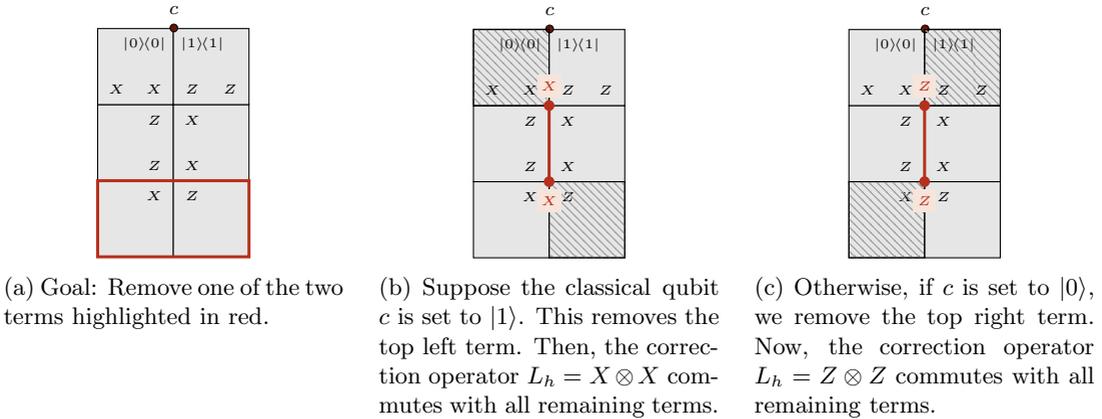
\begin{figure}
    \centering
    \begin{subfigure}[t]{0.3\textwidth}
        \centering
        \begin{tikzpicture}[
            every node/.append style={font=\tiny},
            alglabel/.style = {inner sep=0.1cm,minimum width=0.5cm, minimum height=0.4cm},
            vertexp/.style = {draw, fill=Sepia, circle, minimum width=0.1cm, inner sep=0}
        ]
            \foreach \y in {0,1,2} {
                \foreach \x in {0,1} {
                    \draw[fill=black!10] (\x,\y) rectangle ++(1,1) node[fitting node] (n\x\y) {};
                }
            } 
            \node[vertexp, label={above:\scriptsize $c$}] at (n02.north east) {};
            \node[alglabel, anchor=north east] at (n02.north east) {$\ketbra 0 0$};
            \node[alglabel, anchor=north west] at (n12.north west) {$\ketbra 1 1$};
            \node[alglabel, anchor=south east] at (n02.south east) {$X$};
            \node[alglabel, anchor=south west] at (n02.south west) {$X$};
            \node[alglabel, anchor=south east] at (n12.south east) {$Z$};
            \node[alglabel, anchor=south west] at (n12.south west) {$Z$};
            \node[alglabel, anchor=north east] at (n01.north east) {$Z$};
            \node[alglabel, anchor=south east] at (n01.south east) {$Z$};
            \node[alglabel, anchor=north west] at (n11.north west) {$X$};
            \node[alglabel, anchor=south west] at (n11.south west) {$X$};
            \node[alglabel, anchor=north west] at (n10.north west) {$Z$};
            \node[alglabel, anchor=north east] at (n00.north east) {$X$};

            \draw[line width=0.4mm, fill=none, BrickRed] (0,0) rectangle ++ (2,1);
        \end{tikzpicture}
        \caption{Goal: Remove one of the two terms highlighted in red.}
    \end{subfigure}
    \quad
    \begin{subfigure}[t]{0.3\textwidth}
        \centering
        \begin{tikzpicture}[
            every node/.append style={font=\tiny},
            alglabel/.style = {inner sep=0.1cm,minimum width=0.5cm, minimum height=0.4cm},
            vertexp/.style = {draw, fill=Sepia, circle, minimum width=0.1cm, inner sep=0},
            correction/.style = {draw, fill=BrickRed, circle, minimum width=0.1cm, inner sep=0}
        ]
            \foreach \y in {0,1,2} {
                \foreach \x in {0,1} {
                    \draw[fill=black!10] (\x,\y) rectangle ++(1,1) node[fitting node] (n\x\y) {};
                }
            } 
            \node[vertexp, label={above:\scriptsize $c$}] at (n02.north east) {};
            \draw[pattern=north west lines, pattern color=black!40] (0,2) rectangle +(1,1);
            \node[alglabel, anchor=north east] at (n02.north east) {$\ketbra 0 0$};
            \node[alglabel, anchor=north west] at (n12.north west) {$\ketbra 1 1$};
            \node[alglabel, anchor=south east] at (n02.south east) {$X$};
            \node[alglabel, anchor=south west] at (n02.south west) {$X$};
            \node[alglabel, anchor=south west] at (n12.south west) {$Z$};
            \node[alglabel, anchor=south east] at (n12.south east) {$Z$};
            \node[alglabel, anchor=north east] at (n01.north east) {$Z$};
            \node[alglabel, anchor=south east] at (n01.south east) {$Z$};
            \node[alglabel, anchor=north west] at (n11.north west) {$X$};
            \node[alglabel, anchor=south west] at (n11.south west) {$X$};
            \node[alglabel, anchor=north west] at (n10.north west) {$Z$};
            \draw[pattern=north west lines, pattern color=black!40] (1,0) rectangle +(1,1);
            \node[alglabel, anchor=north east] at (n00.north east) {$X$};
    
            \draw[line width=0.4mm, BrickRed] (1,1) node [correction] (cor1) {} --  (1,2) node [correction] (cor2) {};
            \node[draw=none,fill=BrickRed!10, inner sep=0.05cm, minimum width=0.3cm, minimum height=0.3cm, anchor=south,yshift=0.1cm,text=BrickRed] at (cor2) {$X$};
            \node[draw=none,fill=BrickRed!10, inner sep=0.05cm, minimum width=0.3cm, minimum height=0.3cm, anchor=north,yshift=-0.1cm,text=BrickRed] at (cor1) {$X$};
        \end{tikzpicture}
        \caption{Suppose the classical qubit $c$ is set to $\ket 1$. This removes the top left term. Then, the correction operator $L_h = X \otimes X$ commutes with all remaining terms.}
    \end{subfigure}
    \quad
    \begin{subfigure}[t]{0.3\textwidth}
        \centering
        \begin{tikzpicture}[
            every node/.append style={font=\tiny},
            alglabel/.style = {inner sep=0.1cm,minimum width=0.5cm, minimum height=0.4cm},
            vertexp/.style = {draw, fill=Sepia, circle, minimum width=0.1cm, inner sep=0},
            correction/.style = {draw, fill=BrickRed, circle, minimum width=0.1cm, inner sep=0}
        ]
            \foreach \y in {0,1,2} {
                \foreach \x in {0,1} {
                    \draw[fill=black!10] (\x,\y) rectangle ++(1,1) node[fitting node] (n\x\y) {};
                }
            } 
            \node[vertexp, label={above:\scriptsize $c$}] at (n02.north east) {};
            \draw[pattern=north west lines, pattern color=black!40] (1,2) rectangle +(1,1);
            \node[alglabel, anchor=north east] at (n02.north east) {$\ketbra 0 0$};
            \node[alglabel, anchor=north west] at (n12.north west) {$\ketbra 1 1$};
            \node[alglabel, anchor=south east] at (n02.south east) {$X$};
            \node[alglabel, anchor=south west] at (n02.south west) {$X$};
            \node[alglabel, anchor=south west] at (n12.south west) {$Z$};
            \node[alglabel, anchor=south east] at (n12.south east) {$Z$};
            \node[alglabel, anchor=north east] at (n01.north east) {$Z$};
            \node[alglabel, anchor=south east] at (n01.south east) {$Z$};
            \node[alglabel, anchor=north west] at (n11.north west) {$X$};
            \node[alglabel, anchor=south west] at (n11.south west) {$X$};
            \node[alglabel, anchor=north west] at (n10.north west) {$Z$};
            \node[alglabel, anchor=north east] at (n00.north east) {$X$};
            \draw[pattern=north west lines, pattern color=black!40] (0,0) rectangle +(1,1);
    
            \draw[line width=0.4mm, BrickRed] (1,1) node [correction] (cor1) {} --  (1,2) node [correction] (cor2) {};
            \node[draw=none,fill=BrickRed!10, inner sep=0.05cm, minimum width=0.3cm, minimum height=0.3cm, anchor=south,yshift=0.1cm,text=BrickRed] at (cor2) {$Z$};
            \node[draw=none,fill=BrickRed!10, inner sep=0.05cm, minimum width=0.3cm, minimum height=0.3cm, anchor=north,yshift=-0.1cm,text=BrickRed] at (cor1) {$Z$};
        \end{tikzpicture}
        \caption{Otherwise, if $c$ is set to $\ket 0$, we remove the top right term. Now, the correction operator $L_h = Z \otimes Z$ commutes with all remaining terms.}
    \end{subfigure}
    \caption{The choice of correction operator may depend on the choice of classical qubits.}
    \label{fig:choice_of_classical}
\end{figure}

To avoid this issue, we take \Cref{as:uniformly_correctable}, which states that if some $h$ is correctable under some
valid, terminating restriction $H|_{\bs s}$, then it is correctable under \emph{any} other valid, terminating restriction
$H|_{\bs t}$. This implies that the classical restriction can be deferred until after \Cref{item:prepare_gs} is
performed. We will now formalize this idea.

\tikzset{pics/tile1/.style n args={3}{code={
    \draw (#1,#2) -- ($ (2,0) + (#1,#2) $) -- ($ (1,1.72) + (#1,#2) $) -- cycle ;
    \node (c#1#3) at ($ (#1, #2) + (1,0.86)$) {};
}}}
\tikzset{pics/tile2/.style n args={3}{code={
    \draw ($ (2,1.72) + (#1,#2) $) -- ($ (1,0) + (#1,#2) $) -- ($ (0,1.72) + (#1,#2) $) -- cycle;
    \node (c#1#3) at ($ (#1, #2) + (1,0.86)$) {};
}}}
\begin{figure}
    \centering
    \begin{tikzpicture}
        \begin{scope}
        \clip (0,0) rectangle ++(5,5);
        \newcommand\delt{0.25}
        \foreach \x in {0,...,29} {
            \foreach \y in {0,...,29} {
                \draw[black!20,line width=0.2mm] ($ \delt*(\x,\y) + (0.1,0) $) rectangle ++(\delt,\delt);
            }
        }
        \foreach \x in {-3,-2,0,2,4} {
            \foreach \i in {-2,-1,0,1,2,3} {
                \pgfmathsetmacro{\y}{\i*1.72}
                \pgfmathtruncatemacro{\xd}{\x+1}
                \ifthenelse{\equal{\intcalcMod{\i}{2}}{0}}{
                    \pic {tile1={\x}{\y}{\i}};
                    \pic {tile2={\xd}{\y}{\i}};
                }{
                    \pic {tile2={\x}{\y}{\i}};
                    \pic {tile1={\xd}{\y}{\i}};
                }
            } 
        }
        \foreach \x in {-2,-1,...,4,5} {
            \foreach \i in {-1,0,...,2,3} {
                \node[fill,circle, minimum width=0.1cm,inner sep=0] at (c\x\i) {};
                \pgfmathtruncatemacro{\xs}{\x-1}
                \pgfmathtruncatemacro{\is}{\i-1}
                \draw[Maroon, line width=0.5mm] (c\x\i.center) -- (c\xs\i.center);
                \ifthenelse{\equal{\intcalcMod{\x+\i}{2}}{0}}{
                        \draw[Maroon, line width=0.5mm] (c\x\i.center) -- (c\x\is.center);
                }{}
            }
        }
        \end{scope}
    \end{tikzpicture}
    \captionsetup{width=0.8\linewidth}
    \caption{An example of a triangulation, followed by a co-triangulation. Within each triangle, a central point is identified. Each point is connected via one of the sides of the triangle to a center of an adjacent triangle. This yields ``tiles'', demarcated by the dark red lines. The set of qubits within a tile are identified as a single qubit. Within this grouping, the only original Hamiltonian  terms that are more than $2$-local are those at the centers of the triangles.}
    \label{fig:triangulation}
\end{figure}
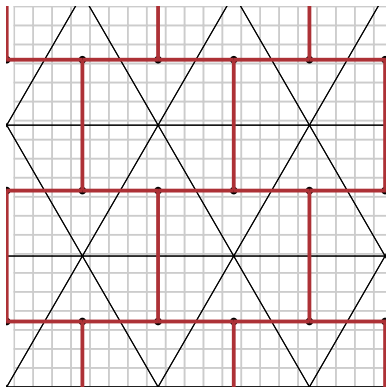
\paragraph{Choice of Removed Terms.} In the proof of
\cite{aharonov2018complexity}, the authors triangulate the 2D complex, then construct a ``co-triangulation'', dividing the
surface in tiles $T \in \mc T$ such that each qubit within a single tile $T$ becomes a new qudit $q_T$ in the
transformed Hamiltonian (see \Cref{fig:triangulation}). Under this transformation, terms in the original Hamiltonian are
either 1) internal to a single tile $T$ (and
are now 1-local), 2) cross between two adjacent tiles $T, T'$ (and are now 2-local), or 3) on the corner of three tiles
(and are now 3-local). Since the  goal is to produce a $2$-local Hamiltonian, the natural strategy is to simply set the
set of removed terms $\mc R$ to be precisely these corner terms. However, we need to be a bit careful, since any
given term is not necessarily correctable; all we know from \Cref{lem:original_access_to_boundary} is that either the
corner term $h$ or its neighbor $h'$ is correctable.\footnote{Recall that in \cite{aharonov2018complexity} classical
qubits have already been removed so $h$ and $h'$ are automatically quantum.} This is easy to handle. As long as the
original triangulation has sufficient girth, the co-triangulation can be ``shifted'' so that its corners lie entirely in
correctable terms.

In our case, we no longer have the ambiguity of which terms is correctable, but we instead need to deal with classical
terms carefully. Consider a triangulation and corresponding tiling $\mc T$. Let $h$ be an arbitrary term containing a
corner of the tiling $\mc T$. We consider the following set of cases:
\begin{itemize}
    \item \textbf{(Case 1)} $h$ is a \emph{fully quantum} term (\Cref{def:fully_quantum_term}).
        \begin{itemize}
            \item \textbf{(Case 1a)} $h$ is an interior term.
            \item \textbf{(Case 1b)} $h$ is a boundary term. 
        \end{itemize}
    \item \textbf{(Case 2)} $h$ is supported on an originally classical qubit $q \in \mc C_0$.
    \item \textbf{(Case 3)} $h$ is supported on a qubit $q \in \mc C \setminus \mc C_0$ (i.e. a qubit which is classical
        under some restriction but which is not classical in the original Hamiltonian).
\end{itemize}
\begin{figure}[!ht]
\centering
\begin{subfigure}{0.4\textwidth}
\centering
\begin{tikzpicture}[scale=1.5,
        classical/.style={draw,circle,inner sep=0pt,minimum size=5pt},    
        quantum/.style={fill,circle,inner sep=0pt,minimum size=5pt},
        pnt/.style={fill,circle,inner sep=0pt,minimum size=2pt},
        alglabel/.style={font=\tiny}
    ]

    \draw[fill=gray!20] (0,1) rectangle node[midway] {$h$} ++ (1,1) ;
    \draw[fill=gray!20] (0,0) rectangle node[midway] {$h_1$} ++ (1,1) ;
    \draw[fill=gray!20] (1,0) rectangle node[midway] {$h_2$} ++ (1,1) ;
    \draw[fill=gray!20] (1,1) rectangle node[midway] {$h_3$} ++ (1,1) ;

    \node[quantum,label={[label distance=-0.1cm]above right:$q$}] (q) at (1,1) {};

    \node[label={[label distance=-0.1cm]below left:$c$}, pnt] (c) at ($(q)!0.3!(0,2)$) {};

    \draw[line width=0.25mm] plot [smooth] coordinates {(c) (0.5,0.5) (-0.5,0.25)};
    \draw[dashed, line width=0.25mm] plot [smooth] coordinates {(-0.5,0.25) (-1,0) (-1,-0.5)};

    \draw[line width=0.25mm] plot [smooth] coordinates {(c) ($(q) + (0.5,0.8)$) (2,1.8) (2.3, 2)};
    \draw[dashed, line width=0.25mm] plot [smooth] coordinates {(2.3,2) (2.5,2.3) (2.5,2.5)};

    \draw[line width=0.25mm] plot [smooth] coordinates {(c) (0.5,1.9) (0,1.8) (-0.5,1.8)};
    \draw[dashed, line width=0.25mm] plot [smooth] coordinates {(-0.5,1.8) (-1,2)};
\end{tikzpicture}
\end{subfigure}
\qquad
\begin{subfigure}{0.4\textwidth}
\centering
\begin{tikzpicture}[scale=1.5,
        classical/.style={draw,circle,inner sep=0pt,minimum size=5pt},    
        quantum/.style={fill,circle,inner sep=0pt,minimum size=5pt},
        pnt/.style={fill,circle,inner sep=0pt,minimum size=2pt},
        alglabel/.style={font=\tiny}
    ]

    \draw[dashed, fill=gray!20] (0,1) -- (0,2) -- (1,2)  -- cycle;
    \draw[] (0,1) rectangle node[midway,anchor=south east] {$h$} ++ (1,1) ;
    \draw[dashed, fill=gray!20] (0,1) -- (1,0) -- (0,0)  -- cycle;
    \draw[] (0,0) rectangle node[midway,anchor=north east] {$h_1$} ++ (1,1) ;
    \draw[dashed, fill=gray!20] (1,0) -- (2,1) -- (2,0)  -- cycle;
    \draw[] (1,0) rectangle node[midway,anchor=north west] {$h_2$} ++ (1,1) ;
    \draw[dashed, fill=gray!20] (1,2) -- (2,2) -- (2,1)  -- cycle;
    \draw[] (1,1) rectangle node[midway,anchor=south west] {$h_3$} ++ (1,1) ;

    \node[quantum,label={[label distance=-0.1cm]above right:$q$}] (q) at (1,1) {};

    \node[label={[label distance=-0.1cm]below left:$c$}, pnt] (c) at ($(q)!0.3!(0,2)$) {};

    \draw[line width=0.25mm] plot [smooth] coordinates {(c) (0.5,0.5) (-0.5,0.25)};
    \draw[dashed, line width=0.25mm] plot [smooth] coordinates {(-0.5,0.25) (-1,0) (-1,-0.5)};

    \draw[line width=0.25mm] plot [smooth] coordinates {(c) ($(q) + (0.5,0.8)$) (2,1.8) (2.3, 2)};
    \draw[dashed, line width=0.25mm] plot [smooth] coordinates {(2.3,2) (2.5,2.3) (2.5,2.5)};

    \draw[line width=0.25mm] plot [smooth] coordinates {(c) (0.5,1.9) (0,1.8) (-0.5,1.8)};
    \draw[dashed, line width=0.25mm] plot [smooth] coordinates {(-0.5,1.8) (-1,2)};
\end{tikzpicture}
\end{subfigure}
\captionsetup{width=.8\linewidth}
\caption{\textbf{(Case 2)} When a center $c$ is placed on a term supported on a classical qubit, any setting of the classical qubit induces a hole in the 2D structure.}
\label{fig:remove_classical}
\end{figure}
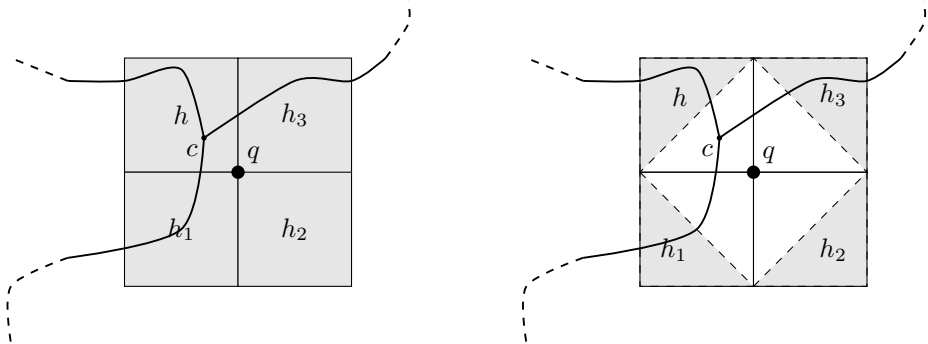
In the first case, we apply the same logic as \cite{aharonov2018complexity}: for \textbf{Case 1a}, $h$ is an interior
term, it will be correctable (in our case by \Cref{as:uniformly_correctable}) and we added it to the removed set $\mc
R$; in \textbf{Case 1b}, $h$ is a boundary term which implies a neighbor acts trivially on one of its qubits. This
neighbor results in a ``hole'' in the original surface. Place the corner of the co-triangulation in the hole.

In \textbf{Case 2}, we construct the co-triangulation such that one tile contains only the classical qubits of $h$; this
will ensure that any assignment to the classical qubit renders this term $2$-local as well. For \textbf{Case 3}, we
appeal to \Cref{lem:classical_characterization}; since $q$ is not originally classical but becomes classical, it must be
on a boundary term or a term with a classical qubit. In the boundary case, we apply the same argument as in \textbf{Case
1b}. In the classical case, apply the argument from \textbf{Case 2}. Finally, we denote the Hamiltonian with terms in
$\mc R$ removed as $\tilde H$.

\paragraph{Construction of the Classical Hamiltonian.} Now that we have fixed the tiling $\mc T$ and chosen the set
of removed terms $\mc R$, it remains to describe how to translate this into a classical Hamiltonian. To do so, we
introduce some notation. For each tile $T \in \mc T$, we denote $Q(T)$ to refer to the qubits contained entirely within
$T$ and $H(T)$ as the set of original Hamiltonian terms in $\tilde H$ overlapping $T$ (i.e. acting non-trivially on some
$q \in Q(T)$). In addition to the qubits internal to $T$, the qubits ``nearby'' $T$ are also important. Define
\[
    \overline{Q}(T) = \bigcup_{h \in H(T)} Q(T),
\]
Additionally, let $\mc N(T)$ be the tiles neighboring $T$. Then, we define the Hilbert space of the tile qudit $q_T$ as
$\mc H_T = \otimes_{q \in T} \mc H_q$. To define the terms in the grouped Hamiltonian, we define the following sets:
\begin{itemize}
    \item $S_T$ is the set of Hamiltonian terms which act non-trivially on $Q(T) \setminus \mc C_0$ (i.e. on a qudit in
        $T$ which is not originally classical).
    \item $S_{T,T'} := S_T \cap S_{T'}$ are the terms acting on both $T$ and $T'$.
    \item $h_{T,T'} = \sum_{h \in S_{T,T'}} h$ will be the 2-local terms in the grouped Hamiltonian.
    \item $h_T = \sum_{\substack{h \in S_T\\\forall T' \neq T, h \not\in S_{T'}}} h$ are the 1-local terms.
\end{itemize}
Then, the grouped Hamiltonian is denoted as $\tilde H_{\mc T} = \sum_{T} h_T + \sum_{T' \neq T} h_{T,T'}$. Again, note that
$h_{T,T'}$ is technically \emph{not} $2$-local; it only becomes $2$-local once a classical restriction to $\mc C_0$ is
fixed. But given such a restriction, we have the following simple consequence of the Structure Lemma
\cite{bravyi2003commutative}.
\begin{lemma}
    Any restriction $s \in \{0,1\}^{|\mc C_0|}$ to the classical qubits in $\mc C_0$ induces a $2$-local structure in
    $H_{\mc T}$. Therefore, for each $q_T$, the terms $h_{T,T'}|_s, T' \in \mc N(T)$ are mutually commuting and induce a
    decomposition of the Hilbert space of $q_T$ as
    \[
        \mc H_T = \bigoplus_{i = 1}^{\ell_T} \mc H_T^{(i)} = \bigoplus_{i=1}^\ell \bigotimes_{T' \in \mc N(T)} \mc H_T^{(i,T')},
    \]
    where within each subspace $\mc H_T^{(i)}$, the term $h_{T,T'}|_s$ acts non-trivially only on $\mc H_T^{(i,T')}$. To
    make the dependence on an assignment $s$ to $\mc C_0$ explicit, we refer to this decomposition as $\mc D^s_T$.
\end{lemma}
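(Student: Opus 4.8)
The plan is to prove the two halves of the statement in sequence: first, that fixing a restriction $s$ to the classical qubits $\mc C_0$ collapses the grouped Hamiltonian $\tilde{H}_{\mc T}$ to a genuine $2$-local CLH on the tile qudits; and second, that the Structure Lemma applied vertex-by-vertex to this $2$-local CLH yields the claimed block decomposition $\mc D_T^s$.

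For the first half, I would invoke the construction of the co-triangulation $\mc T$ and the removed set $\mc R$ from the paragraph above, together with the case analysis on corner terms. After grouping, every term of $\tilde{H}$ is either internal to one tile ($1$-local), shared by two adjacent tiles ($2$-local), or sits at a corner of three tiles $T, T', T''$; only the last type is dangerous. I would argue that in every such case, after deleting $\mc R$ and conjugating by $\Pi_s$, no surviving term acts non-trivially on qubits of $\overline{\mc C_0}$ in three distinct tiles: a fully quantum interior corner term has been placed in $\mc R$ and disappears (Case 1a); a fully quantum boundary corner term has its corner placed in a hole, so there is effectively no third tile (Case 1b); a term $h$ touching $\mc C_0$ has the co-triangulation arranged so the third tile $T''$ holds only the $\mc C_0$-qubits of $h$, hence $h \notin S_{T''}$ and $h$ contributes only to $h_{T,T'}$ (Case 2); and a term touching $\mc C \setminus \mc C_0$ reduces to one of the previous cases by \Cref{lem:classical_characterization} (Case 3). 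Using \Cref{rem:reduced_support} to treat $h|_s$ as supported only on $\operatorname{sup}(h)\cap\overline{\mc C_0}$, each surviving term of $\tilde{H}_{\mc T}|_s = \sum_T h_T|_s + \sum_{T'\neq T} h_{T,T'}|_s$ then acts on at most two tile qudits, so this is a $2$-local Hamiltonian on the graph with vertex set $\{q_T\}$ and edges $\{(q_T,q_{T'}) : T'\in\mc N(T)\}$. I would also record that $H|_s = \Pi_s\tilde{H}\Pi_s$ stays commuting and Hermitian (as verified when the classical restriction was defined), so the operators $\{h_{T,T'}|_s\}$ are mutually commuting Hermitian terms.

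For the second half, I would fix a tile $T$, view $q_T$ as a vertex of this graph with neighbor set $\mc N(T)$ and edge terms $\{h_{T,T'}|_s\}_{T'\in\mc N(T)}$, and apply the Structure Lemma exactly as in \Cref{sec:2local} — i.e.\ \Cref{lem:Cstructure}, equivalently \Cref{lem:structure} — to $q_T$. This directly gives $\mc H_T = \bigoplus_{i=1}^{\ell_T} \mc H_T^{(i)}$ with $\mc H_T^{(i)} = \bigotimes_{T'\in\mc N(T)} \mc H_T^{(i,T')}$, every $h_{T,T'}|_s$ preserving each $\mc H_T^{(i)}$ and, restricted there, acting non-trivially only on the factor $\mc H_T^{(i,T')}$ (together with $q_{T'}$); I would name this decomposition $\mc D_T^s$. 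One can also check via \Cref{cor:structure} that the $1$-local terms $h_T|_s$ keep this decomposition invariant, which will be convenient later.

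The step I expect to be the main obstacle is the first half: one has to check the case analysis is uniform over \emph{all} assignments $s$ simultaneously — in particular that placing co-triangulation corners to dodge the $\mc C_0$-qubits of one term does not inadvertently drop a corner onto another term that is genuinely $3$-local or not correctable — and to bookkeep $S_T$, $S_{T,T'}$, and the definitions of $h_T$ and $h_{T,T'}$ carefully enough to be sure the $3$-local contamination truly vanishes after projection. The second half is then a mechanical reuse of the Structure Lemma machinery already developed in \Cref{sec:2local}.
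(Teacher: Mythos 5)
Your proposal is correct and takes essentially the same route the paper intends: the paper states this lemma without an explicit proof, presenting it as a ``simple consequence of the Structure Lemma'' given the immediately preceding construction of the tiling $\mc T$ and removed set $\mc R$, and your two halves (2-locality of the restricted grouped Hamiltonian via the corner-term case analysis, followed by the Structure Lemma of \Cref{sec:2local} applied at each tile qudit $q_T$) are precisely that implicit argument spelled out. The uniformity-over-$s$ worry you flag is not an obstacle here, since the tiling and $\mc R$ are fixed from the geometric case analysis before any assignment $s$ is chosen.
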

We can say something slightly stronger than this; for a tile $T$, the above decomposition only depends on terms
$h_{T,T'}$ which intersect $T$ (and the internal terms as well). Therefore, if we consider a restriction $s'$ where a
bit \emph{outside} of $\overline Q(T)$ is flipped, this induces the same decomposition, i.e. $\mc D^{s'}_T =
\mc D^s_T$. As a result, when specifying a restriction in the context of a triangle, we implicitly imagine $s$ is
defined over $\{0,1\}^{|Q(T) \cap \mc C_0|}$.

Recall our goal is to argue that $h_{T,T'}|_s$ is classical. Recalling \Cref{sec:2local}, we see that the classical
terms constructed for a term $h_{v,w}$ in \Cref{eq:classical_ham} are only a function of the decomposition on vertices $v$ and $w$ (in our
case $q_T$ and $q_{T'}$). In particular, for a fixed assignment $s$ and term $h_{T,T'}|_s$, we obtain the classical term,
\begin{equation}
    \label{eq:dep_classical_ham}
    h_{T,T'}|_s := \sum_{\substack{\bs j = (j_1,j_2)\\ \in [\ell_T] \times [\ell_{T'}]}} \sum_{\bs b_{\bs j}^{T,T'}}
    \lambda(\bs b^{T,T'}_{\bs j}) \ketbra {\psi(\bs b^{T,T'}_{\bs j})}{\psi(\bs b^{T,T'}_{\bs j})},
\end{equation}
where $\bs b^{T,T'}_{\bs j}$ and $\lambda(\bs b^{T,T'}_{\bs j})$ are defined as in \Crefrange{eq:14}{eq:17}, except that
we use $T,T'$ to refer to qudits, rather than $v,w$. Then, we have that the following $2$-local Hamiltonian is
equivalent to $\tilde H|_s$,
\[
    \tilde H^{(c,s)} := \sum_{T,T'} h_{T,T'}|_s + \sum_T h_T|_s.
\]
Now, we have that by construction, any ground state $\ket \psi$ to $\tilde H^{(c,s)}$ is (after some 1-local unitary
transformations) equal to a ground state $\ket \phi$ of $H|_s$; in turn, we can obtain a ground state of the
original Hamiltonian $\tilde H$ as $\ket s \otimes \ket \phi$. In particular, we can write down a Hamiltonian equivalent
to $\tilde H$ as
\[
\sum_{s \in \{0,1\}^{|\mc C_0|}} \ketbra s s \otimes \tilde H^{(c,s)} = \sum_{T,T'}\sum_{s \in \{0,1\}^{|\mc C_0|}}
\ketbra s s \otimes h_{T,T'}|_s + \sum_T\sum_{s \in \{0,1\}^{|\mc C_0|}} \ketbra s s \otimes  h_T|_s.
\]
Unfortunately, since $|\mc C_0|$ can be as large as $\textsf{poly}(n)$, this yields an exponentially-large Hamiltonian.
But we use our earlier observation, which is that the decompositions on $T$ and $T$ only depend on $\overline Q(T) \cap
\mc C_0$. Since the local decompositions are the same, so too are classical terms obtained in
\Cref{eq:dep_classical_ham}. This means that for any $T,T'$,
\[
\sum_{s \in \{0,1\}^{|\mc C_0|}}  \ketbra s s \otimes h^{(c,s)}_{T,T'} = \sum_{s' \in \{0,1\}^{|\mc C_0 \cap (S_T \cup S_{T'})|}} \ketbra{s'}{s'} \otimes h^{(c,s')}_{T,T'},
\]
and the resulting Hamiltonian is
\begin{equation}
    \label{eq:final_classical_ham}
    \tilde H^{(c)} = \sum_{T,T'} \sum_{s' \in \{0,1\}^{|\mc C_T \cup \mc C_{T'}|}} \ketbra{s'}{s'} \otimes h^{(c,s')}_{T,T'} + \sum_T \sum_{s' \in \{0,1\}^{|\mc C_T |}} \ketbra{s'}{s'} \otimes h^{(c,s')}_{T}.
\end{equation}
Since each tile $T$ is of constant size, each set $\mc C_T \cup \mc C_{T'}$ has constant size as well. This implies we
only get a constant blow-up in the number of terms of $H$, yielding a polynomial-size instance. The locality of the
instance depends on the max size of the set $\mc C_T \cup \mc C_{T'}$. Assuming this is bounded by $k$, we obtain an $2
+ k$-local classical Hamiltonian.

\vspace{1em}
\noindent \textbf{Correcting Removed Terms $\mc R$}. As in the original proof, a ground state $\ket \psi$ for $\tilde
H^{(c)}$ can be constructed in \textsf{NP}. To obtain a ground state for the original Hamiltonian $H$, we need to
correct the terms removed terms $\mc R$. The first step is to remove the classical qubits. We do this by successively
measuring the classical qubits to obtain a terminating restriction $\bs s = (s_0, \dots, s_\ell)$. 
\begin{lemma}
    \label{lem:ground_state_restriction}
    Suppose the above operation yields a series of assignments $\bs s = (s_0, \dots, s_\ell)$, taking $\ket \psi$ to $\ket{\phi} =
    \ket{s_0, \dots, s_\ell} \ket{\phi'}$. Then, $\ket{\phi}$ is a ground state for the Hamiltonian
    $\tilde H|_{\bs s}$.
\end{lemma}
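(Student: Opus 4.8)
The plan is to prove \Cref{lem:ground_state_restriction} by induction on the number of measurement rounds, with a one-line variational argument at the heart of the inductive step. The first thing I would do is reduce the claim to a statement purely about $\tilde H$ (the Hamiltonian with the corner terms $\mc R$ removed, but with classical qubits still present). Recall that by construction $\tilde H^{(c)}$ of \eqref{eq:final_classical_ham} is obtained from $\tilde H$ by grouping qubits into tile qudits, applying the $1$-local unitaries supplied by the Structure Lemma on each $2$-local block $h_{T,T'}|_s$, and merging the resulting block-diagonal terms using the fact that the local decomposition $\mc D^s_T$ depends only on the classical qubits in the neighbourhood $\mc C_T\cup\mc C_{T'}$; all of these are implemented by (the inverse of) a constant-depth circuit and are therefore ground-state preserving. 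Hence, after undoing the grouping and the $1$-local unitaries — transformations I will absorb into the notation $\ket\psi$ — the state $\ket\psi$ is a ground state of $\tilde H$ on the original $n$ qubits, and it suffices to show: successively measuring the (propagated) classical qubits of $\tilde H$ in the basis that block-diagonalizes its terms yields, for whatever sequence of outcomes $\bs s = (s_0,\dots,s_\ell)$ arises, a ground state of $\tilde H|_{\bs s}$.

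The core is the following one-round claim. Let $K$ be a commuting Hamiltonian with classical qubits $\mc C_0(K)$, and let $U = \cI \otimes \bigotimes_{q \in \mc C_0(K)} U_q$ be the local unitary from the classical-restriction definition, so that $U K U^\dagger = \sum_{s} \ketbra{s}{s} \otimes (K|_s)$ is block-diagonal with respect to the computational basis on $\mc C_0(K)$. Write $E_0(\cdot)$ for the ground energy. If $\ket\chi$ is a ground state of $K$, decompose $U\ket\chi = \sum_s \ket{s} \otimes \ket{\chi_s}$ with the $\ket{\chi_s}$ unnormalized. Since each $K|_s$ is the restriction of $K$ to an invariant subspace, $E_0(K|_s)\ge E_0(K)$, and therefore
\[
E_0(K) = \bra{\chi} K \ket{\chi} = \sum_s \bra{\chi_s} (K|_s) \ket{\chi_s} \ \geq\ \sum_s \|\chi_s\|^2\, E_0(K|_s)\ \geq\ \sum_s \|\chi_s\|^2\, E_0(K) = E_0(K)\,.
\]
Equality throughout forces $E_0(K|_s)=E_0(K)$ for every $s$ with $\ket{\chi_s}\neq 0$, and for each such $s$ forces $\ket{\chi_s}$ to be a ground state of $K|_s$. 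Thus measuring $\mc C_0(K)$ (in the $U$-rotated basis) on a ground state of $K$ produces, for any attainable outcome $s$, the state $\ket s \otimes \ket{\chi_s}$ with $\ket{\chi_s}$ a ground state of $K|_s$.

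I would then instantiate this claim iteratively. Starting from $K = \tilde H$ and the ground state $\ket\psi$, the first round yields $\ket{s_0} \otimes \ket{\psi_1}$ with $\ket{\psi_1}$ a ground state of $\tilde H|_{s_0}$; applying the claim to $K = \tilde H|_{s_0}$, whose classical qubits are $\mc C_1(s_0)$, yields a ground state of $\tilde H|_{(s_0,s_1)}$; and so on. Each round fixes a nonempty, fresh set of qubits (by \Cref{rem:reduced_support}, once a qubit is assigned it drops out of the support of all terms), so the process halts after $\ell+1 \le n$ rounds, precisely at a terminating restriction $\bs s=(s_0,\dots,s_\ell)$ with $\mc C_{\ell+1}=\emptyset$. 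Collecting the accumulated assignments, the final state is $\ket{s_0,\dots,s_\ell}\ket{\phi'} = \ket\phi$, and by the induction it is a ground state of $\tilde H|_{\bs s}$, which is the claim.

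The step I expect to require the most care is the reduction in the first paragraph — faithfully matching ground states of the merged classical Hamiltonian $\tilde H^{(c)}$ of \eqref{eq:final_classical_ham} with ground states of $\tilde H$. One must verify that truncating the sum over classical assignments to the local neighbourhoods $\mc C_T\cup\mc C_{T'}$ genuinely preserves the spectrum (not just ground energy), and that the Structure-Lemma $1$-local unitaries chosen on overlapping tiles are mutually consistent, so that undoing them is a well-defined constant-depth circuit. The measurement/induction argument itself is then routine, being nothing more than the variational identity above applied along the chain of restrictions, each of which is block-diagonal on its classical qubits by the classical-restriction lemma.
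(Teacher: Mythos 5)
Your proof is correct and follows the same overall strategy as the paper's: induct over the rounds of classical-qubit measurement, using at each round that the current Hamiltonian is block-diagonal with respect to its classical qubits, and conclude that the post-measurement component is a ground state of the corresponding restriction. The one place you genuinely diverge is the justification of the inductive step. The paper writes $0 = \bra{\psi'} H' \ket{\psi'} = \sum_s \bra{\psi'}\Pi_s H' \Pi_s\ket{\psi'}$ and invokes $\Pi_s H' \Pi_s \succeq 0$, which implicitly assumes the ground energy has been normalized to zero and that each block is positive semidefinite — an assumption that is not automatic here, since the coefficients of the terms are arbitrary reals. Your variational argument, $E_0(K) = \sum_s \bra{\chi_s}(K|_s)\ket{\chi_s} \geq \sum_s \|\chi_s\|^2 E_0(K|_s) \geq E_0(K)$, only uses that each $K|_s$ is the restriction of $K$ to an invariant subspace (so $E_0(K|_s) \geq E_0(K)$), and therefore works without any positivity or normalization assumption; it is the more robust version of the same step. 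You also make explicit the preliminary identification between ground states of $\tilde H^{(c)}$ and of $\tilde H$ via the constant-depth grouping/ungrouping circuit, which the paper leaves implicit in the phrase ``by assumption $\ket\psi$ is a ground state of $\tilde H^{(c)}$''; your flagging of the consistency of the Structure-Lemma unitaries across tiles as the delicate point there is well placed.
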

\begin{proof}
    We show by induction. By assumption $\ket \psi$ is a ground state of $\tilde H^{(c)}$. Suppose this holds for
    $s_0,\dots, s_i$, with corresponding ground state $\ket{\psi'}$ of $H' := \tilde H|_{s_0, \dots, s_i}$. By
    definition $\mc C' := \mc C_{i}(s_0,\dots, s_{i-1})$ are the classical qubits of $H'$. This implies that (after some
    single-qubit unitary transformation) each term $h \in H'$ can be written as $h  =\sum_{s \in \{0,1\}^{|\mc C'|}}
    \ketbra s s_{\mc C'} \otimes \tilde h_s$, and thus $H'$ takes the form
    \[
        \sum_{s \in \{0,1\}^{|\mc C'|}} \ketbra s s_{\mc C'} \otimes \left(\sum_{h \in H'} \tilde h_s\right) = 
        \sum_{s \in \{0,1\}^{|\mc C'|}} \Pi_s \otimes \left(\sum_{h \in H'} \tilde h_s\right)
    \]
    i.e., $H'$ is \emph{block diagonal} w.r.t. the qubits in $\mc C'$. Thus, if $\ket{\psi'}$ is a ground state of $H'$,
    then,
    \[
        0 = \bra{\psi'} H' \ket{\psi'} = \sum_{s \in \{0,1\}^{|\mc C'|}} \bra{\psi'}  \Pi_s H' \Pi_s \ket{\psi'}
        \stackrel{\Pi_s H' \Pi_s \succeq 0}{\vphantom{\otimes}\iff} \forall s \bra{\psi'} \Pi_s H' \Pi_s \ket{\psi'} = 0.
    \]
    This implies that $\Pi_s \ket{\psi'}$ is a ground state of $H'|_{s}$ for any measurement outcome $s$.
\end{proof}

Therefore, the above procedure yields a Hamiltonian $H' = \tilde H|_{\bs s}$ and a ground state $\ket\psi$ of
$H'$. Now, recall that the removed term $h \in \mc R$ is an interior, fully quantum term of the Hamiltonian $H$.
\Cref{as:uniformly_correctable} implies that there is a correction unitary operator $L_h$ anti-commuting with
$h$ and commuting with every other term in $H'$. Thus, we may perform the measurement $\mc M = \{\tfrac 1 2(\cI + h),
\tfrac 1 2 (\cI -h)\}$ and we can apply $L_h$ to flip the measurement outcome if required.

\subsubsection{Gibbs state sampling}
\label{sec:proof_for_classical_gibbs}
In this section, we extend the previous proof so that rather than obtaining a single ground state of $H$, we obtain a
sample from the Gibbs distribution. Initially, we apply exactly the same steps as in the previous section, until we
obtain a classical Hamiltonian
\begin{equation}
    \tilde H^{(c)} = \sum_{T,T'} \sum_{s' \in \{0,1\}^{|\mc C_T \cup \mc C_{T'}|}} \ketbra{s'}{s'} \otimes
    h^{(c,s')}_{T,T'} + \sum_T \sum_{s' \in \{0,1\}^{|\mc C_T |}} \ketbra{s'}{s'} \otimes h^{(c,s')}_{T}
\end{equation}
equivalent to the original Hamiltonian with terms $\mc R$ removed. Now, rather than obtaining a \emph{ground state} of
$\tilde H^{(c)}$, we assume that we can sample from the Gibbs distribution of $\tilde H^{(c)}$, and then argue that we
can recover the Gibbs distribution of the original Hamiltonian, by correcting for the terms $\mc R$. We will show how to
correct these terms inductively. Let $Q$ be the set of terms that were either not removed or already corrected, and let
$H^Q = \sum_{h\in Q} h$. Assume we have the Gibbs state  $\rho(Q)$ of $H^Q$
\[
    \rho(Q) := \frac 1 Z \sum_{s, \lambda^Q_s} e^{-\beta \lambda^Q_s} \ketbra s s \otimes \Pi^Q_{\lambda_s},
\]
where we use the fact that the Hamiltonian $H^Q$ is diagonal w.r.t. $s \in \{0,1\}^{|\mc C_0|}$. To correct terms $h \in \mc R$, we use the same observation from \Cref{lem:ground_state_restriction} that given some classical restriction $s$, the resulting Hamiltonian $H|_s$ is block-diagonal w.r.t. the qubits in $\mc C_1(s)$. Thus, we may assume we see the state
\[
    \rho^{(s,t)}(Q) = \sum_{\lambda^Q_{s,t}} e^{-\beta \lambda^Q_{s,t}} \ketbra{s,t}{s,t}\Pi^Q_{\lambda_{s,t}}
\]
with probability proportional to $\sum_{\lambda^Q_{s,t}} e^{-\beta \lambda^Q_{s,t}}$. Here, $\Pi^Q_{\lambda_{s,t}}$ is
the projector onto the $\lambda^Q_{s,t}$-eigenspace of $H^Q|_{s,t}$. The idea is to apply the proof of
\Cref{lem:reduction} within the subspace $\ketbra{s,t}{s,t} \otimes \cI$. Specifically, we make the following observations.
\begin{itemize}
    \item \Cref{lem:2eig} holds since each $h \in \mc R$ is a fully quantum term (for any classical restriction, $h$ is in the interior).
    \item By \Cref{as:uniformly_correctable} there is a correction operator $L_h$ anti-commuting with $h$ and commuting with all other terms of $H^Q|_{s,t}$, for any choice of $s,t$. This assumption means that $L_h$ itself is only defined over non-classical qubits and thus for a particular choice of $s,t$, we can think of this operator as being $\ketbra{s,t}{s,t} \otimes L_h$. Then, \Cref{lem:sym} follows by tensoring both sides of \Cref{eq:sym1,eq:sym2} with $\ketbra{s,t}{s,t}$.
    \item Similarly, \Cref{lem:dim} holds, again by tensoring both matrices with $\ketbra{s,t}{s,t}$.
\end{itemize}
These observations plus the proof of \Cref{lem:reduction} implies that we can prepare the state,
\[
    \hat \rho(Q \cup \{h\}) = \frac 1 {Z(Q \cup \{h\})}\sum_{\lambda^Q_{s,t}} \sum_{\lambda_h \in \{\pm c_p\}} \exp(-\beta \lambda^Q_{s,t}) \cdot \exp(-\beta \lambda_h) \ketbra{s,t}{s,t}\Pi^{Q \cup \{h\}}_{\lambda^Q_{s,t} + \lambda_h}
\]
It's easy to see that \emph{within} the subspace $\ketbra{s,t}{s,t}$ this is the correct Gibbs state. Moreover, across different $(s,t), (s',t')$ pairs, we retain the correct distribution. This is because correcting additional fully quantum terms $h$ only applies an identical multiplicative factor across each of the $\ketbra{s,t}{s,t}$-subspaces.

\section{Acknowledgement}
We thank Dominik Hangleiter, Sandy Irani,  Anurag Anshu, and Yongtao Zhan for the helpful discussion.

YH is supported by the National Science Foundation Graduate Research Fellowship under Grant No. 2140743; YH is also
supported by NSF grant CCF-2430375. Any opinion, findings, and conclusions or recommendations expressed in this material are those of the authors(s) and do not necessarily reflect the views of the National Science Foundation. Jiaqing Jiang is supported by MURI Grant FA9550-18-1-0161 and the IQIM, an     NSF Physics Frontiers Center (NSF Grant PHY-1125565).

\appendix
\section{Gibbs sampling reduction for punctured defected Toric code}\label{appendix:DTC}

As described in the proof overview section (Section \ref{sec:defected_toric_code}),  our   Gibbs sampler gives an $\mc O(n^2)$-time algorithm  for preparing  the Gibbs states of the punctured defected Toric code $H_{DT}$. In this section we instead describe a different and slower   Gibbs sampler for $H_{DT}$, which  achieves the following Claim \ref{claim:TC}
and captures the key ideas for our  Gibbs sampler for  general qubit 2D CLHs.

\tikzset{
    pics/qudits/.style n args={3}{code={
        \node[plaquette,draw=black!40] (q2#3) at ($ (#1,#2) + (.2,.8) $) {$q_2$};
        \node[plaquette,draw=black!40] (q4#3) at ($ (#1,#2) + (.2,.2) $) {$q_4$};
        \node[plaquette,draw=black!40] (q3#3) at ($ (#1,#2) + (.8,.8) $) {$q_3$};
        \node[plaquette,draw=red, line width=0.4mm] (q1#3) at ($ (#1,#2) + (.8,.2) $) {$q_1$};
}}}
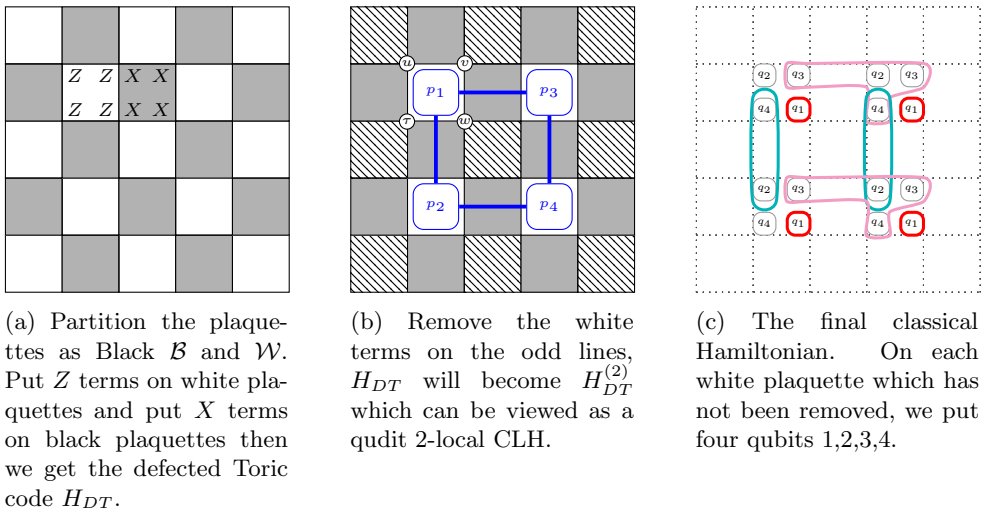
\begin{figure}[h!]
    \centering
    \begin{subfigure}[t]{0.25\textwidth}
        \centering
        \begin{tikzpicture}[scale=0.75,every label/.append style={font=\scriptsize}]
            \foreach \x in {0,...,4} {
                \foreach \y in {0,...,4} {
                    \ifthenelse{\equal{\intcalcMod{\x+\y}{2}}{1}}{
                        \draw[fill=black!30] (\x,\y) rectangle ++(1,1);
                    }{
                        \draw (\x,\y) rectangle ++(1,1); 
                    }
                }
            }
            \path (1,3) rectangle ++(1,1) node[fitting node] (ztype) {};
            \path (2,3) rectangle ++(1,1) node[fitting node] (xtype) {};
            \node[label={[label distance=-0.25cm]below left:$Z$}] at (ztype.north east) {};
            \node[label={[label distance=-0.25cm]below right:$Z$}] at (ztype.north west) {};
            \node[label={[label distance=-0.25cm]above right:$Z$}] at (ztype.south west) {};
            \node[label={[label distance=-0.25cm]above left:$Z$}] at (ztype.south east) {};
            
            \node[label={[label distance=-0.25cm]below left:$X$}] at (xtype.north east) {};
            \node[label={[label distance=-0.25cm]below right:$X$}] at (xtype.north west) {};
            \node[label={[label distance=-0.25cm]above right:$X$}] at (xtype.south west) {};
            \node[label={[label distance=-0.25cm]above left:$X$}] at (xtype.south east) {};
        \end{tikzpicture}
        \caption{Partition the plaquettes as Black $\cB$ and $\cW$. Put $Z$ terms on white plaquettes and put $X$ terms on black plaquettes then we get the defected Toric code $H_{DT}$.}
        \label{fig:fig2a}
    \end{subfigure}
    \qquad
    \begin{subfigure}[t]{0.25\textwidth}
        \centering
        \begin{tikzpicture}[
            scale=0.75,
            every node/.append style={font=\tiny},
            every label/.append style={font=\scriptsize},
            qubitlabel/.style={inner sep=0, draw, circle, minimum width=0.2cm, fill=white},    
            plaquette/.style={draw, color=blue, rounded corners, rectangle, minimum width=0.6cm, minimum height=0.6cm,inner sep=0mm}
        ]
            \foreach \x in {0,...,4} {
                \foreach \y in {0,...,4} {
                    \ifthenelse{\equal{\intcalcMod{\x+\y}{2}}{1}}{
                        \draw[fill=black!30] (\x,\y) rectangle ++(1,1);
                    }{
                        \ifthenelse{\equal{\intcalcMod{\y}{2}}{0}}{\draw[pattern=north west lines, pattern color=black] (\x,\y) rectangle ++(1,1);}{}
                    }
                }
            }
            \node[plaquette] (p1) at (1.5,3.5) {$p_1$};
            \node[plaquette] (p2) at (1.5,1.5) {$p_2$};
            \node[plaquette] (p3) at (3.5,1.5) {$p_4$};
            \node[plaquette] (p4) at (3.5,3.5) {$p_3$};

            \path (1,3) rectangle ++(1,1) node[fitting node] (ztype) {};
            \node[qubitlabel] at (ztype.north west) {$u$};
            \node[qubitlabel] at (ztype.north east) {$v$};
            \node[qubitlabel] at (ztype.south west) {$\tau$};
            \node[qubitlabel] at (ztype.south east) {$w$};

            \draw[color=blue, line width=0.5mm] (p1.east) -- (p4.west);
            \draw[color=blue, line width=0.5mm] (p1.south) -- (p2.north);
            \draw[color=blue, line width=0.5mm] (p4.south) -- (p3.north);
            \draw[color=blue, line width=0.5mm] (p2.east) -- (p3.west);
        \end{tikzpicture}
        \caption{Remove the white terms on the odd lines, $H_{DT}$ will become $H_{DT}^{(2)}$ which can be viewed as a qudit 2-local CLH.}
        \label{fig:fig2b}
    \end{subfigure}
    \qquad
    \begin{subfigure}[t]{0.25\textwidth}
        \centering
        \begin{tikzpicture}[
            scale=0.75,
            every node/.style={transform shape},
            every node/.append style={font=\tiny},
            every label/.append style={font=\scriptsize},
            qubitlabel/.style={inner sep=0, draw, circle, minimum width=0.2cm, fill=white},    
            plaquette/.style={draw, draw=blue, fill=white, rounded corners=3, rectangle, minimum width=0.4cm, minimum height=0.4cm,inner sep=0mm}
        ]
            \foreach \x in {0,...,5} {
                \foreach \y in {0,...,5} {
                    \draw[dotted, color=black!70] (0,\y) -- (5,\y);
                    \draw[dotted, color=black!70] (\x,0) -- (\x,5);
                }
            }
            
            \pic {qudits={1}{3}{1}};
            \pic {qudits={3}{3}{2}};
            \pic {qudits={3}{1}{3}};
            \pic {qudits={1}{1}{4}};
            
            \draw [Lavender, line width=0.4mm] plot [smooth cycle,tension=0.4] coordinates {(q31.north west) (q31.north east) (q32.north east) (q32.south east) (q42.north east) (q42.south east) (q42.south west) (q22.south west) (q31.south east) (q31.south west)};
            \draw [BlueGreen, line width=0.4mm] plot [smooth cycle,tension=0.4] coordinates {(q42.north west) (q42.north east) (q23.south east) (q23.south west)};
            \draw [BlueGreen, line width=0.4mm] plot [smooth cycle,tension=0.4] coordinates {(q41.north west) (q41.north east) (q24.south east) (q24.south west)};
            \draw [Lavender, line width=0.4mm] plot [smooth cycle,tension=0.4] coordinates {(q34.north west) (q34.north east) (q33.north east) (q33.south east) (q43.north east) (q43.south east) (q43.south west) (q23.south west) (q34.south east) (q34.south west)};
        \end{tikzpicture}
        \caption{The final classical Hamiltonian. On each white plaquette which has not been removed, we put four qubits 1,2,3,4. }
        \label{fig:fig2c}
    \end{subfigure}
    \caption{\Cref{fig:intro_defected} is recreated here for convenience, with more detail on the resulting classical Hamiltonian $H^{(2c)}_{DT}$.}
\end{figure}

We recall the defected Toric code, $H_{DT}$. As shown in \Cref{fig:fig2a}, imagine  partitioning the plaquettes in the 2D lattice as Black  $\cB$  and White $\cW$.
The \textit{defected Toric code} is putting $Z$ and $X$ terms in white and black plaquetes respectively:
\begin{align}
		&H_{DT} =\sum_{p\in \cB} c_pX^p + \sum_{p\in \cW} c_pZ^p, 
\end{align}
where $c_p$ can be an arbitrary real number.  In the standard Toric code $c_p=-1$ for any $p$.  One can check that $H_{DT}$ is a qubit 4-local CLH on 2D. 
 
\begin{claim}\label{claim:TC} 
 	For any inverse temperature $\beta$, if one can do classical Gibbs sampling  with respect to $H_{DT}^{(2c)},\beta$ within precision $\epsilon$ in classical time $T$, then one can prepare the quantum Gibbs state  with respect to $H_{DT},\beta$ within precision $\epsilon$ in quantum time $T + \mc O(n^2)$.
\end{claim}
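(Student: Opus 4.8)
The plan is to obtain \Cref{claim:TC} as a concrete instantiation of the reduction in \Cref{thm:4local2}, making each ingredient explicit for $H_{DT}$. First I would fix the partition of terms as in \Cref{fig:fig2b}: delete the white $Z$-type terms lying on alternating rows, call this removed set $\cR$, and let $\cP := \cB \cup \cO$ consist of all black $X$-type terms together with the surviving white terms $\cO$. Grouping the four qubits on each plaquette of $\cO$ into a single $2^4$-dimensional qudit makes every surviving white term $1$-local and every black term $2$-local, since (for the right choice of removed rows) every qubit lies on exactly one surviving white plaquette and each black plaquette overlaps precisely two of them. Thus $H_\cP$ becomes a $2$-local qudit CLH $H_{DT}^{(2)} = H_G^{(2)}$ on a constant-degree \emph{planar} graph $G$ with $|E(G)| = \mc O(n)$. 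Running the Structure Lemma (\Cref{lem:Cstructure}) on $H_{DT}^{(2)}$ then produces, exactly as in \Cref{sec:2local}, the classical Hamiltonian $H_{DT}^{(2c)} := H_G^{(2c)}$ of \Cref{eq:classical_ham}, together with a bijection $\bb_\bj \mapsto \ket{\psi(\bb_\bj)}$ between computational basis states of the grouped qudits and product eigenstates of $H_{DT}^{(2)}$, each preparable by a constant-depth circuit in time $\mc O(n)$.

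Next I would combine this with the oblivious randomized correction of \Cref{alg:cA}. Using the assumed classical Gibbs sampler for $(H_{DT}^{(2c)},\beta)$ to draw $\bb_\bj$ in time $T$ and then preparing $\ket{\psi(\bb_\bj)}$ yields, in time $T + \mc O(n)$, a state $\rho(\cP)$ that is $\epsilon$-close to $\rho(H_{DT}^{(2)},\beta)$. It then remains to correct the removed white terms: for each $p\in\cR$ I would produce the operator $L_p$ of \Cref{def:access} as a string $\bigotimes_{v\in\gamma_p} X_v$ of Pauli-$X$'s along a path $\gamma_p$ from a corner of $p$ to the boundary of the lattice (equivalently, a puncture), as in \Cref{fig:intro_fig_correct}; by the usual toric-code string argument such a $\gamma_p$ can be chosen so that $L_p$ anti-commutes with $p$, commutes with all other terms, and $L_p^2 = \cI$. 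Feeding $\rho(\cP)$ and these $L_p$ into \Cref{alg:cA} and invoking \Cref{lem:reduction} — whose hypotheses hold here, since \Cref{lem:2eig} is immediate ($p = c_p Z^p$ has eigenvalues $\pm c_p$) and \Cref{lem:sym,lem:dim} apply verbatim — shows the output is $\epsilon$-close to $\rho(H_{DT},\beta)$. The runtime is $T + \mc O(n)$ for the first stage plus $|\cR| = \mc O(n)$ iterations of cost $\mc O(n)$ each (an $\mc O(1)$ measurement and an $\mc O(n)$-qubit Pauli string $L_p$), for a total of $T + \mc O(n^2)$.

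The step I expect to be the real content — although it is already carried by \Cref{lem:reduction}/\Cref{lem:sym} — is that one \emph{cannot} simply run a Metropolis-style accept/reject: applying $L_p$ after measuring the outcome $+c_p$ does not return the $-c_p$-eigenstate (\Cref{eq:eigenstate_not_equal}), so correctness does not follow from any naive bijection of eigenstates. The resolution I would emphasize is that the correction is \emph{oblivious} — the flip probability depends only on the measured eigenvalue $\lambda_p$, not on the post-measurement state — and that by \Cref{lem:sym} conjugation by $L_p$ maps the $+c_p$-block of the common eigenspace $\Pi^Q_{\bl_Q}$ onto its $-c_p$-block, so the correct Gibbs weights and block structure are preserved along the induction. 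The only point genuinely specific to $H_{DT}$ that needs care is the construction of the strings $L_p$: this is where planarity and the presence of a puncture (or boundary) are used, and it is the reason both the faster DTC-specific algorithm of \Cref{sec:TCnn} and this reduction-based algorithm apply to the punctured defected Toric code rather than to a generic $4$-local $2$D CLH.
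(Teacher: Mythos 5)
Your proposal is correct and follows essentially the same route as the paper: the paper's Appendix~\ref{appendix:DTC} likewise removes alternating rows of white terms, groups the four qubits of each surviving white plaquette into a qudit, maps $H_{DT}^{(2)}$ to the classical Hamiltonian (there worked out explicitly via the stabilizer basis change rather than by citing \Cref{lem:Cstructure} abstractly), and then defers correctness of the oblivious randomized correction to \Cref{lem:reduction} exactly as you do. Your runtime accounting and your emphasis on \Cref{lem:sym} and the role of the boundary/puncture in constructing the strings $L_p$ match the paper's argument.
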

In this section we describe the algorithm and the reduction in details, while as the correctness proof is omitted, as is the same as in Section \ref{sec:qubit2D}. The algorithm will require removing a set $\mc R \subseteq \mc W$ of terms from $H_{DT}$. In particular, if we remove alternating rows of white terms (as in \Cref{fig:fig2b}) then treat the $4$ qubits in the support of each remaining white term as a \emph{single grouped qudit}, then the resulting Hamiltonian $H^{(2)}_{DT}$ is $2$-local. We'll denote the remaining white terms as $\mc O = \mc W \setminus \mc R$.

\paragraph{Notation.}
For simplicity, here we assume the 2D lattice is a square $L\times L$ lattice embedded on a plain which has boundaries. The number of qubits is $n=L\times L$. For better illustration, here we denote the computational basis as $\ket{x}\in\{\pm 1\}^n$ rather than the conventional notation $\ket{x}\in \{0,1\}^n$, where the Pauli $X$ and $Z$ operators act as
\begin{align}
    &Z \ket{1}=\ket{1},  Z \ket{-1}=-\ket{-1}\\
    & X \ket{1} = \ket{-1},  X \ket{-1} = \ket{1}.
\end{align}

This notion is only used in this section. We say $\ket{x}$ is of even Hamming weight if there are even 1s in $x$.

\paragraph{The classical Hamiltonian $H_{DT}^{(2c)}$.}\label{par:appendix_classical_ham} 
To map $H^{(2)}_{DT}$ to  the classical Hamiltonian $H_{DT}^{(2c)}$, we change the basis of the qudit. More specifically, for any term $p\in \cO$, consider the 4 qubits on $p$, as shown in \Cref{fig:fig2b}, name them as $u,v,w,\tau$ . A natural basis for the 4 qubit Hilbert space is the computational basis, which are common eigenvalues 
of $\{Z_u\}_{u\in p}$.  An alternative labeling is choosing another set of 4 independent stabilizers  
\begin{align}
	S^p_1:=Z^p, S^p_2:=X_u\otimes X_v, S^p_3:=X_v\otimes X_w, S^p_4 :=  X_w\otimes X_\tau,
\end{align}
 The 4 new stabilizers will specify  a new basis for the 4-qubit Hilbert space
 indexed by $s\in\{\pm 1\}^4$, 
 which can be viewed as the computational basis for 4 virtual qubits.
The 4 new stabilizers act as Pauli Z on the 4 virtual qubits. That is if we use $(Z_{*i})^p$ to denote Pauli Z on virtual qubit $i$ in plaquette $p$, then
$$
(Z_{*i})^p = S^p_i.
$$
Each $s\in\{\pm 1\}^4$
 indicates the common eigenvector of $S^p_i$ w.r.t eigenvalues $s_i$. 
 Denote the corresponding vector  as $\ket{s^p}^*$ where $*$ means $\ket{s^p}^*$ is the computational basis for the virtual qubits rather than the original qubits.
As an example one can verify that $\ket{\psi^p}$ in Eq.~(\ref{eq:psip}) corresponds to $\ket{1111^p}^*$.  
One can check that in this virtual qubit basis, 
 we have that the Hermitian terms\footnote{With some abuse of notations, we use $p$ to denote both the plaquette and the Hermitian term on the plaquette.} can be re-phrase as
\begin{align}
	 &\forall p\in \cO, p = c_p\cdot  (Z_{*1})^p,
\end{align}
Besides, note that $X_{u}\otimes X_{\tau}=  S_2^p\times S_3^p\times S_4^p $, and 
$X^{\otimes 4} = X^{\otimes 2}\otimes X^{\otimes 2}$,  we have
\begin{align}
	&\text{For horizontal $p$ connecting $p_1,p_3$, }p = c_p \cdot (Z_{*3})^{p_1} \otimes \left(Z_{*2}\otimes  Z_{*3} \otimes Z_{*4} \right)^{p_3}.\\
	&\text{For vertical $p$ connecting $p_1,p_2$, }p = c_p \cdot (Z_{*4})^{p_1} \otimes (Z_{*2})^{p_2}.
\end{align}
Thus the resulting Hamiltonian is exactly  $H_{DT}^{(2c)}$ as shown in \Cref{fig:fig2c} and Eq.~(\ref{eq:classical}), where we omitting the subscript * for simplicity, 
\begin{align}
 	H_{DT}^{(2c)} : =\sum_{\text{horizontal $(p,p_i,p_j)$}} c_p \cdot (Z_3)^{p_i}\otimes (Z_2Z_3Z_4)^{p_j}  +  \sum_{\text{vertical $(p,p_i,p_j)$}} c_p \cdot (Z_4)^{p_i}\otimes (Z_2)^{p_j} +\sum_{p\in O} c_p Z_1^p
 	\label{eq:classical}
 \end{align}
In summary,  by choosing another basis each qudit can be viewed as four virtual qubits, and the 2-local Hamiltonian $H_{DT}^{(2)}$ becomes a classical Hamiltonian $H_{DT}^{(2c)}$ w.r.t those virtual qubits. 
The computational basis of the virtual qubits $\ket{\bm{y}}^*$ 
corresponds to an eigenvector of $H_{DT}^{(2)}$,
 denoted as $\ket{\phi(\bm{y})}$. Denote the  eigenvalue w.r.t $\ket{\bm{y}}^*$ and $p\in \cB\cup\cO$ as $\lambda_p(\bm{y})$, and define 
 $$\lambda_{\cB\cup\cO}(\bm{y}) :=\sum_{p\in \cB\cup\cO} \lambda_p(\bm{y}).$$

\subsection{Preparation of a ground state}
Before describing our Gibbs state sampler, we first review a folklore quantum algorithm for preparing  the ground state of the standard Toric code.

For ease of presentation, for any subset $Q\subseteq \cW$ or $Q\subseteq\cB$, we use ``terms in $Q$" to denote the operators $(-X^p)$ if $p\in \cB$, and $(-Z^p)$ if $p\in \cW$. Note that the ground state of the standard Toric code is the common $(-1)$-eigenvector of all terms in $\cW$ and $\cB$. The first step in preparing the ground state of $H_{DT}$ is preparing the ground state of $H^{(2)}_{DT}$, where
\[
    H_{DT}^{(2)}:=\sum_{p\in \cB\cup\cO} p
\]
is two local.

The ground state of $H_{DT}^{(2)}$ is easy to describe. For each term  $p\in  \cO$, denote $\ket{\psi^p}$ as the uniform superposition of basis states with even Hamming weights, that is 
 	\begin{align}
 		\ket{\psi^p} =\frac{1}{2}\sum_{x\in \{\pm 1\}^p, |x| \text{ even}} \ket{x^p},\label{eq:psip}
 	\end{align}
 where $\{\pm 1\}^p$ is the computational basis of the four qubits of $p$.   One can check that 
 \begin{align}
     \ket{\psi^{\cB\cup \cO}}:= \otimes_{p\in \cO} \ket{\psi^p},
 \end{align}
  is the common $(-1)$-eigenstate of all terms in $\cB\cup\cO$.
\begin{remark}
    The underlying reason that the common $(-1)$-eigenstates of plaquette terms in $\cB\cup \cO$ can be chosen as a product of constant-qudit state, lies in the fact that  the Hamiltonian
    $H_{DT}^{(2)}$  can be viewed as a 2-local qudit commuting Hamiltonian after grouping  the four qubits in $p\in \cO$ as a qudit. Thus by the Structure Lemma, one can prepare the ground state by constant depth quantum circuit.
\end{remark}
Next, we need to correct for the removed terms $p \in \mc R$. Starting with $\ket{\psi^{\cB\cup\cO}}$, we sequentially measure the current state w.r.t the measurement $-Z^p$, for each $p \in \cR$.
\begin{itemize}
  \item[(a)]  If we get $- 1$, we know the current state is the common $(-1)$-eigenstate for all terms in $\cB\cup\cO\cup\{p\}$ and we move to the next $p\in\cR$. 
  \item[(b)] Otherwise we perform a deterministic correction: as shown in \Cref{fig:intro_fig_correct}, we can connect term $p$ to the boundary by a path $\gamma_p$. The deterministic correction is done by applying a sequence of $X$ operator, denoted as 
  \begin{align}
      L_{p}=\otimes_{v\in \gamma} X_v. \label{eq:correction}
  \end{align} 
\end{itemize}
where $X_v$ is the Pauli X on qubit $v$. Note that $L_p$ anti-commutes with $p$, and commutes with all other terms. Thus the final state will again be the common $(-1)$-eigenstate for all terms in $\cB\cup\cO\cup\{p\}$. 

\begin{remark}
\label{rem:corrected_state}
For a state $\ket \psi$ before the measurement for $p \in \mc R$ is applied, we end up with a $-1$-eigenstate whether we fall in the first or second case above. However, note that the exact states we obtain in each case may not be equal. This contributes to the difficulty of obtaining a Gibbs sampler.
 \end{remark}
 
In summary, to prepare the ground state, one first prepares the ground state of a 2-local Hamiltonian 
$\sum_{p\in \cB} -X^p + \sum_{p\in \cO} -Z^p.$
Then performs  a \textit{deterministic correction} to modify the ground state of the 2-local Hamiltonian to the ground state of the Toric code. The ground state preparation for the defected Toric code is similar.

\subsection{Preparation of the Gibbs state}
We now describe how to generalize the ground state preparation algorithm to prepare Gibbs state  for the defected Toric code. At a high level, the idea is to:
\begin{itemize}
    \item Step 1: First prepare the \textit{Gibbs state } of the  Hamiltonian 
$$H_{DT}^{(2)}:=\sum_{p\in \cB\cup\cO} p.$$
by mapping it to the \textit{classical 2-local} Hamiltonian $H_{DT}^{(2c)}$ defined in Eq.~(\ref{eq:classical}). 
    \item  Step 2:  Perform a \textit{randomized correction}, to modify the current Gibbs state for $H_{DT}^{(2)}$ to the final Gibbs state  for $H_{DT}$.
\end{itemize}

\paragraph{Step 1: Gibbs state for $H_{DT}^{(2)}$.} As in \Cref{fig:fig2b}, $H^{(2)}_{DT}$ is 2-local in the sense that:  if we group every 4 qubits in a plaquette $p\in \cO$ as one qudit, then every $p\in \cO$ only acts on one qudit, every $p\in \cB$ acts on two qudits. 
  
Moreover, since we only change local basis for every qudit, $\ket{\phi(\bm{y})}$ is in fact a tensor product of single-qudit state. Thus we can prepare the quantum Gibbs state  of $H_{DT}^{(2)}$ by a simple algorithm: do classical Gibbs sampling for $H_{DT}^{(2c)}$, get a string $\bm{y}$, and prepare the tensor product state $\ket{\phi(\bm{y})}$. 

\paragraph{Step 2: Correcting removed terms.} Step 2 is more tricky. By Step 1, we assume that we can sample the string $\bm{y}$ with probability  $\exp(-\beta \lambda_{\cB\cup\cO}(\bm{y}))/Z_{\cB\cup\cO}$, where $Z_{\cB\cup\cO}$ is the normalization factor.

Then we
take a term in the removed set $p\in \cR$,  and try to prepare the Gibbs state  w.r.t. $\left(H_{\cB\cup\cO\cup\{p\}},\beta\right)$. Consider measuring $\ket{\phi(\bm{y})}$ w.r.t $p$. Denote the measurement outcome as $\lambda \in\{\pm c_p\}$. Denote the projectors to be 
\begin{align}
	\Pi^{p}_{+c_p} : = \frac 1 2\left(I+\frac{p}{c_p}\right), \quad
 \Pi^{p}_{-c_p} : = \frac 1 2 \left(I-\frac{p}{c_p}\right).
\end{align}

We have 
\begin{align}
 	 Pr\left( \text{outcome is } \lambda \right) = 	\exp(-\beta \lambda_{\cB\cup\cO}(\bm{y}))/Z \cdot \langle \phi(\bm{y})| \Pi^{p}_{\lambda} | \phi(\bm{y})\rangle. \label{eq:prac} 
\end{align}
 Recall that at this moment, the ideal distribution we want is 
 \begin{align}
 	Pr_{ideal} \left(  \text{outcome is }\lambda\right) \text{ is proportional to } \exp(-\beta \lambda_{\cB\cup\cO}(\bm{y})) \cdot \exp(-\beta \lambda).\label{eq:ideal}
 \end{align}
Compared to the preparation of \textit{ground state}, which only needs to correct the eigenvalue, in the task of Gibbs state  preparation, to get Eq.~(\ref{eq:ideal}) from Eq.~(\ref{eq:prac}), it seems that one needs to correct 
\begin{itemize}
    \item The probability incurred by measurement, that is $\langle \phi(\bm{y})| \Pi^{p}_{\lambda} | \phi(\bm{y})\rangle$.
    \item  The probability incurred by the new energy $\exp(-\beta \lambda)$.
\end{itemize}
 This first correction is  tricky because it depends on $\phi(\bm{y})$, thus the probability to be corrected is different for every $\bm{y}$. We circumvent this problem by an intuitive oblivious \textit{randomized correction}. That is if we get outcome  $\lambda$  then
 \begin{itemize}
     \item  With probability $prob:=\frac{\exp(-\beta \lambda)}{\exp(\beta \lambda)+\exp(-\beta \lambda)}$ we do nothing.
     \item  With probability $1-prob$ we apply the correction operation $L_{p}$ defined in Eq.~(\ref{eq:correction}).
 \end{itemize}
  
One may doubt whether the above algorithm successfully prepares the Gibbs state  or not, since the correction only brings us back to the right eigenspace, but does not really help us get back to the right state, as noted in \Cref{rem:corrected_state}. That is 
\begin{align}
	& L_{p} \Pi^p_{+c_p}   \ket{\phi(\bm{y})} \not\propto \Pi^p_{-c_p}  \ket{\phi(\bm{y})},
\end{align}
where here $\not\propto$ means the two vectors are not proportional to each other. In fact, $L_p \Pi_{+c_p}   \ket{\phi(\bm{y})}$ might not even be (proportional to) one of the eigenstates $ \{ \Pi^p_{+c_p}\ket{\phi(\bm{y})}\}_{\bm{y}}\cup  \{ \Pi^p_{-c_p}\ket{\phi(\bm{y})}\}_{\bm{y}}$. 
The key fact that makes this oblivious randomized correction work, is the symmetry of the eigenspace and the fact that $L_p$ is a unitary. More specifically, let $\Pi^{\cB\cup\cO}_{\bm{y}}$ be the common eigenspace of $p'\in \cB\cup\cO$ w.r.t eigenvalue $\lambda_{p'}(\bm{y})$. Then
 \begin{align}
     L_p \Pi^{p}_{+c_p} \Pi^{\cB\cup\cO}_{\bm{y}} \Pi^{p}_{+c_p} L_p = \Pi^{p}_{-c_p} \Pi^{\cB\cup\cO}_{\bm{y}}\Pi^{p}_{-c_p}.
 \end{align}
Finally, note that all $p\in \cR$ can be corrected independently. Thus it suffices to perform the same measure and randomized correction procedure sequentially.

For general qubit 2D CLH $H$ without  classical qubits,  denote the removable terms $\cR$ as the set of terms whose measurement value can be corrected without changing the value of other terms. We prepare the Gibbs state  of $H$ in a similar way as for the defected Toric code.  That is we first remove  terms in $\cR$ and  transform the remaining 2-local Hamiltonian $H^{(2)}$ to a classical Hamiltonian $H^{(2c)}$. Then we first do classical Gibbs sampling for $H^{(2c)}$, then prepare the Gibbs state  of $H^{(2c)}$, then perform measurement and randomized correction to prepare the Gibbs state  of $H$.

\section{Gibbs Sampler for defected Toric Code embedded on a closed 2D surface}
\label{appendix:non_punctured}
As described in \Cref{sec:TCnn}, the oblivious correction algorithm works only when there are punctured terms that act
as a boundary in the  defected Toric code.
If the defected Toric code is defined on a torus—as in the standard formulation in the error-correcting code literature—then correction operators correspond to strings of Pauli operators connecting two measured terms of the Hamiltonian, and the oblivious correction approach does not applies directly. 
Nonetheless, we show that there is a simple algorithm which prepares the Gibbs
state of such defected Toric code on torus. 
One can similarly prepare the Gibbs state for the defected Toric code embedded on other closed 2D surface, such as a sphere, or for the models defined in \cite{aharonov2018complexity}.

As before, we partition the terms into ``white'' terms $\mc W$ and
``black'' terms $\mc B$, which correspond to $Z$ and $X$-type terms respectively. 

The algorithm proceeds in two phases.
In the first phase, we prepare the Gibbs state for the $Z$-type terms, i.e.,
\begin{equation}\label{eq:z_gibbs}\rho(H_{\mc W}, \beta) = \sum_{\lambda \in \Lambda_{\mc W}} \mu_{\mc W}(\lambda)
\Pi_{\lambda}.\end{equation}
Here $H_{\mc W} = \sum_{p \in \mc W} p$, and $\Lambda_{\mc W}$ denote the set of length-$|\mc W|$ vectors  with entries $\pm 1$ whose product equals 1, that is, $\lambda=(...,\lambda_p,...)_{p\in W}$ where $\lambda_p\in\{+1,-1\}$ and $\prod_{p\in \mc W} \lambda_p =1$.  The vector $\lambda$ is used to label the eigenspaces of $H_{\mc W}$, denoted as $\Pi_\lambda$, which satisfies $\forall p\in \mc W,  p \Pi_\lambda =\lambda_p \Pi_\lambda.$  The $\mu_{\mc W}(\lambda)$ is  the corresponding Boltzmann weights, 
     $$\mu_{\mc W}(\lambda)= \prod_{p\in \mc W} exp(-\beta \lambda_p c_p)/Z_{\mc W} \text{ where  $Z_{\mc W}= \sum_{\lambda \in \Lambda_{\mc W}} \prod_{p\in \mc W} exp(-\beta \lambda_p c_p)$}.$$
In the second phase, begin with $\rho(H_{\mc W}, \beta)$ we further consider the $X$-type terms, 
prepares the full states for $H=H_{\mc W}+H_{\mc B}$, that is
\begin{equation}\rho(H, \beta) = \sum_{\lambda \in \Lambda_{\mc W}} \mu_{\mc W}(\lambda)
\Pi_{\lambda} \sum_{\lambda'\in    \Lambda_{\mc B}}  \mu_{\mc W}(\lambda') \Pi_{\lambda'}\,, \label{eq:Hbeta}\end{equation}
Note that \cref{eq:Hbeta} holds since the correction operators for $X$ terms  commute with the correction operators for $Z$ terms, thus the correction operators commute with each $\Pi_\lambda$.

More specifically, for both phases, we use 1D Ising chains to guide the correction operations.  
For the first phase,
we construct the clasiscal Ising chain 
\begin{align}
H_\text{Ising} = \sum_{i=1}^k c_i h_i,\label{eq:Toric_Ising}
\end{align}
with $h_i = Z_i Z_{i+1}$, $k = |\mc W|$, $k+1$
identified with $1$, and the coefficients $c_i$ will be  specified later.   Start with the uniform distribution over bit strings. In each step of Glauber dynamics for a state
$\ket x$, a site $i \in [k]$ is picked randomly, and the change in energy 
$\Delta E = \text{Tr}\left[H_{\mathrm{Ising}} (\ketbra x x - X_i \ketbra x x X_i)\right]$
is computed.
The bit flip $X_i$ is
accepted with probability $\min\{1,exp(-\beta \Delta E)\}$.  To map this Gibbs sampling for Ising model onto Gibbs states preparation for $H_{\mc W}$, we order the terms $p \in \mc W$ in any arbitary
order $p_1, \dots, p_i, \cdots, p_{k}$, with $k = |\mc W|$. For convenience, we also write the corresponding coefficient $c_p$ in Eq.~(\ref{eq:HDT}) as $c_i$.
Note that each $p_i\in \mc W$ in the  Toric code will correspond to $h_i$ in the aforementioned Ising model, i.e. Eq.~(\ref{eq:Toric_Ising}).  Imagine the strings $x \in \{0,1\}^k$ as labeling the
eigenvalues of $(p_1, \dots, p_k)$ via
\begin{equation}
    \label{eq:eig_mapping}
    \lambda_i = (-1)^{x_i \oplus x_{i+1}}\,.
\end{equation}
We argue that this mapping is valid. $H_{\mc W}$'s eigenspaces are labeled by $\lambda \in \{\pm 1\}^k$ with the parity
constraint $\prod_{i \in [k]} \lambda_i = 1$. This constraint is obeyed by \Cref{eq:eig_mapping}:
\begin{align}
    \prod_{i=1}^k \lambda_i = (-1)^{x_1 \oplus x_2} \cdot (-1)^{x_2 \oplus x_3} \dots (-1)^{x_{k-1} \oplus x_k} \cdot
    (-1)^{x_k \oplus x_1} = (-1)^{x_1 \oplus x_k} \cdot (-1)^{x_k \oplus x_1} = 1\,.\label{eq:TC_prod1}
\end{align}
Moreover, the update rule of Glauber dynamics maps cleanly to $H_{\mc W}$. The classical bit flip $X_i$ is mapped to the
string operator $X^s$, where $s$ is a path from $p_i$ to $p_{i+1}$ in the 2D Toric code lattice. This operator is defined so that it anti-commutes
with $p_{i-1}$ and $p_{i}$ and commutes with all other terms $p \in \mc W$, and commutes with all terms in $\mc B$. This retains the equivalency in
\Cref{eq:eig_mapping} since applying $X_i$ to the $i$-th bit toggles the parity, and the operation takes $\lambda_i
\rightarrow -\lambda_i$ and $\lambda_{i-1} \rightarrow -\lambda_{i-1}$. This is exactly the effects of applying the correction operators $X^s$ on the Toric code. One can see from \Cref{eq:eig_mapping} that the Gibbs distribution of $H_\text{Ising}$ corresponds to
\Cref{eq:z_gibbs} since 
\begin{align}
    \rho(H_\text{Ising}, \beta) &= \frac 1 Z \sum_{x \in \{0,1\}^k} \exp(-\beta \text{Tr}[H_\text{Ising} \ketbra x x]) \ketbra x
    x\\
     &= \frac 1 Z \sum_{x \in \{0,1\}^k} \exp\left(-\beta \sum_i  c_i (-1)^{x_i \oplus x_{i+1}}\right) \ketbra
                                    x x\\
     &= \frac 1 Z \sum_{\lambda \in \{\pm 1\}^k} \exp\left(-\beta c_i \lambda_i \right) \ketbra
                                    x x\\
\end{align}
where each $x$  label the eigenspace projector  $\Pi_\lambda$ in $\rho(H_{\mc W},\beta)$.

Finally, to prepare the full Gibbs state for \( H = H_{\mathcal{W}} + H_{\mathcal{B}} \), we begin with the state \( \rho(H_{\mathcal{W}}, \beta) \), and repeat the above mapping—this time with respect to the \( X \)-type terms instead of the \( Z \)-type terms. Specifically, starting from \( \rho(H_{\mathcal{W}}, \beta) \), we run a classical Glauber dynamics corresponding to the terms in \( \mathcal{B} \), and use this  Ising dynamics to guide the correction procedure for the \( X \)-type terms on the state \( \rho(H_{\mathcal{W}}, \beta) \).

Note that Gibbs sampling for 1D Ising model is known to be rapid mixing and the corresponding Gibbs samplers have runtime $O(n \,poly(\log n))$ ~\cite{guionnet2003lectures,holley1985rapid,holley1989uniform,zegarlinski1990log}. Thus the 
 runtime of our algorithm is $O(npoly(log n) \times n)=\TCtime$, where $n$ is the cost for each correction operation.

Finally, note that the algorithm described above for preparing the Gibbs state of the defected Toric code on a torus can be directly generalized to defected Toric code defined on  any closed 2D  surfaces. This is because the algorithm does not rely on any special properties of the torus, but only on the constraint that all syndromes of the same type multiply to one, and on the fact that in the defected toric code the value of a syndrome can be flipped by string operators.

\section{Proofs of reductions for specific Hamiltonians}\label{appendix:specific}

\begin{proof}[Proof of Lemma \ref{lem:H1D}]

By the argument in Section \ref{sec:2local_specific}, we can always assume that $H_{1D}$ is 2-local. To make the notation consistent with Section \ref{sec:2local} we rename $H_{1D}$ as $H_{1D}^{(2)}$.
Then according to Section \ref{sec:2local} we can construct a 1D qudit classical 2-local Hamiltonian $H_{1D}^{(2c)}$. 

By assumption, we assumed that the dimension of qudit is $d=2^k$ for some $k$, then every qudit can be viewed as $k$ qubits and the computational basis can be written as $\{0,1\}^k$.
Note that with an arbitrary ordering of the $k$ qubits, we can view the $n$ qudits on 1D chain as $nk$ qubits on 1D chain. 
 Let $Z$ be the Pauli Z operator, note that
\begin{align}
    \ketbra 0 0 = \tfrac 1 2 (Z+I), \quad\ketbra 1 1 =\tfrac 1 2 (I-Z).
\end{align}
Thus we can rewrite classical Hamiltonian $H_{1D}^{(2c)}$ as sum of tensors of $Z$. that is
\begin{align}
    H_{1D}^{(2c)} = \sum_{S} a_S Z^S,
\end{align}
where $S$ are subset of qubits,   and $a_S=0$ if $diam(S)\geq 2k$, where $diam$ is the diamater of $S$, that is the farthest distance between qubits in $S$ w.r.t the 1D chain. 
$Z^S$ is the tensor product of Pauli $Z$ on qubits in S.

Besides, since $H_{1D}$ is translation-invariant, so does $ H_{1D}^{(2c)}$. Combined with the rapid mixing Gibbs sampler for   1D finite-range, translation-invariant Ising model~\cite{guionnet2003lectures,holley1985rapid,holley1989uniform}, we complete the proof.
\end{proof}

The proof of  Lemma \ref{lem:H2D} involves the notations of induced algebra, which should be read after reading section \ref{sec:C}.
\begin{proof}[Proof of Lemma \ref{lem:H2D}]

    Let $h$ be the translation-invariant term in the 2D Hamiltonian which acts on two systems (qubits) $a,b$. Let $\cA^a_h$ and $\cA^b_h$ as the induced algebra of $h$ on systems $a$ and $b$ respectively.
 \begin{figure}[h!]
\centering
    \begin{tikzpicture}[
        every label/.append style={font=\footnotesize},
    ]
        \node[draw, rectangle, rounded corners=1, minimum width=0.3cm, minimum height=0.3cm, inner sep = 0] (q) at (0,0) {\footnotesize $q$};
        \node[draw, fill, circle, minimum width=0.1cm, inner sep = 0, label={above right:$a$}] (eo) at ($ (q.east) + (0.2, 0) $) {};
        \node[draw, fill, circle, minimum width=0.1cm, inner sep = 0, label={above:$b$}] (et) at ($ (q.east) + (1, 0) $) {};
        \draw (eo) -- (et);
        \node[draw, fill, circle, minimum width=0.1cm, inner sep = 0, label={above:$b$}] (wo) at ($ (q.west) + (-0.2, 0) $) {};
        \node[draw, fill, circle, minimum width=0.1cm, inner sep = 0, label={above:$a$}] (wt) at ($ (q.west) + (-1, 0) $) {};
        \draw (wo) -- (wt);
        \node[draw, fill, circle, minimum width=0.1cm, inner sep = 0, label={right:$b$}] (no) at ($ (q.north) + (0,0.2) $) {};
        \node[draw, fill, circle, minimum width=0.1cm, inner sep = 0, label={right:$a$}] (nt) at ($ (q.north) + (0,1) $) {};
        \draw (no) -- (nt);
        \node[draw, fill, circle, minimum width=0.1cm, inner sep = 0, label={right:$a$}] (so) at ($ (q.south) + (0,-0.2) $) {};
        \node[draw, fill, circle, minimum width=0.1cm, inner sep = 0, label={right:$b$}] (st) at ($ (q.south) + (0,-1) $) {};
        \draw (so) -- (st);
    \end{tikzpicture}
    \caption{By translational invariance, each term acting on qubit $q$ is equivalent to $h^{a,b}$.}
    \label{fig:hab}
\end{figure}
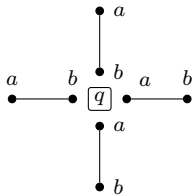
    
    As pictured in Figure \ref{fig:hab}, consider a qubit $q$ on the 2D lattice, whose Hilbert space is denoted by $\cH$.
    Since by assumption all terms are commuting, we have 
    $$
[\cA_h^a,\cA_h^b]=0.
    $$
By the Structure Lemma\footnote{Note that there are constructive proofs for the structure Lemma, as  in Section  7.3 of \cite{gharibian2015quantum}.}, that is Lemma \ref{lem:structure}, the algebra $\cA_h^a$ induces a decomposition
of $\cH$,
$$ \cH = \bigoplus_{i=1}^m \cH_i, \text{ where }
\cH_i = \cH_{(i,1)} \otimes \cH_{(i,2)}, \text{ and } \cA_h^a = \bigoplus_{i=1}^m \cL(\cH_{(i,1)}) \otimes \cI(\cH_{(i,2)}).
$$
Since $q$ is a qubit thus $\dim(\cH)=2$, we know either of the following cases holds:
\begin{itemize}
    \item[(a)] $m=1$ and $\dim(\cH_{(1,1)})=1$. Then $\cA_h^a$ acts trivially on $\cH$. In other words, $h$ is a single-qubit term acts on system $b$. By definition of local Hamiltonian, $h$ is a Hermitian term. We denote (give a name to) the eignevalues and eigenstates as $\lambda_0$,$\ket{0}$ and $\lambda_1$,$\ket{1}$. One can check that 
    $$
    H_{2D} =  2\sum_{q \text { on 2D}} \lambda_0 \ketbra{0}{0}_q + \lambda_1 \ketbra{1}{1}_q.
    $$
\item[(b)] $m=1$ and $\dim(\cH_{(1,1)})=2$. By the Structure Lemma we know $\cA_h^b$ acts trivially on $\cH$. 
This case can be handled in the same way as (a).
\item[(c)] $m=2$. In this case $\dim(\cH_i)=1$ for $i=1,2.$ We denote the basis for $\cH_1$, $\cH_2$ as $\ket{0},\ket{1}$. Then by the Structure Lemma,  $\cA^a_h$ keeps $\ket{0},\ket{1}$ invariant. Since $\cA^a_h,\cA^b_h$ commute, by Corollary \ref{cor:structure} $\cA^b_h$ also keeps $\ket{0},\ket{1}$ invariant.
Thus $h$  keeps the computational basis $\{0,1\}^2$ invariant. In other words, $h$ diagonalize in the computational basis, and thus is a linear combination of terms $$\ketbra{00}{00},\ketbra{01}{01},\ketbra{10}{10},\ketbra{11}{11}.$$
Note that
$$
    \ketbra{0} 0 = \tfrac 1 2 (Z+I)\text{ and } \ketbra 1 1=\tfrac 1 2 (I-Z).
$$
\end{itemize}
 We rewrite $H$ as the 2D Ising model with magnetic fields 
\begin{align}
    H_{2D} = \sum_{q \text{ on 2D}} \alpha_I I_q  + \alpha_Z Z_q +  \sum_{q,q' \text{ adjacent}} \beta Z_q\otimes Z_{q'},
\end{align}
where we implicitly use the fact that $H_{2D}$ is translation-invariant to get the same coefficient $\alpha_I,\alpha_Z,\beta$ for different $q,q'$.
    
\end{proof}

\section*{Author Contribution}

J.J. conceived the study. J.J and Y.H. performed the theoretical analysis.  All authors discussed the results and contributed to writing the manuscript.

\bibliographystyle{quantum}
\bibliography{ref.bib}
\end{document}